\documentclass[11pt,reqno]{amsart}
\usepackage{amsmath, amsthm, amssymb}
\usepackage{enumitem}
\usepackage{multicol}
\usepackage{bbm}
\usepackage{mathtools}

\usepackage{fullpage}
\usepackage[colorlinks=true, linkcolor=blue, citecolor=blue, urlcolor=blue]{hyperref}

\usepackage{amsmath}

\newtheorem{thm}{Theorem}[section]

\newtheorem{lem}[thm]{Lemma}
\newtheorem{cor}[thm]{Corollary}

\newcommand{\doi}[1]{\href{https://dx.doi.org/#1}{{\small DOI:#1}}}

\newcommand{\googlebooks}[1]{(preview at \href{https://books.google.com/books?id=#1}{google books})}

\newcommand{\numdam}[1]{}

\newtheorem{defn}[thm]{Definition}
\newtheorem{quest}[thm]{Problem}

\theoremstyle{remark}
\newtheorem{remark}[thm]{Remark}

\usepackage{graphicx,tikz}
\usetikzlibrary{decorations.pathreplacing}
\usetikzlibrary{arrows,backgrounds,patterns.meta}
\usetikzlibrary{positioning,shadings,cd}
\usetikzlibrary{calc}
\usepackage[numbers,sort&compress]{natbib}
\title{Universal fusion category symmetries on tensor products of infinite-dimensional Hilbert spaces}
\author{Ian Bunner and Corey Jones  }
\address{Department of Mathematics,
North Carolina State University, Raleigh, NC 27695, USA}
\date{}

\begin{document}

\begin{abstract}
We show that anyon chains, after stabilizing with infinite-dimensional ancilla spaces, factorize locally as tensor products of infinite-dimensional Hilbert spaces. This implies that any unitary fusion category can be realized as symmetries on a tensor product of infinite-dimensional Hilbert spaces. We then show that any two anyon chains with the same symmetry category are related by a symmetry-compatible locality-preserving unitary after stabilizing with infinite-dimensional ancilla, showing that for a fixed fusion category, there is a single stable equivalence class of symmetry realizations on the lattice via anyon chains. As a corollary of our proof, we show that the physical boundary algebras of Levin-Wen type models are bounded spread isomorphic after stabilization if and only if they have the same bulk topological order.
\end{abstract}

\maketitle
 
\tableofcontents

\section{Introduction}

Non-invertible symmetries have become an important tool for organizing the universal features of both quantum many-body systems and quantum field theories (see \cite{GKSW, PhysRevResearch.2.043086,SCHAFERNAMEKI20241,shao2024whatsundonetasilectures} for overview and references). In 1+1D, non-invertible symmetries have been studied on the lattice via topological defects, matrix product operators and weak Hopf algebras \cite{MR3543452,aasen2020topologicaldefectslatticedualities,PRXQuantum.4.020357,PRXQuantum.5.010338, In22, ahmad2026facesnoninvertiblesymmetries, MR3719546,TW24,TW242,bhardwaj2025latticemodelsphasestransitions, PhysRevB.111.054432, benini2026entanglementasymmetryhighernoninvertible}. These can all be viewed as manifestations of \textit{fusion category symmetry}. Recently, a general definition of fusion category symmetry in infinite volume was introduced \cite{evans2026operatoralgebraicapproachfusion}, inspired by the framework of SymTFT/topological holography \cite{ABGIH, PhysRevResearch.2.033417, PhysRevB.108.075105, 10.21468/SciPostPhysCore.6.4.066, MR3763324, PhysRevResearch.2.043044,KLWZZ,PhysRevResearch.1.033054,LZI,LZII, LZIII,chatterjee2024emergentgeneralizedsymmetrymaximal} and non-invertible symmetries in the theory of conformal nets \cite{MR3595480} (see also \cite{DongNgRenXu2025GeneralizedSymmetriesFusionActions}). This makes it possible to mathematically formalize and rigorously approach results from the physics literature model-independently, including the classification of categorical symmetry protected topological phases \cite{PhysRevLett.133.161601, GarreRubio2023classifyingphases, evans2026operatoralgebraicapproachfusion}, anomaly enforced gaplessness \cite{TW24,10.21468/SciPostPhys.16.6.154, evans2026operatoralgebraicapproachfusion}, and the structure of categorical dualities \cite{PRXQuantum.4.020357,PRXQuantum.5.010338,MR3543452,aasen2020topologicaldefectslatticedualities, jones2024dhr, JoLi24, jones2025quantumcellularautomatacategorical, JonesYang2026CategoricalDualityOperators}.

As part of this program, it is important to classify kinematical realizations of a symmetry category $\mathcal{C}$. We need to be able to identify when various local Hilbert space structures on the lattice host $\mathcal{C}$-symmetries, and in how many different ways. A classification of kinematical realizations of a symmetry category is distinct from and prior to the classification of symmetric states or Hamiltonians of a fixed realization of that symmetry on the lattice. Indeed, it is possible in principle that the solution to the classification of symmetric phases may depend on the specific kinematic realization in question.

The default kinematical structures used in quantum many-body systems are typically \textit{spin systems}, whose local Hilbert spaces are tensor products of finite-dimensional Hilbert spaces assigned to sites. However, it was recently shown that a fusion category is realizable as symmetries with this kind of local tensor product structure if and only if all of the quantum dimensions of the objects are integers \cite{evans2026operatoralgebraicapproachfusion}\footnote{Recent work \cite{LuChatterjeeTantivasadakarn2026HopfIsing, JonesYang2026CategoricalDualityOperators, Inamura2026RemarksNoninvertibleTensorProduct, WenInamuraSchaferNameki2026NonInvertible} using a generalized notion of non-invertible symmetry mixing with QCAs shows that one can extend this to weakly integral categories. }. This drastically restricts the class of fusion categories realizable on spin chains, excluding a majority of interesting examples. 

There is a standard class of alternative local kinematic structures that are commonly used to model fusion category symmetries, called \textit{anyon chains} \cite{Feiguin2007,MR3719546, PRXQuantum.4.020357, MR3614057, jones2025quantumcellularautomatacategorical}. The local Hilbert space assigned to an interval $I\subseteq \mathbbm{Z}$ in an anyon chain is generally \textit{not} the tensor product of local Hilbert spaces assigned to sites, but a nontrivial subspace of this tensor product defined by local constraints. Mathematically, anyon chain realizations of $\mathcal{C}$-symmetry are parameterized by particular choices of algebraic data associated to $\mathcal{C}$ (a module category and a generating object of the dual, see Section \ref{sec:anyonchains}), but in general there are \textit{infinitely many} of these choices. This suggests a zoo of local Hilbert space structures to model $\mathcal{C}$-symmetry, even for a fixed $\mathcal{C}$.

However, following the principle of universality, one expects most of the variations between distinct anyon chain realizations of the same symmetry category to be inessential, reflecting auxiliary details of a microscopic kinematical description rather than intrinsic features of the symmetry itself. This raises the question: is there a physically motivated notion of equivalence relation on microscopic realizations of fusion category symmetry such that all anyon chain realizations of the same category are equivalent? In the case of ordinary spin systems, where the problem of finding the right equivalence relation to reflect universal features has been considered extensively, the typical operational approach to defining equivalence (of states or Hamiltonians) in the context of gapped phases consists of two components \cite{Hastings2005Quasiadiabatic, chen2010local, chen2011classification, Schuch2011MPS}. We first allow the addition of \textit{ancilla} at every site, which addresses the possible differences between the local Hilbert space dimensions of the two models. Then one asks for (some kind of) locality-preserving unitary between the larger Hilbert spaces that preserves the structure in question. The addition of ancilla is physically reasonable for capturing universality, since real physical systems typically have access to larger local Hilbert spaces, which are ignored in a given model for mathematical expedience. 

It is natural to extend this kind of equivalence relation to symmetry realizations. Indeed, this perspective has been taken in the classification of locality-preserving actions of finite groups on spin chains \cite{seifnashri2025disentanglinganomalyfreesymmetriesquantum, BolsDeRoeckDeWildeOliveira2025Classification}. In this case, the above mentioned equivalence relation is augmented by allowing the ancilla to carry an on-site action of the group, and the new realization is the diagonal action of $G$ on the composite system. This approach yields satisfactory classification results. It is not obvious how to extend this definition to the case of fusion categories, since there is no analogue of adding ancilla with $\mathcal{C}$-symmetry for general fusion categories: tensoring two $\mathcal{C}$-symmetric chains yields a chain with $\mathcal{C}\boxtimes \mathcal{C}$ symmetry, and there is no generic analogue of the diagonal map used for groups. One could instead just add ancilla with no symmetry, and while this makes sense mathematically, it fails to yield a single equivalence class of kinematical realization (for example, see the discussion in \cite[Remark 10]{jones2025localtopologicalorderboundary}). 

In this paper, we propose that the most natural equivalence relation for fusion category symmetry should allow for the addition of an \textit{infinite-dimensional} space of ancilla at each site. We call this process \textit{stabilization}. We then define an equivalence relation we call \textit{stable equivalence} on the class of kinematical realizations of a fusion category $\mathcal{C}$ by asking for the existence of locality-preserving unitaries between stabilizations that are compatible with the symmetry (in a sense made precise below). We show that the apparent proliferation of anyon chain realizations is not a universal feature: after allowing infinite-dimensional ancilla, all anyon chains admit a local tensor product decomposition, and all realizations of a fixed fusion category on anyon chains are related by a symmetry-compatible locality-preserving unitary.

\bigskip

\textbf{Main results.}

\medskip

\begin{enumerate}
\item 
For any anyon chain, its stabilization admits a local decomposition into a tensor product of infinite-dimensional Hilbert spaces. 

\medskip

\item 
Given two anyon chains with the same symmetry fusion category, there is a locality-preserving unitary between their stabilizations that is compatible with the symmetry.

\end{enumerate}

\bigskip

This shows that the kinematical obstructions to realizing fusion category symmetries on spin chains are in some sense unstable, since they disappear after adding infinite-dimensional local ancilla. Furthermore, for a fixed fusion category $\mathcal{C}$, there is indeed a single universality class of anyon chain realizations. We reiterate an important point: by a single universality class, we are referring to universality of \textit{kinematical realization}, not universality of symmetric states or Hamiltonians. Indeed, stabilization remembers topological invariants of symmetric states. If we have a symmetric state $\phi$ on an anyon chain with $\mathcal{C}$-symmetry, there is an associated algebra object $L_{\phi}\in \mathcal{Z}(\mathcal{C})$ called the \textit{symmetry order algebra} \cite{evans2026operatoralgebraicapproachfusion, jones2025localtopologicalorderboundary}. If $L_{\phi}$ is Lagrangian, we say $\phi$ exhibits \textit{topological order}. We show that the symmetry order algebra is an invariant of states under stable equivalence \ref{thm:symmetryorderinvariance}, giving further evidence that stable equivalence may be the correct equivalence relation to describe universality classes.

The uniqueness part of our results is consistent with (though distinct from) previous kinematical classification results in the literature, particularly \cite{seifnashri2025disentanglinganomalyfreesymmetriesquantum, BolsDeRoeckDeWildeOliveira2025Classification}. These works provide a classification of $G$-actions by quantum cellular automata (QCA) on (finite-dimensional) tensor product Hilbert spaces, allowing the addition of ancilla with on-site $G$-symmetry. The conclusion is that G-actions by QCA are classified by the anomaly of the action, valued in $H^{3}(G,\text{U}(1))$ \cite{ChenGuLiuWen2013SPT, KapustinSopenko2025Anomalous}. In our language, an action of $G$ with anomaly $\omega\in H^{3}(G,\text{U}(1))$ is really an action of the fusion category $\text{Hilb}_{f.d.}(G,\omega)$, and thus their result states that there is a unique action of each of these pointed fusion categories on spin chains, up to the symmetric ancilla equivalence relation. While our mathematical set-ups are different, the ultimate similarity of results suggests that both approaches are accessing a single kinematical universality class for each symmetry category.

In a different direction, we obtain a result relevant to the program of topological holography in the sense of \cite{PhysRevB.107.155136}. There, it is argued that the collection of  operators localized near a physical boundary cut of a topologically ordered system should characterize the bulk topological order. This was mathematically formalized for spin systems, and shown in 2+1D to indeed be the case for string net models using the machinery of quasi-local algebras and DHR bimodules \cite{jones2024dhr, jones2025localtopologicalorderboundary, jones2025holographybulkboundarylocaltopological, ChuahHungarKawagoePenneysTombaWallickWei2024BoundaryAlgebras}. However, there can be many distinct but isomorphic quasi-local algebras for the same bulk topological order. How do these compare? This brings us to our third main result, paraphrased as a corollary to the above.

\bigskip

\bigskip

\textbf{Corollary.}

\medskip

\begin{enumerate}[start=3]
\item 
For any two string-net models, their boundary algebras are stably equivalent if and only if they have equivalent bulk topological order.
\end{enumerate}

\bigskip

We hope to extend this result from fusion spin chains, which are, up to bounded spread isomorphism, the boundary algebras of string net models, to a much larger class of ``non-chiral'' quasi-local algebras, realizing the vision of \cite{PhysRevB.107.155136}.

\bigskip

\subsection{Mathematical formulation}

For our mathematical formulation in infinite volume, there are two basic terminology translations from the above discussion: 

\medskip

\begin{enumerate}
    \item 
    Kinematical structures are encoded by \textit{quasi-local} C*-algebras with local subalgebras, rather than local Hilbert spaces \cite{MR1441540}.

    \medskip
    
    \item 
    Locality-preserving unitaries are formalized as \textit{bounded spread isomorphisms} between quasi-local algebras \cite{Schumacher2004Reversible}.
\end{enumerate}

\medskip

Quasi-local algebras are C*-algebras $A$ with unital subalgebras $A_{I}\subseteq A$ indexed by intervals in $\mathbbm{Z}$ encoding local operator structure (for our particular version of this concept, see Definition \ref{def:quasilocalalg}). When we start from a local Hilbert space picture, the local algebras $A_{I}$ should be interpreted physically as the algebra of operators on the local Hilbert space $H_{I}$. However, if the local Hilbert space is constrained (as is the case with anyon chains, for example), then the algebra $A_{I}$ is restricted to contain only the local operators which preserve boundary conditions, since for consistency reasons what we call local operators must be able to act on arbitrarily large patches of local Hilbert spaces. The physical properties we expect of the local Hilbert spaces are encoded in a handful of natural conditions on the local subalgebras borrowed from the algebraic quantum field theory (AQFT) framework in \cite{Haag1996Local}. The standard family of examples of quasi-local algebras, corresponding to spin chains, are those of the form $A=\otimes_{\mathbbm{Z}} M_{d}(\mathbbm{C})$, with $A_{I}:=\otimes_{I}M_{d}(\mathbbm{C})$ \cite{MR1441540}.

A bounded spread isomorphism $\alpha: A\rightarrow A^{\prime}$ between quasi-local algebras is a *-isomorphism between C*-algebras such that there is some $R\ge 0$ with $\alpha(A_{I})\subseteq A^{\prime}_{I^{+R}}$. In this paper we are using bounded spread isomorphisms to define a natural notion of equivalence of quasi-local algebras, rather than as a discrete dynamics (which is the ``quantum cellular automata'' picture \cite{Farrelly2020reviewofquantum}). In our ``passive" picture, one can think of a bounded spread isomorphism as defining a change of variables, where a new local variable is localized in a region ``uniformly not too far'' from the original local variable. If we have a bounded spread isomorphism $\alpha: A\rightarrow A^{\prime}$, then we have a canonical correspondence between structures encoding local physics (e.g. local Hamiltonians, states with correlation decay) on  $A$ and $A^{\prime}$.

Here we run into our first technicality: since we are going to add infinite-dimensional ancilla, all local algebras should be considered as W* (or von Neumann) algebras, rather than merely C*-algebras, though the quasi-local algebra itself is a C*-algebra and not a W*-algebra. W*-algebras have an additional weak*-topology, so all maps between quasi-local algebras we consider (e.g. bounded spread isomorphisms) should have the extra condition that they are locally continuous with respect to this topology; such maps are called \textit{locally normal} (see Definition \ref{def:locallynormal}). This is automatic when the local algebras do not contain a particular type of direct summand, which turns out to be a very mild constraint satisfied by all the quasi-local algebras in our paper. Nevertheless, this introduces some mathematical subtleties that are familiar from algebraic quantum field theory \cite{Araki1963Lattice, Haag1996Local}, but are not usually needed in the context of quantum spin systems.

We briefly recall the operator algebraic picture of fusion category symmetry in infinite volume introduced in \cite{evans2026operatoralgebraicapproachfusion}. If $\mathcal{C}$ is a fusion category, then a realization of $\mathcal{C}$ as symmetries on a quasi-local algebra $A$ is a choice of \textit{physical boundary subalgebra} $B\subseteq A$  (see Definition \ref{def:physicalboundarysubalg}), whose associated category of symmetry defects $\text{Sym}(B\subseteq A)$ (defined as DHR-bimodules with solitonic-type localization) is equivalent to $\mathcal{C}$. $B$ should be interpreted as the algebra of quasi-local symmetric operators. Indeed, the normalized fusion ring of $\mathcal{C}$ acts by unital completely positive (u.c.p.) maps on $A$ in a way such that $B$ is precisely the algebra of common fixed points, and thus it is natural to interpret the u.c.p. maps themselves as the symmetry operators and $B$ as the symmetric operators (the operators invariant under the symmetry). The inclusion $B\subseteq A$ is called a \textit{symmetry inclusion} if $B$ is a physical boundary subalgebra of A.

The canonical examples of quasi-local algebras with fusion category symmetry arise from anyon chains. Given a fusion category $\mathcal{C}$, anyon chains with $\mathcal{C}$-symmetry are parameterized by a choice of indecomposable right $\mathcal{C}$-module category $\mathcal{M}$, and a choice of strongly tensor generating object in the dual multi-fusion category $X\in \mathcal{C}^{*}_{\mathcal{M}}$. As expected, the quasi-local algebras associated to anyon chains are \textit{not} tensor products of matrix algebras, reflecting the fact that the local Hilbert spaces have no local tensor product decomposotion in general. These quasi-local algebras nevertheless have trivial DHR bimodules, hence should count as reasonable ``standalone'' quasi-local algebras (rather than as a boundary algebra of some TQFT). The physical boundary subalgebras of anyon chains have been called \textit{fusion spin chains} in the literature, and have been studied extensively in the context of subfactor theory for over 40 years \cite{MR1642584}. More recently, they have been explicitly studied as abstract quasi-local algebras from the non-invertible symmetries viewpoint \cite{jones2024dhr, jones2025quantumcellularautomatacategorical, JoLi24}.

In order to give a precise statement of the first main result we consider the quasi-local algebra $\mathcal{L}^{\infty}:=\overline{\otimes}_{\mathbbm{Z}} \mathcal{B}(\ell^{2}(\mathbbm{N}))$, which we can think of as the infinite-volume limit of a chain of harmonic oscillators (or if you prefer the local Hilbert space $L^{2}(S^{1})$, a chain of rotors). Our first subtlety appears here: the tensor product is the W*-tensor product, denoted by $\overline{\otimes}$, rather than the usual C*-tensor product. If $A$ is an arbitrary quasi-local algebra, we can then define the \textit{stabilization} $A^{\infty}$. If $A$ is locally finite-dimensional (and therefore nuclear), which happens to be the situation we are interested in, this is simply the (unique) tensor product of C*-algebras $A^{\infty}=A\otimes \mathcal{L}^{\infty}$, with obvious local algebra structure $A_{I}\otimes \mathcal{L}^{\infty}_{I}$. If the local algebras are infinite-dimensional W*-algebras, however, then we have to take the W*-tensor product locally, i.e. $A^{\infty}_{I}:=A_{I}\overline{\otimes} \mathcal{L}^{\infty}_{I}$, then take the C*-inductive limit of these.

We can now state the formal version of the first main result.

\begin{thm}(see Theorem \ref{thm:MoritaEquivimpliesStableEquiv})
If $A$ is the quasi-local algebra of an anyon chain, then there is a bounded spread isomorphism $A^{\infty}\cong \mathcal{L}^{\infty}$. 
\end{thm}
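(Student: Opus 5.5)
The plan is to use the explicit structure of anyon chain local algebras. For an anyon chain built from an indecomposable module category $\mathcal{M}$ and a strongly tensor generating $X\in\mathcal{C}^{*}_{\mathcal{M}}$, the local algebra on an interval $I=[a,b]$ is
\[
A_{I}\cong \bigoplus_{m,n\in \mathrm{Irr}(\mathcal{M})} \mathcal{B}\bigl(\mathcal{M}(m\triangleleft X^{\otimes |I|}, n)\bigr)\otimes \ldots,
\]
that is, a finite direct sum of full matrix algebras. Equivalently, $A_I \cong \mathrm{End}(\bigoplus_{m} m\otimes X^{\otimes |I|})$ viewed in the multifusion category $\mathcal{C}^*_\mathcal{M}$, with boundary labels $m$ on the left and $n$ on the right. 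The key observation is that $A_I$ is a direct sum over left and right boundary labels $(m,n)$, and the inclusion $A_I\subseteq A_J$ is determined by fusion multiplicities. After tensoring with $\mathcal{B}(\ell^2(\mathbb{N}))$ at every site, each block becomes $\mathcal{B}(\mathcal{H})$ with $\mathcal{H}$ infinite-dimensional and separable. The multiplicities therefore become irrelevant up to unitary equivalence, since $\mathbb{C}^k\otimes\ell^2(\mathbb{N})\cong\ell^2(\mathbb{N})$ for any $k\ge1$. This is the mechanism by which non-integer dimensions disappear.

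The concrete steps are as follows.

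(1) Show that for each site, the one-site stabilized algebra $A_{\{i\}}\overline{\otimes}\mathcal{B}(\ell^2)$ is a finite direct sum of type I$_\infty$ factors indexed by pairs $(m,n)$ with $\mathcal{M}(m\triangleleft X,n)\neq0$. Intervals behave the same way: $A^\infty_I$ is a direct sum of copies of $\mathcal{B}(\ell^2)$ indexed by pairs $(m,n)$ connected by a path of length $|I|$. Strong tensor generation implies that for $|I|$ larger than some $R_0$, every pair appears.

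(2) Coarse-grain by blocking sites into intervals of length $R\ge R_0$. This is a bounded spread isomorphism, and it reduces us to the case where all boundary pairs occur at each block.

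(3) Build a ``local tensor factorization''. Choose, for a block $I$ with left and right neighbouring blocks, Hilbert spaces $K_{m}$ attached to each boundary label. Then identify the stabilized block algebra with a corner-free decomposition
\[
A^{\infty}_{I}\cong \bigoplus_{m,n}\mathcal{B}(K_m\otimes \ell^2\otimes K_n).
\]
Concatenation $A^\infty_I\otimes_{A^\infty_{\partial}} A^\infty_J \to A^\infty_{I\cup J}$ should then become the relative tensor product contracting the middle label. Next, regroup: assign to each ``edge'' between blocks the Hilbert space $\bigoplus_m K_m\otimes K_m'$, which is a single infinite-dimensional space. The result is a factorization $A^\infty\cong\overline{\otimes}\,\mathcal{B}(\ell^2)$ on the blocked chain, which matches $\mathcal{L}^\infty$ up to another rescaling of sites, which again has bounded spread.

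(4) Verify local normality and compatibility with the inductive limit. Local normality is automatic here because all local algebras are type I with no problematic summands. The maps on local algebras are normal $*$-isomorphisms that are compatible with the inclusions, so they extend to a C*-isomorphism of the inductive limits.

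The main obstacle is step (3): making the identifications coherent across all intervals simultaneously, so that the isomorphisms on the nested $A^\infty_I$ commute with the inclusions. The standard trick for finite multiplicities fails because dimensions are non-integers. I would handle this by choosing once and for all unitaries $\mathcal{M}(m\triangleleft X^{\otimes R},n)\otimes\ell^2\cong \bigoplus_k K_m\otimes\ell^2\otimes K_n$ per block, exploiting absorption of arbitrary finite multiplicity into $\ell^2$. I would then check that the inclusion maps decompose blockwise, using a Morita-equivalence (``Frobenius'') argument: the connection between consecutive blocks is a finite bimodule whose stabilization is trivializable. This is presumably where the paper's Morita-equivalence formulation (Theorem \ref{thm:MoritaEquivimpliesStableEquiv}) enters.
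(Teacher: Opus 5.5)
Your plan is essentially the paper's own argument. The per-block unitaries $H^{I}_{m,n}\otimes \ell^{2}(\mathbb{N})^{\otimes I}\cong K_{m}\otimes \ell^{2}(\mathbb{N})\otimes K_{n}$ and the regrouping of $\bigoplus_{m}K_{m}\otimes K^{\prime}_{m}$ into a single edge space are exactly the maps $\Phi_{i,j}$ and $\Psi$ of Section~\ref{hilbspacelevel}, which is the anyon-chain case of the $\Phi\colon W_{11}\cong W_{12}\otimes W_{21}$, $\Psi\colon W_{21}\otimes W_{12}\cong W_{22}$ construction in Theorem~\ref{thm:MoritaEquivimpliesStableEquiv}. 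The paper also uses coarse-graining, as you do, to make every $H^{I}_{m,n}$ nonzero.

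The coherence you flag as the main obstacle is automatic. The inclusions of an anyon chain are just $f\mapsto f\otimes 1$ on path spaces, and the path space of a union of blocks is the direct sum over intermediate labels of tensor products of block path spaces. So no connection or Frobenius argument is needed. One correction to step (4): the resulting maps send $\mathcal{L}^{\infty}_{J}$ into $A^{\infty}_{J^{+k}}$ and $A^{\infty}_{I}$ into a slightly larger $\mathcal{L}^{\infty}$-region, rather than being isomorphisms of local algebras. They cannot be local isomorphisms, since $A^{\infty}_{I}$ has several central summands while $\mathcal{L}^{\infty}_{J}$ is a factor.
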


To state the second main result, we recall the obvious (and rather strong) notion of equivalence of symmetry inclusions. An \textit{equivalence} of symmetry inclusions $B_{1}\subseteq A_{1}$ and $B_{2}\subseteq A_{2}$ is a bounded spread isomorphism $\alpha: A_{1}\rightarrow A_{2}$ such that $\alpha(B_{1})=B_{2}$. If $B_{1}\subseteq A_{1}\cong B_{2}\subseteq A_{2}$, then the symmetry categories $\text{Sym}(B_1\subseteq A_{1})\cong \text{Sym}(B_{2}\subseteq A_{2})$. Thus it is natural to fix a fusion category $\mathcal{C}$ and  ask for equivalences of symmetry inclusions realizing $\mathcal{C}$. It is this that we will use as a basic (and rather strong) equivalence relation between realizations of a fusion category $\mathcal{C}$. As mentioned above, this is quite restrictive, so we introduce the more general definition of \textit{stable equivalence}: we say two symmetry inclusions $B_{1}\subseteq A_{1}$ and $B_{2}\subseteq A_{2}$ are \textit{stably equivalent} if $B^{\infty}_{1}\subseteq A^{\infty}_{1}\cong B^{\infty}_{2}\subseteq A^{\infty}_{2}$. We have the following result

\begin{thm} (see Theorem \ref{thm:stableequivsymm}) Let $\mathcal{C}$ be a unitary fusion category. Any two anyon chains realizing $\mathcal{C}$-symmetry are stably equivalent.
\end{thm}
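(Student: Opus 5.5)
The strategy is to reduce to a Morita-theoretic statement about the underlying categorical data and then realize that Morita equivalence by bounded spread isomorphisms after stabilization. An anyon chain realizing $\mathcal{C}$ is given by a pair $(\mathcal{M},X)$, with $\mathcal{M}$ an indecomposable right $\mathcal{C}$-module category and $X\in\mathcal{C}^*_{\mathcal{M}}$ strongly tensor generating. The associated symmetry inclusion $B\subseteq A$ is built as follows. The local algebras $A_I$ are endomorphism algebras of $X^{\otimes |I|}$ in $\mathcal{C}^*_{\mathcal{M}}$, or the corresponding module-category version with boundary conditions. The subalgebra $B_I$ consists of the $\mathcal{C}$-symmetric ones, namely $\operatorname{End}$ computed after passing across the $\mathcal{M}$-interface.

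Given two such pairs $(\mathcal{M}_1,X_1)$ and $(\mathcal{M}_2,X_2)$, I would form the $\mathcal{C}^*_{\mathcal{M}_1}$--$\mathcal{C}^*_{\mathcal{M}_2}$ bimodule category $\mathcal{N}=\operatorname{Fun}_{\mathcal{C}}(\mathcal{M}_2,\mathcal{M}_1)$, which implements the Morita equivalence of the two duals. The goal is then to produce a symmetry-compatible version of Theorem \ref{thm:MoritaEquivimpliesStableEquiv}.

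The key steps are as follows.

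\textbf{Step 1: generator independence.} Show that for fixed $\mathcal{M}$, changing $X$ to $X\oplus Y$ with $X$ strongly generating gives a stably equivalent inclusion. The point is that after tensoring each site with $\mathcal{B}(\ell^2(\mathbb{N}))$, the local algebra $A_{\{i\}}\overline{\otimes}\mathcal{B}(\ell^2)$ absorbs any finite amplification. This would use the property $p\,(M\overline{\otimes}\mathcal{B}(\ell^2))\,p\cong M\overline{\otimes}\mathcal{B}(\ell^2)$ for projections of full central support. The isomorphism must be chosen to fix $B_I\overline{\otimes}\mathcal{B}(\ell^2)$ setwise.

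\textbf{Step 2: module category independence.} Pass from $\mathcal{M}_1$ to $\mathcal{M}_2$ using an object $Z\in\mathcal{N}$. Here I would use a ``zig-zag'' or blocking construction: group sites in pairs and insert $Z\otimes\bar Z$, producing intermediate chains with generator $X_1\otimes Z\otimes \bar Z$. This chain is locally (up to stabilization) the same as the chain with generator $\bar Z\otimes X_1\otimes Z\in\mathcal{C}^*_{\mathcal{M}_2}$, by shifting the cut by half a site. That shift is a bounded spread map. Since $\mathcal{C}$ acts on the $\mathcal{M}$ side and $Z$ lives in the functor category commuting with $\mathcal{C}$, the symmetric subalgebra $B$ is carried to $B$. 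I would check this using the identification of $\operatorname{Sym}(B\subseteq A)$ with $\mathcal{C}$ from the earlier sections.

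\textbf{Step 3: combine with Step 1.} Compare $\bar Z\otimes X_1\otimes Z$ with $X_2$ in $\mathcal{C}^*_{\mathcal{M}_2}$, and compose the resulting bounded spread isomorphisms.

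The main obstacle is making the local isomorphisms in Steps 1--2 coherent across all intervals simultaneously, so that they assemble into a single bounded spread isomorphism of the C*-inductive limits while preserving $B^\infty$. In particular, the isomorphisms must be locally normal. Choosing the stabilizing isometries $\ell^2(\mathbb{N})\otimes \mathbb{C}^n\cong\ell^2(\mathbb{N})$ site-by-site requires compatibility with the inclusions $A_I\subseteq A_J$. I would try to handle this by building the isomorphism as a product of commuting on-site and bond unitaries (a two-layer circuit), each conjugating $B$-local algebras into $B$-local algebras. Here I would exploit the fact that the partial isometries implementing the absorption can be chosen inside the commutant of the symmetry, i.e.\ inside $B_I^\infty$ relative commutants.
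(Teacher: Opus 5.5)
Your three-step route is viable, but it is organized differently from the paper's proof.

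\textbf{How the paper proceeds.} The paper never separates generator-independence from the change of module category. It works directly with the absorbing objects $W_{ij}=\bigoplus_{b}\ell^2(\mathbbm{N})\boxtimes b$ of $\text{Hilb}(\mathcal{F})$. Inserting $Z\otimes\bar Z$ then becomes an honest unitary $\Phi:W_{11}\cong W_{12}\otimes W_{21}$, and the half-site regrouping becomes $\Psi:W_{21}\otimes W_{12}\cong W_{22}$. This is precisely the two-layer circuit you anticipate, with $Z=W_{12}$. The paper builds this map only on the symmetric algebras; it is the $\alpha$ of Theorem \ref{thm:MoritaEquivimpliesStableEquiv}. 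It then extends it to the anyon chains through the algebra model $A(\mathcal{E}_i,X_i)_I\cong\text{Hom}_{\mathcal{D}_i}(X_i^{\otimes I},X_i^{\otimes I}\otimes L_{i3})$. That is why it needs the $3\times 3$ context $\text{End}_{\mathcal{C}}(\mathcal{M}_1\oplus\mathcal{M}_2\oplus\mathcal{C})$. There $L_{13}$ and $L_{23}$ are corners of one central algebra, intertwined by the half-braiding $\sigma_{L_3,W_{12}}$, and the unit of $L$ shows that $\widetilde\alpha$ restricts to $\alpha$.

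\textbf{How your route differs.} You stay in the spatial model, where $A_I=\text{End}_{\text{End}(\mathcal{M})}(X^{\otimes I})$ and $B_I=\text{End}_{\mathcal{C}^*_{\mathcal{M}}}(X^{\otimes I})$. Your opening sentence attaches the $\mathcal{C}^*_{\mathcal{M}}$-endomorphisms to $A_I$, which is backwards. Every intertwiner you use is a morphism of $\text{End}_{\mathcal{C}}(\mathcal{M}_1\oplus\mathcal{M}_2)$, hence, via the forgetful functor, also of $\text{End}(\mathcal{M}_1\oplus\mathcal{M}_2)$. So preservation of $B$ comes for free, with no Lagrangian algebras or half-braidings. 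Your route also shows that Step 2 needs no stabilization. The inclusions $g\mapsto 1_{X_1\otimes Z}\otimes g\otimes 1_{\bar Z}$ and $f\mapsto 1_{\bar Z}\otimes f\otimes 1_{X_1\otimes Z}$ are injective, because the functors involved are nonzero module functors between indecomposable module categories. They interleave the two nets and compose to the net inclusions. They therefore give an exact equivalence of symmetry inclusions between the $(\mathcal{M}_1,X_1\otimes Z\otimes\bar Z)$ and $(\mathcal{M}_2,\bar Z\otimes X_1\otimes Z)$ chains. Stabilization is needed only to change generators.

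\textbf{What does not work as written: the mechanism for Step 1.} Abstract corner isomorphisms $p(M\overline{\otimes}\mathcal{B}(\ell^2))p\cong M\overline{\otimes}\mathcal{B}(\ell^2)$ are the wrong tool, for three reasons. First, full central support is not sufficient: a projection of finite rank in some central summand gives a corner with a finite-dimensional summand. Second, at a single site $1_X$ need not have full central support in $\text{End}(X\oplus Y)$ at all, since strong generation only concerns $X^{\otimes n}$. Third, an abstract isomorphism of local algebras says nothing about the inclusions $A_I\subseteq A_J$ or about $B_I$. Also, the implementing unitaries should themselves be symmetric, rather than lie in relative commutants of $B^\infty_I$.

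\textbf{The fix} is the paper's key observation. Block $n$ sites so that $X^{\otimes n}$ and $X'^{\otimes n}$ both contain every simple of $\mathcal{D}=\mathcal{C}^*_{\mathcal{M}}$. Then $\ell^2(\mathbbm{N})^{\otimes n}\boxtimes X^{\otimes n}$ and $\ell^2(\mathbbm{N})^{\otimes n}\boxtimes X'^{\otimes n}$ contain every simple with countably infinite multiplicity, so they are unitarily isomorphic in $\text{Hilb}(\mathcal{D})$ (both are $\cong W$). For any such unitary $U$, $\operatorname{Ad}(U^{\otimes J})$ has the following properties:
\begin{enumerate}
\item it is automatically compatible with the inclusions $f\mapsto 1\otimes f\otimes 1$;
\item it has zero spread in block units;
\item it sends $\mathcal{D}$-endomorphisms to $\mathcal{D}$-endomorphisms, and, through the forgetful functor, $\mathcal{E}$-endomorphisms to $\mathcal{E}$-endomorphisms.
\end{enumerate}
Thus Step 1 is a single layer of symmetric block unitaries, and the coherence problem you flag disappears. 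With this replacement, your three steps compose to a bounded spread isomorphism carrying $B_1^\infty$ onto $B_2^\infty$.
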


We note a subtle point: it is not obvious that if $B\subseteq A$ is a symmetry inclusion, then $B^{\infty}\subseteq A^{\infty}$ is also a symmetry inclusion. However, we show that this is indeed the case for anyon chains (see Theorem \ref{thm:symmetryorderinvariance}).

Finally, our result concerning stable equivalence of boundary algebras is phrased in terms of fusion spin chains. Given a unitary indecomposable multi-fusion category $\mathcal{C}$ and a strong tensor generating object $X\in \mathcal{C}$, there is an associated quasi-local algebra $A(\mathcal{C},X)$ called a \textit{fusion spin chain}. These are precisely the boundary algebras of Levin-Wen type string-net models \cite{jones2025localtopologicalorderboundary}, whose bulk topological order is described by $\mathcal{Z}(\mathcal{C})$ \cite{LevinWen2005StringNet, BolsKjaer2025LevinWenI, BolsKjaer2026LevinWenII}. We have the following theorem.

\begin{thm} (see Corollary \ref{cor:fusionspinchains})
    If $A(\mathcal{C},X)$ and $A(\mathcal{D},Y)$ are fusion spin chains, then they are stably equivalent if and only if $\mathcal{Z}(\mathcal{C})\cong \mathcal{Z}(\mathcal{D})$ as unitary braided fusion categories.
\end{thm}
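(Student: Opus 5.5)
The plan has two directions. For the ``if'' direction, reduce to the paper's Morita-theoretic stabilization machinery; for the ``only if'' direction, extract $\mathcal{Z}(\mathcal{C})$ as a bounded-spread invariant that survives stabilization.

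\emph{``If'' direction.} The key input is that a fusion spin chain $A(\mathcal{C},X)$ is the physical boundary subalgebra $B$ of an anyon chain. Concretely, choose an indecomposable module category $\mathcal{M}$ with $\mathcal{C}^{*}_{\mathcal{M}}$ Morita equivalent to $\mathcal{C}$; one can take $\mathcal{M}=\mathcal{C}$ itself when $\mathcal{C}$ is a multifusion category acting on itself. Then $A(\mathcal{C},X)$ is identified, up to bounded spread isomorphism, with the symmetric subalgebra of the anyon chain associated to $(\mathcal{C}^{*}_{\mathcal{M}},\mathcal{M},X)$. If $\mathcal{Z}(\mathcal{C})\cong\mathcal{Z}(\mathcal{D})$, then $\mathcal{C}$ and $\mathcal{D}$ are Morita equivalent, so both chains arise as symmetric subalgebras of anyon chains realizing a common fusion category $\mathcal{E}$ (for instance $\mathcal{E}$ Morita equivalent to both, via suitable module categories). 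Theorem~\ref{thm:stableequivsymm} then gives a bounded spread isomorphism $A_1^{\infty}\to A_2^{\infty}$ carrying $B_1^{\infty}$ onto $B_2^{\infty}$. Restricting it yields a bounded spread isomorphism $A(\mathcal{C},X)^{\infty}\cong A(\mathcal{D},Y)^{\infty}$.

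The main subtlety here is the multifusion case: $\mathcal{C}$ is indecomposable multifusion, not fusion. I would first pass to a fusion corner $\mathcal{C}_{ii}$. Replacing $X$ by a corner object should change the fusion spin chain only up to bounded spread isomorphism after stabilization; this could itself be extracted from the proof of Theorem~\ref{thm:MoritaEquivimpliesStableEquiv}, whose title suggests precisely that Morita-equivalent data give stably equivalent algebras. If that theorem is stated directly for fusion spin chains built from Morita-equivalent multifusion categories, this direction is immediate.

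\emph{``Only if'' direction.} Here I would use the DHR bimodule category as an invariant. By prior work (jones2024dhr), $\mathrm{DHR}(A(\mathcal{C},X))\cong\mathcal{Z}(\mathcal{C})$ as unitary braided fusion categories, and bounded spread isomorphisms induce braided equivalences of DHR categories. The remaining step is to show that stabilization does not change DHR. I would prove $\mathrm{DHR}(A^{\infty})\cong\mathrm{DHR}(A)$ by showing that tensoring with the trivial-DHR algebra $\mathcal{L}^{\infty}$ adds nothing. One direction is the functor $M\mapsto M\otimes\mathcal{L}^{\infty}$. For the other, a localized bimodule over $A^{\infty}$ decomposes via the type I factor $\mathcal{L}^{\infty}_I$ on each interval: localized projective bases can be chosen commuting with, or absorbed into, the $\mathcal{B}(\ell^{2})$ factors. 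This uses local normality, so that projections in $A_I\overline{\otimes}\mathcal{L}^{\infty}_I$ are controlled by a Künneth-type decomposition.

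\emph{Expected main obstacle.} I expect this DHR invariance under infinite-dimensional stabilization to be the hardest step, because projective bases need no longer be finite. My first attempt would be to use the symmetry-order/topological invariance argument already indicated via Theorem~\ref{thm:symmetryorderinvariance}, reading off $\mathcal{Z}(\mathcal{C})$ through the Drinfeld centre of $\mathrm{Sym}$.
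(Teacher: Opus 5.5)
The ``if'' direction works but takes a detour. Theorem~\ref{thm:MoritaEquivimpliesStableEquiv} is stated for arbitrary Morita equivalent indecomposable multifusion categories and gives $A(\mathcal{C},X)^{\infty}\cong A(\mathcal{D},Y)^{\infty}$ directly. The isomorphism of Theorem~\ref{thm:stableequivsymm} is constructed as an extension of exactly this map, so restricting it gains nothing and only forces you into fusion corners. Also, $\mathcal{C}$ acting on itself is not an indecomposable one-sided module when $\mathcal{C}$ is genuinely multifusion.

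\textbf{The gap is in the ``only if'' direction, at the step you flag.} First, the right target is $\mathrm{DHR}(A^{\infty})\cong\mathrm{Hilb}(\mathcal{Z}(\mathcal{C}))$, not $\mathrm{DHR}(A)$. Because $A^{\infty}$ is stable, its DHR category acquires countable multiplicities (already $\mathrm{DHR}(\mathcal{L}^{\infty})\cong\mathrm{Hilb}$), and $\mathcal{Z}(\mathcal{C})$ must then be recovered as the dualizable part. Second, DHR bimodules have finite projective bases by definition, so infinite bases are not the issue. The real issue is showing that \emph{no} DHR bimodule of $A^{\infty}$ is disjoint from the image of $M\mapsto M\otimes\mathcal{L}^{\infty}$.

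Your mechanism for this is asserted rather than argued: choosing localized bases that commute with, or are absorbed into, the $\mathcal{B}(\ell^{2})$ factors via a K\"unneth decomposition of $A_{I}\overline{\otimes}\mathcal{L}^{\infty}_{I}$. A purely local analysis cannot deliver it, for three reasons:
\begin{enumerate}
\item The algebras $A^{\infty}_{I}$ are finite sums of type I$_\infty$ factors, in which the split into $A_{I}$ and ancilla is not intrinsic, e.g.\ $M_{d}\overline{\otimes}\mathcal{B}(K)\cong\mathcal{B}(K)$.
\item A localized basis can mix the two tensor factors. For an anyon chain, the bounded spread isomorphism $A^{\infty}\cong\mathcal{L}^{\infty}$ does exactly that.
\item Normal bimodules over such local algebras carry only multiplicity data, which says nothing about how they glue across intervals, and the gluing is where the DHR category lives.
\end{enumerate}

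Your fallback through Theorem~\ref{thm:symmetryorderinvariance} is circular. Knowing that $B^{\infty}\subseteq A^{\infty}$ is a symmetry inclusion with $\mathrm{DHR}_{0}(B^{\infty})\cong\mathcal{Z}(\mathcal{C})$ (Corollary~\ref{cor:realizesymmetriesonLinf}) already rests on computing $\mathrm{DHR}(A^{\infty})$. Moreover, a bounded spread isomorphism $A(\mathcal{C},X)^{\infty}\cong A(\mathcal{D},Y)^{\infty}$ comes with no ambient anyon chains to compare.

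\textbf{The missing idea, as in Theorem~\ref{StabilizedDHR}:} apply the triviality of $\mathrm{DHR}(\mathcal{L}^{\infty})$ after condensing a Lagrangian algebra, not to the ancilla factor directly.
\begin{enumerate}
\item Embed $\mathcal{Z}(\mathcal{C})\cong\mathrm{DHR}(A)\hookrightarrow\mathrm{DHR}(A^{\infty})$ by $M\mapsto M\otimes\mathcal{L}^{\infty}$; this is fully faithful by nondegeneracy.
\item Pick a Lagrangian $L\in\mathcal{Z}(\mathcal{C})$. Then $A^{\infty}\rtimes L\cong(A\rtimes L)^{\infty}$ is a stabilized anyon chain. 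By Theorem~\ref{thm:MoritaEquivimpliesStableEquiv} it is bounded spread isomorphic to $\mathcal{L}^{\infty}$, whose DHR category is $\mathrm{Hilb}$ by Theorem~\ref{trivialdhr}.
\item Suppose some DHR bimodule of $A^{\infty}$ were disjoint from $\mathcal{Z}(\mathcal{C})$. Lemma~\ref{centralizerlemma} (invertibility of the $S$-matrix) then produces a nonzero $Y$ that is disjoint from $\mathcal{Z}(\mathcal{C})$ and also centralizes it.
\item The free module $Y\boxtimes_{A^{\infty}}L$ is then local, i.e.\ a DHR bimodule of $\mathcal{L}^{\infty}$. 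By Frobenius reciprocity and self-duality of $L$ it has no morphisms from the trivial bimodule, contradicting $\mathrm{DHR}(\mathcal{L}^{\infty})\cong\mathrm{Hilb}$.
\end{enumerate}
Since $\mathcal{Z}(\mathcal{C})$ is fusion and $A^{\infty}$ is stable, this yields $\mathrm{DHR}(A^{\infty})\cong\mathrm{Hilb}(\mathcal{Z}(\mathcal{C}))$. The ``only if'' direction then follows, because bounded spread isomorphisms preserve DHR categories and $\mathcal{Z}(\mathcal{C})$ is recovered as the dualizable objects.
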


\bigskip

\noindent \textbf{Acknowledgements.} The authors would like to thank:
Clement Delcamp, David Penneys, Ryan Thorngren, Sahand Seifnashri, Wilbur Shirley, Nikita Sopenko, Xiao-Gang Wen, Dominic Williamson, and Xinping Yang for helpful conversations. Both authors were supported by NSF DMS-2247202. ChatGPT Sol 5.6 was used to help make diagrams and editorial improvements, and to assist in strengthening our results from a previous version of this paper. In particular, a preliminary version of Lemma 5.1 was proposed by ChatGPT Sol 5.6. All mistakes are the authors'.

\section{Categorical symmetries on the 1D lattice}

While nets of algebras are ubiquitous throughout the literature, there are many minor variations in definition and notation so we take the time to set our notation and assumptions. We use the terminology of \textit{quasi-local algebra} following \cite{MR887100, MR1441540}, which is more convenient notationally. We also refer to these references for the basic notions of C*-algebras, W*/von Neumann algebras, and the various topologies involved.

The set of integers $\mathbb{Z}$ forms a discrete metric space with  the standard metric, and a natural order is imposed on the set of bounded intervals by the subset relation. We let $\text{Int}(\mathbbm{Z})$ denote the poset of bounded intervals.

\begin{defn}\label{def:quasilocalalg}
    
A quasi-local algebra over $\mathbbm{Z}$ consists of a unital C*-algebra $A$ together with an assignment, for each bounded interval $I\subseteq \mathbbm{Z}$, a W* subalgebra $A_I\subseteq A$ with separable pre-dual such that

\begin{enumerate}
\item 
$A_{\varnothing}=\mathbbm{C}1$,
\item
(Isotony) If $I\subseteq J$, then $A_{I}\subseteq A_{J}$ (and the inclusion is normal),
\item
(Locality) If $I\cap J=\emptyset$, then $[A_{I},A_{J}]=0$,
\item
(Non-degeneracy) $\bigcup_{I} A_{I}$ is norm dense in $A$.
\end{enumerate}  
\end{defn}

\noindent These will be the most general conditions required of a quasi-local algebra. However, there is an additional condition that is very important in practice, first introduced in this context in \cite{jones2024dhr}:

\bigskip

\begin{enumerate}[start=5]
\item 
(\textit{Weak algebraic Haag duality) There is some $R\ge 0$ such that $\{a\in A\ :\ [a,A_{I^{c}}]=0\}\subseteq A_{I^{+R}}$. If $R=0$, we say $A$ satisfies algebraic Haag duality.} 
\end{enumerate}

\bigskip

The notation $I^{+R}$ denotes the $R$-ball of the interval $I$, rather than translation. Given an unbounded region $F$ (e.g. the complement of a bounded interval $I^{c}$ or a left or right unbounded interval $I_{\le n}$ or $I_{\ge n}$), we define $A_{F}$ to be the \textit{C*-algebra} generated by the W*-algebras $A_{I}$, with $I\subseteq F$.

\begin{defn}\label{def:locallynormal} If $A$ and $B$ are both C*-algebras each equipped with a norm dense family of W*-subalgebras $\{A_{I}\subseteq A\}_{I\in \Lambda}$ and $\{B_{J}\subseteq B\}_{J\in \Gamma}$, a linear map $\Phi:A\rightarrow B$ is called \textit{locally normal} if for every $I\in \Lambda$, $\Phi(A_{I})\subseteq B_{J}$ for some $J\in \Gamma$, and $\Phi|_{A_{I}}: A_{I} \rightarrow B_{J}$ is normal.
\end{defn}

For quasi-local algebras $A$ over $\mathbbm{Z}$, if we say a map is locally normal we will always mean with respect to the family of von Neumann algebras $\{A_{I}\}_{I\in \text{Int}(\mathbbm{Z})}$. We also note that in many cases of interest, local normality is automatic for $*$-homomorphisms. Indeed, we introduce the following condition that ensures local normality for homomorphisms is automatic.

\begin{thm}\cite[Theorem 6.5]{BaudierBragaFarahVignatiWillett2024Embeddings} Any *-homomorphism from a separable von Neumann algebra $M$ with no direct summands of the form $M_{k}(L)$, where $k\ge 0$ and $L$ is an infinite-dimensional abelian von Neumann algebra, to a separable von Neumann algebra $N$, is automatically normal.
\end{thm}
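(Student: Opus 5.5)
The plan is to reduce to the case of a factor and then obtain normality from the structure theory of the possible summands. Throughout, $M$ and $N$ have separable preduals and $\pi:M\to N$ is a $*$-homomorphism.

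First, I reduce to the case where $\pi$ is injective on a summand. The kernel of $\pi$ is a norm-closed two-sided ideal, but it need not be weak$^*$-closed, so I cannot simply split off a central projection. Instead I use the type decomposition, which writes $M$ as a direct sum of a type $\mathrm{I}_{\mathrm{fin}}$ part, a type $\mathrm{I}_\infty$ part, a $\mathrm{II}_1$ part, a $\mathrm{II}_\infty$ part and a $\mathrm{III}$ part. Normality of a homomorphism can be checked on each central summand. The hypothesis says that the type $\mathrm{I}_k$ summands, which have the form $M_k(Z)$ with $Z$ abelian, have finite-dimensional centre. So the type $\mathrm{I}_{\mathrm{fin}}$ part is a finite direct sum of matrix algebras, possibly together with summands $M_k(\mathbb{C})$ for infinitely many $k$. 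A direct sum of infinitely many matrix blocks still causes trouble, and I must read the hypothesis as excluding it too. The natural reading is this: a homomorphism on $\ell^\infty$ arises from a non-principal ultrafilter, and that is exactly what the abelian infinite-dimensional $L$ allows. The condition "no summand of the form $M_k(L)$" should be interpreted as forbidding, on every homogeneous type $\mathrm{I}_k$ component, an infinite-dimensional centre. So the finite type $\mathrm{I}$ part is $\bigoplus_k M_k(Z_k)$ with each $Z_k$ finite-dimensional. If infinitely many $k$ occur, the centre is still $\ell^\infty$, and I would handle this with the argument below.

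Second, I treat the properly infinite and the continuous summands. Here the key fact is that every von Neumann algebra without a type $\mathrm{I}_{\mathrm{fin}}$ summand that has infinite-dimensional centre is generated by the multiplicative structure, in the sense that projections can be halved. Concretely, every projection $p$ decomposes as $p=p_1+p_2$ with $p_1\sim p_2\sim p$ in the infinite case, or $p_1\sim p_2$ in the $\mathrm{II}_1$ case. A standard criterion says that $\pi$ is normal if and only if it is completely additive on orthogonal families of projections. For countably many orthogonal projections $e_n$ with sum $e$, I would show that $\pi(e)-\sum_n\pi(e_n)=f$ vanishes. The tool is a partial-isometry trick: choose partial isometries in $M$ that shift the $e_n$ into a "tail" of the family, and use halving to embed a copy of the whole family into arbitrarily small tails. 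Transporting these by $\pi$, one finds that $f$ is dominated by $\pi$ of tail projections for every tail, and hence by a limit that is forced to be zero. This is the argument of Takesaki, and of Farah and collaborators, that $*$-homomorphisms out of factors of type $\mathrm{II}$ and $\mathrm{III}$, and out of $B(H)$, are normal. I expect the construction of these partial isometries uniformly across a non-factor centre to be the main obstacle. For that I would use a measurable direct-integral or comparison-theorem version, choosing central partitions along which the equivalences hold.

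Third, I handle the finite type $\mathrm{I}$ part $\bigoplus_k M_k(Z_k)$. On each block, normality is automatic because the block is finite-dimensional. The only issue is additivity over infinitely many central blocks $z_k$. I would try to show that $\pi(1)=\sum_k\pi(z_k)$ strongly. If this failed, I would compose with the restriction to the defect projection, which yields a homomorphism that kills every $z_k$ yet is nonzero on the centre, and hence a singular character on $\ell^\infty$. I would rule this out by using non-commutativity, namely $k\to\infty$, combined with the halving of the second step, since unbounded $k$ admits almost-halvings. However, if the $k$ stay bounded, this genuinely gives a summand $M_k(\ell^\infty)$, which is excluded by hypothesis. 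Thus the hypothesis is used exactly at this point, and combining the three cases gives normality of $\pi$.
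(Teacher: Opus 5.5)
The paper does not prove this statement; it quotes it from Baudier--Braga--Farah--Vignati--Willett. Your sketch therefore has to stand on its own, and its decisive step does not.

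\textbf{The key step is circular, and the separability of $N$ is never used.} In your second step, $\pi\bigl(\sum_{n\ge m}e_n\bigr)=\pi(e)-\sum_{n<m}\pi(e_n)$ decreases \emph{exactly} to $f=\pi(e)-\sum_n\pi(e_n)$. So ``$f$ is dominated by $\pi$ of every tail'' is automatic, and nothing forces the limit to vanish. Without separability of $N$ the statement is false. The composite $B(\ell^2(\mathbbm{N}))\to B(\ell^2(\mathbbm{N}))/\mathcal{K}\to B(K)$, with the Calkin algebra represented faithfully on a (necessarily nonseparable) $K$, is a non-normal $*$-homomorphism out of a factor satisfying the hypotheses, and your argument applies to it verbatim.

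The missing idea is a counting argument that uses separability of $N$. Suppose the $e_n$ are mutually equivalent. Take an almost disjoint family $\{A_\alpha\}_{\alpha\in 2^{\mathbbm{N}}}$ of infinite subsets of $\mathbbm{N}$, bijections $\phi_\alpha:\mathbbm{N}\to A_\alpha$, and partial isometries $v_\alpha\in M$ with $v_\alpha^*v_\alpha=e$ and $v_\alpha e_nv_\alpha^*=e_{\phi_\alpha(n)}$. Then $\pi(v_\alpha)f\pi(v_\alpha)^*=\pi(p_\alpha)f$, where $p_\alpha=\sum_{n\in A_\alpha}e_n$. These projections are pairwise orthogonal, because $p_\alpha p_\beta$ is a finite sum of $e_n$'s and $\pi(e_n)f=0$. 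Each is equivalent to $f$ in $N$. Since $N$ is countably decomposable, $f=0$.

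\textbf{Your halving step does not supply the equivalences this needs, and it fails on the finite parts, where the real difficulty lies.} As stated it is false: a minimal projection of $B(\ell^2(\mathbbm{N}))$ cannot be halved, and a finite projection in a type $\mathrm{II}_\infty$ algebra is not equivalent to its halves.

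In a type $\mathrm{II}_1$ summand you cannot embed $e$ into a tail of smaller trace at all. There the mechanism has to run the other way. One uses that $f\le\pi(q)$ for every cofinite subsum $q$ of the family, refines such a small-trace tail, and places it piece by piece into uncountably many almost disjoint subfamilies.

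For $\prod_k M_k(\mathbbm{C})$ with unbounded $k$, the hypothesis does not exclude this algebra, so you cannot reinterpret it away. No partial isometry moves mass between the central blocks $z_k$. Binary-tree almost-halvings inside the blocks do give uncountably many mutually equivalent projections that are pairwise orthogonal modulo $\bigoplus_k M_k$. This kills every projection whose normalized rank tends to $0$. After that, however, the representation factors through the quotient by the trace-kernel ideal $\{x:\mathrm{tr}_k(x_k^*x_k)\to 0\}$. In that quotient every family of pairwise orthogonal, mutually equivalent nonzero projections is finite, so the counting trick is exhausted.

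A further ingredient is needed at that point, for example showing that tracial ultraproducts of the $M_k$ have no nonzero representation on a separable Hilbert space; these are $\mathrm{II}_1$ factors with nonseparable predual. Your third step stops just before this point.
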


\bigskip

\noindent This leads us to define the following condition on quasi-local algebras which will be very convenient.

\begin{enumerate}[start=6]
\item 
(\textit{Local normality}) For each $I\subseteq \mathbbm{Z}$, the W*-algebra $A_{I}$ has no summand of the form $M_{k}(L)$, where $k\ge 0$ and $L$ is an infinite-dimensional abelian von Neumann algebra. 
\end{enumerate}

In particular, if $A_{I}$ is a finite direct sum of factors, a countably infinite direct sum of infinite factors, or a countable direct sum of finite factors where each dimension of the finite factor summands occurs finitely often, then $A$ will satisfy this condition. In this paper, all the quasi-local algebras we consider will have local algebras that are a finite direct sum of factors, hence will satisfy condition $6$.

\begin{defn} Let $A$ and $B$ be quasi-local algebras satisfying conditions $(1)-(6)$. A \textit{bounded spread isomorphism} is a C*-algebra isomorphism $\alpha:A\rightarrow B$ such that there is an $R\ge 0$ with $\alpha(A_{I})\subseteq B_{I^{+R}}$.
\end{defn}

Condition $(5)$ guarantees that $\alpha^{-1}$ also has bounded spread, and $(6)$ guarantees that $\alpha|_{A_{I}}\rightarrow B_{I^{+R}}$ is normal, so that $\alpha$ is locally normal.

\subsection{Discrete tensor categories}

Recall that a W*-category is a unitary Cauchy complete C*-category, such that every morphism space has a predual as a Banach space \cite{MR808930, MR3687214}. In particular, a W*-category is closed under taking summands and direct sums. A W*-tensor category is a W*-category which is monoidal, such that tensoring with morphisms is separately weak*-continuous in each tensor factor \cite{MR4419534}. We assume all our W*-algebras are separable (i.e. have separable pre-dual).

\begin{defn}
A W*-category is \textit{discrete} if for every object $X\in \mathcal{D}$, the endomorphism algebra is a countable $\ell^{\infty}$ direct product of type $I$ factors, i.e.

$$\text{End}_{\mathcal{D}}(X)\cong \prod_{i} \mathcal{B}(H_{i}).$$
\end{defn}

\noindent This notion of discreteness is a generalization of the concept of $\textit{semisimple}$, which occurs precisely when all endomorphism algebras are finite-dimensional. Note that every C*-tensor category such that all endomorphism algebras are finite-dimensional is automatically W* and semisimple, hence discrete in the above sense. The main examples we have in mind which are discrete rather than just semisimple are of the form $\text{Hilb}(\mathcal{C})$, where $\mathcal{C}$ is an ordinary semisimple C*-tensor category (possibly without simple units or duals).

 Following \cite{MR3687214, MR3509018}, the idea is that we allow formal ``Hilbert space'' direct sums of objects. More precisely, any finitely semisimple C*-category $\mathcal{C}$ is equivalent to the category of $\dagger$ functors $\text{Fun}[\mathcal{C}^{op}, \text{Hilb}_{f.d.}]$, via the Yoneda embedding $a\mapsto \text{Hom}_{\mathcal{C}}(\ \cdot\ ,\ a)$. If $\mathcal{C}$ has a C*-tensor product structure, this is visible in the Yoneda picture via \textit{Day convolution}. In any case, the completion we care about, $\text{Hilb}(\mathcal{C})$, can be viewed as simply replacing the target category $\text{Hilb}_{f.d.}$ with $\text{Hilb}$, where by the latter we actually mean \textit{countably infinite-dimensional} Hilbert spaces. We still use Day convolution for the tensor product and have a naturally defined unitary associator, but the resulting category is a W*-tensor category. Intuitively, this allows for ``Hilbert space-enriched'' direct sums of objects. We will give a more concrete presentation of this category that may be more useful to readers, that takes seriously the idea of formally allowing Hilbert space multiplicities.

An object of $\text{Hilb}(\mathcal{C})$ can be written formally as a (in general infinite) direct sum $\bigoplus_{i} H_{i}\boxtimes a_{i}$ (with $a_i\in \mathcal{C})$, where $\boxtimes$ is a formal symbol. If $\mathcal{C}$ is finitely semisimple (which we will assume here to obtain the local semi-simplicity condition) this direct sum can always be taken to have finitely many summands.

Then $$\text{Hom}_{\text{Hilb}(\mathcal{C})}
\left(\bigoplus_{i} H_{i}\boxtimes a_{i}, \bigoplus_{j} K_{j}\boxtimes b_{j}\right):=\bigoplus_{i,j} \mathcal{B}(H_{i}, K_{j})\otimes_{\mathbbm{C}} \text{Hom}_{\mathcal{C}}(a_{i},b_{j}),$$

\noindent with the obvious composition and $\dagger$. The tensor product is given by

$$\left(\bigoplus_{i} H_{i}\boxtimes a_{i}\right)\otimes \left(\bigoplus_{j} K_{j}\boxtimes b_{j}\right):= \bigoplus_{i,j} (H_{i}\otimes K_{j})\boxtimes (a_{i}\otimes b_{j}),$$

\noindent and the associator is just applied to the tensor factor of objects in $\mathcal{C}$ (ignoring the $\boxtimes$). By construction, we see that $H\boxtimes (\oplus_{i} a_{i})\cong \oplus_{i} H\boxtimes a_{i}$. Furthermore, it is easy to show that every object $a\in \text{Hilb}(\mathcal{C})$ is unitarily isomorphic in this category to 

$$\bigoplus_{b\in \text{Irr}(\mathcal{C})} \text{Hom}_{\text{Hilb}(\mathcal{C})}(b, a)\boxtimes b,$$

\noindent where $\text{Irr}(\mathcal{C})$ is a set of representatives of isomorphism classes of simple objects, and $\text{Hom}_{\text{Hilb}(\mathcal{C})}(b, a)$ is the Hilbert space given by the composition inner product $\langle f\ |\ g\rangle 1_{b}:=f^{\dagger}\circ g$, which is well-defined since $b$ is simple.

We also record the following observation. If $\mathcal{C}$ is discrete and closed under countable W*-direct sums, then the inclusion $\mathcal{C}\hookrightarrow \text{Hilb}(\mathcal{C})$ is an equivalence.

\subsection{Categorical quasi-local algebras}

In this section, we recall the construction of nets of algebras from a W*-tensor category, extending the definitions of  \cite{jones2024dhr}. Let $\mathcal{D}$ be a W*-tensor category, and $X\in \mathcal{D}$ be an object such that $X^{\otimes n}\ne 0$ for all $n$.

We now describe a quasi-local algebra over $\mathbbm{Z}$, 
$$A:=A(\mathcal{D}, X).$$

\noindent For any interval $I\in \text{Int}(\mathbbm{Z})$, the local algebra is defined as $A^{0}_{I}:=\text{End}_{\mathcal{D}}(X^{\otimes I})$. 
Any $f\in A^{0}_{I}$ can be represented in the graphical calculus by

$$\begin{tikzpicture}[scale=1.2]

  \node at (-2.6,0) {$f\in A_I \;\leftrightsquigarrow$};

  \draw[thick] (-0.6,0.6) rectangle (0.6,-0.6);
  \node at (0,0) {$f$};

  \draw[thick] (-0.45,1.4) -- (-0.45,0.6);
  \draw[thick] (-0.15,1.4) -- (-0.15,0.6);
  \draw[thick] (0.15,1.4) -- (0.15,0.6);
  \draw[thick] (0.45,1.4) -- (0.45,0.6);

  \draw[thick] (-0.45,-0.6) -- (-0.45,-1.4);
  \draw[thick] (-0.15,-0.6) -- (-0.15,-1.4);
  \draw[thick] (0.15,-0.6) -- (0.15,-1.4);
  \draw[thick] (0.45,-0.6) -- (0.45,-1.4);

  \draw[decorate,decoration={brace,mirror,amplitude=5pt}]
    (-0.55,-1.55) -- (0.55,-1.55)
    node[midway,below=6pt] {$I$};

\end{tikzpicture}
$$

For $I\subseteq J$, the maps $\pi_{I,J}:A^{0}_{I}\hookrightarrow A^{0}_{J}$ are defined by tensoring with $1_{X}$ at all sites in $J\setminus I$, which we write as $f\mapsto 1_{X}\otimes \dots 1_{X} \otimes f \otimes 1_{X}\otimes \dots 1_{X}$. In the graphical calculus this is

$$\begin{tikzpicture}[scale=1.2]

  \draw[thick] (-0.6,0.6) rectangle (0.6,-0.6);
  \node at (0,0) {$f$};

  \draw[thick] (-0.45,1.4) -- (-0.45,0.6);
  \draw[thick] (-0.15,1.4) -- (-0.15,0.6);
  \draw[thick] (0.15,1.4) -- (0.15,0.6);
  \draw[thick] (0.45,1.4) -- (0.45,0.6);

  \draw[thick] (-0.45,-0.6) -- (-0.45,-1.4);
  \draw[thick] (-0.15,-0.6) -- (-0.15,-1.4);
  \draw[thick] (0.15,-0.6) -- (0.15,-1.4);
  \draw[thick] (0.45,-0.6) -- (0.45,-1.4);

  \draw[decorate,decoration={brace,mirror,amplitude=5pt}]
    (-0.55,-1.55) -- (0.55,-1.55)
    node[midway,below=6pt] {$I$};

  \node at (2.5,0) {$\mapsto$};

  \draw[thick] (5.4,0.6) rectangle (6.6,-0.6);
  \node at (6,0) {$f$};

  \draw[thick] (5.55,1.4) -- (5.55,0.6);
  \draw[thick] (5.85,1.4) -- (5.85,0.6);
  \draw[thick] (6.15,1.4) -- (6.15,0.6);
  \draw[thick] (6.45,1.4) -- (6.45,0.6);

  \draw[thick] (5.55,-0.6) -- (5.55,-1.4);
  \draw[thick] (5.85,-0.6) -- (5.85,-1.4);
  \draw[thick] (6.15,-0.6) -- (6.15,-1.4);
  \draw[thick] (6.45,-0.6) -- (6.45,-1.4);

  \draw[thick] (4.65,1.4) -- (4.65,-1.4);
  \draw[thick] (5.00,1.4) -- (5.00,-1.4);

  \draw[thick] (7.00,1.4) -- (7.00,-1.4);
  \draw[thick] (7.35,1.4) -- (7.35,-1.4);
  \draw[thick] (7.70,1.4) -- (7.70,-1.4);

  \draw[decorate,decoration={brace,amplitude=5pt}]
    (4.20,1.55) -- (7.80,1.55)
    node[midway,above=6pt] {$J$};

  \draw[decorate,decoration={brace,mirror,amplitude=5pt}]
    (5.45,-1.55) -- (6.55,-1.55)
    node[midway,below=6pt] {$I$};

\end{tikzpicture}
$$

\noindent Using the assumption on $X$ (and the fact that this is a W*-tensor category), this gives us a normal map of von Neumann algebras. The colimit yields the quasi-local C*-algebra

$$A(\mathcal{D},X):=\text{colim}_{I} A^{0}_{I},$$

\noindent with local algebras $A(\mathcal{D},X)_{I}$ the images of $\pi_{I,\infty}(A^{0}_{I})\subseteq A(\mathcal{D},X)_{I}$ in the colimit. 

This gives us a net of C*-algebras, but if we are interested in a net of W*-algebras, we would be in slight trouble: it is not obvious that the kernel $\pi_{I,\infty}$ is a W*-ideal, and in particular that $A(\mathcal{D},X)_{I}$ as we have defined it is indeed a W*-algebra.

For an arbitrary W*-category, this would be slightly problematic, since it is unclear that the image under $\pi_{I,\infty}$ would be a von Neumann algebra. However, since $A^{0}_{I}\cong \prod_{i\in Z} \mathcal{B}(H_{i})$, then any W*-ideal is given by specifying a subset $Y\subseteq Z$, in which case $I_{Y}=\{(a_{i})_{i\in Z}\ : a_{i}=0\ \text{if}\ i\in Y \}$.

Then for each $I\subseteq J$, $\text{Ker}(\pi_{I,J})=I_{Y_{J}}$, hence setting $W=\bigcup_{I\subseteq J} Y_{J}$, we have $\text{Ker}(\pi_{I,\infty})=I_{W}$, which is a W*-ideal, hence $A(\mathcal{D}, X)_{I}$ is normal.
We note, however, that in nearly all of our examples, the maps $\pi_{I,J}$ are already injective.

At this level of generality, $A(\mathcal{D},X)$ satisfies the conditions $(1)$-$(4)$, but may fail $(5)$ and $(6)$.

\subsection{Fusion spin chains}

A unitary fusion category is a semisimple W*-category with finitely many isomorphism classes of simple objects, such that every object has a dual object and the unit $\mathbbm{1}$ is simple \cite{MR2183279, MR3242743, MR1444286}. The standard examples are $\text{Rep}(G)$ and $\text{Hilb}_{f.d.}(G, \omega)$. If we drop the condition that the unit is simple, we obtain the definition of a unitary multi-fusion category. A unitary multifusion category $\mathcal{C}$ is called \textit{indecomposable} if for any two summands $\mathbbm{1}_{i}, \mathbbm{1}_{j}$ of $\mathbbm{1}$, $\mathbbm{1}_{i} \mathcal{C} \mathbbm{1}_{j}\ne 0$. Recall an object $X\in \mathcal{C}$ is a \textit{strong tensor generator} if there is some $n\ge 0$ such that all simple objects in $\mathcal{C}$ are isomorphic to summands of $X^{\otimes n}$. Note that we need a single $n$ containing all isomorphism classes of simples simultaneously, making this notion stronger than just ``tensor generating".

\begin{defn}
If $\mathcal{D}$ is an indecomposable unitary multi-fusion category and $X\in \mathcal{D}$ is a strong tensor generator, then the quasi-local algebra $A(\mathcal{D},X)$ is called a \textit{fusion spin chain}.
\end{defn}

The strong tensor generating condition is not a particularly serious restriction. If $X$ is not strong tensor generating for $\mathcal{D}$, then since $X^{\otimes n} \ne 0$ for all $n$ , then a summand of $\mathbbm{1}$ will be contained in some sufficiently large tensor power of $X$. Thus we can take the largest full (but not necessarily unital in the multi-fusion case) tensor subcategory $\mathcal{D}^{\prime}$ that appears under a single tensor power of $X$, say $X^{\otimes n}$. After coarse-graining the net into intervals of length $n$, we see that the new net $A(\mathcal{D}^{\prime}, X^{\otimes n})$ satisfies our hypothesis. Thus, this condition is a kind of non-degeneracy condition, ensuring that we have calibrated our length scale appropriately so that the category $\mathcal{D}$ is the one containing the correct information.

Fusion spin chains satisfy (strong) algebraic Haag duality hence condition $(5)$ \cite{jones2024dhr}, and also local normality $(6)$.

We point out one particular special case which plays a major role in our paper. Recall that an indecomposable multi-fusion category is called \textit{Morita trivial} if it is Morita equivalent to $\text{Hilb}_{f.d.}$. Such multi-fusion categories are unitarily equivalent to $\text{Mat}_{n}(\text{Hilb}_{f.d.})$. This category in turn is equivalent to the W*-tensor category $\text{End}(\mathcal{M})$, where $\mathcal{M}$ is a semisimple W*-category of rank $n$.

\begin{defn}
If $\mathcal{E}\cong \text{Mat}_{n}(\text{Hilb}_{f.d.})$ and $X\in \mathcal{E}$ a strong tensor generator, then $A(\mathcal{E},X)$ is called an \textit{anyon chain}.
\end{defn}

The terminology ``anyon chain" is not a very good choice, specifically because there are no anyons (in the usual sense of 2D topological point defects/quasi-particle excitations) anywhere in sight. In the original study of anyon chains, $\mathcal{E}=\text{End}(\text{Fib})$, and $\text{Fib}$ itself is realized as anyons of some 2D topologically ordered system, which partly explains the origins  of the term \cite{Feiguin2007}. Nevertheless, this vocabulary seems to be entrenched and thus we feel obligated to use it for clarity of communication.

A better way to think about anyon chains is in terms of \textit{graphs}. If $\mathcal{E}=\text{Mat}_{n}(\text{Hilb}_{f.d.})$, then $X=(X_{ij})_{1\le i,j\le n} \in \mathcal{E}$ is determined up to isomorphism by specifying the dimension of the Hilbert spaces $N_{ij}=\text{dim}(X_{ij})$. Then define an oriented graph $\mathcal{G}_{X}$ with $n$ vertices, ordered $1$  to $n$, and $N_{ij}$ edges from vertex $i$ to vertex $j$. Then we see immediately that $A(\mathcal{E},X)_{I}=\text{End}_{\mathcal{E}}(X^{\otimes I})$ can be identified with the space of linear operators on the Hilbert space $\mathbbm{C}[P_{m}]$, with an orthonormal basis consisting of paths of length $m$, which preserve source and target of paths. This is discussed extensively in \cite{jones2025quantumcellularautomatacategorical}. We discuss this point of view in more detail in \ref{hilbspacelevel}.

\subsection{DHR bimodules}

From this section onward, we assume our quasi-local algebras satisfy conditions $(1)-(6)$.

If $A$ is a C*-algebra, an $A$-$A$ correspondence is an $A$-$A$ bimodule $X$, with a right $A$-valued inner product $\langle \cdot\ |\ \cdot \rangle: X\times X\rightarrow A$, making $X_{A}$ into a right Hilbert C*-module over $A$, with a left action of $A$ by adjointable operators. For formal definitions, see \cite{MR1325694} or \cite{MR4419534} for a more categorical description. If $X$ is a correspondence, a projective basis is a finite subset $\{b_{i}\}^{n}_{i=1}$ such that for all $x\in X$, $x=\sum_{i} b_{i}\langle b_{i}\ |\ x\rangle$

\begin{defn}
An $A$-$A$ correspondence $X$ is a \textit{DHR bimodule} if for every sufficiently large interval $I$, there exists a finite projective basis $\{b_{i}\}^{n}_{i=1}$ such that for each $i$, $[A_{I^{c}},b_{i}]=0$
\end{defn}

Since $A$ satisfies weak algebraic Haag duality, this category was shown to be a braided C*-tensor category $\text{DHR}(A)$\ \cite{jones2024dhr}. In our setting of local von Neumann algebras, we can upgrade this to a W*-tensor category.

\begin{lem}
For any quasi-local algebra satisfying the conditions local normality and weak algebraic Haag duality, $\text{DHR}(A)$ is a W*-tensor category.
\end{lem}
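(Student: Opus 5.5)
The plan is to start from the fact, recalled above, that under weak algebraic Haag duality $\text{DHR}(A)$ is already a braided C*-tensor category \cite{jones2024dhr}. What remains is to exhibit preduals on the morphism spaces and to show that composition, $\dagger$, and tensoring with morphisms are (separately) weak*-continuous. Cauchy completeness, meaning closure under finite direct sums and subobjects given by projections, is already part of the C*-structure.

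\textbf{Step 1: localize morphisms.} Let $X,Y\in\text{DHR}(A)$. Fix an interval $I$ large enough that both have projective bases $\{b_i\}$ and $\{c_j\}$ commuting with $A_{I^c}$. A bimodule intertwiner $T:X\to Y$ is determined by the matrix $(\langle c_j\,|\,T b_i\rangle)_{i,j}$, and each entry commutes with $A_{I^c}$. By weak algebraic Haag duality, these entries therefore lie in $A_{I^{+R}}$. This gives an injective map
\[
\text{Hom}(X,Y)\hookrightarrow M_{m\times n}(A_{I^{+R}}),
\qquad T\mapsto \bigl(\langle c_j\,|\,Tb_i\rangle\bigr)_{i,j}.
\]
Its image is cut out by linear conditions. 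One condition is compressibility: with $p=(\langle b_i|b_k\rangle)$ and $q=(\langle c_j|c_l\rangle)$, the matrix must satisfy $qMp=M$. The other is commutation: $M$ must intertwine the left actions of $A_{I^{+R}}$, which are themselves given by matrices over $A_{I^{+2R}}$ (after possibly enlarging intervals), and must commute with $A_{I^c}$. Each of these conditions is weak*-closed in the von Neumann algebra $M_{m\times n}(A_{J})$, where $J$ is a suitable enlargement of $I$. Here we use that inclusions $A_{J}\subseteq A_{J'}$ are normal (condition (2)), and that multiplication by fixed elements is weak*-continuous. The commutation with all of $A_{I^c}$ reduces to commutation with each $A_K$, $K\subseteq I^c$ bounded, so it is an intersection of weak*-closed conditions. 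Hence $\text{Hom}(X,Y)$ is a weak*-closed subspace of a von Neumann algebra corner, so it is a dual Banach space, and its predual is the quotient of the predual.

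\textbf{Step 2: independence of choices.} Changing $I$ or the bases changes the embedding by multiplication with fixed matrices of local elements, such as $(\langle b'_k|b_i\rangle)$. Since local inclusions are normal, the resulting weak*-topologies agree. The topology is therefore intrinsic.

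\textbf{Step 3: continuity of the operations.} Composition $S\circ T$ corresponds to $M_S\, q\, M_T$-type matrix products. So composition is separately weak*-continuous, and $\dagger$ corresponds to matrix adjoint. For tensor products, recall $X\boxtimes_A Z$ has projective basis $\{b_i\otimes d_k\}$. The matrix of $T\otimes 1_Z$ has entries
\[
\langle d_l\,|\,\langle c_j\,|\,Tb_i\rangle\, d_k\rangle.
\]
These are images of the entries of $M_T$ under the map $a\mapsto \langle d_l|a\,d_k\rangle$, i.e.\ the left action of $A$ on $Z$ restricted to a local algebra, compressed. This step is the main obstacle: I must show that the left action restricted to $A_{J}$, written in local coordinates as a $*$-homomorphism $A_J\to M_n(A_{J'})$, is normal.

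For that I would invoke local normality (condition (6)) together with \cite[Theorem 6.5]{BaudierBragaFarahVignatiWillett2024Embeddings}. Since $A_J$ has no summands of the forbidden form, any $*$-homomorphism into the separable von Neumann algebra $M_n(A_{J'})$ is automatically normal. The map $a\mapsto (\langle d_l|a d_k\rangle)$ is such a homomorphism composed with a corner compression, so it is normal.

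The case $1_Z\otimes T$ is handled similarly: its entries are $\langle d_l\,|\,d_k\rangle$-weighted combinations, transported through the right action, which is just multiplication. So both tensor slots give weak*-continuous maps, and $\text{DHR}(A)$ is a W*-tensor category.
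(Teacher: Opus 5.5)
Your proof follows the paper's route. You localize intertwiners as matrices over a local von Neumann algebra using localized projective bases and weak algebraic Haag duality, and you note that the intertwining conditions are weak*-closed. For tensoring, you use local normality, through automatic normality of $*$-homomorphisms, in the slot where an amplimorphism is applied to an intertwiner; the other slot is multiplication by fixed local matrices. The paper instead shows that $\text{End}(X\oplus Y)$ is a W*-algebra, whereas you work with rectangular $\text{Hom}(X,Y)$ and check basis-independence; this difference is only cosmetic.

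One claim in Step 1 is false as written. You require $M$ to intertwine $A_{I^{+R}}$ and to commute with $A_{I^c}$. This only forces intertwining with the C*-algebra these two generate, which is in general not $A$, so your conditions need not cut out the image.

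A concrete counterexample: take ${}_AA_A$ with basis $\{1\}$ in a fusion spin chain, where algebraic Haag duality lets you take $R=0$. Every element of $Z(A_I)$ satisfies your conditions, and $Z(A_I)$ is nontrivial for large $I$ (e.g.\ Fibonacci with $X=\tau$). But $\text{End}({}_AA_A)=Z(A)=\mathbb{C}1$.

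The remedy is to impose $M\pi_X(x)=\pi_Y(x)M$ for $x$ in every bounded $A_K$, where $\pi_X(x):=(\langle b_i\,|\,x\,b_k\rangle)_{i,k}$ and $\pi_Y$ is defined similarly.
\begin{itemize}
\item Let $\widehat{K}$ be the smallest interval containing $I\cup K$. The entries of $\pi_X(x)$ and $\pi_Y(x)$ commute with $A_{\widehat{K}^{c}}$, so they lie in $A_{\widehat{K}^{+R}}$.
\item Hence each such condition is weak*-closed, by normality of the inclusion $A_{I^{+R}}\subseteq A_{\widehat{K}^{+R}}$.
\item Norm density of the local algebras then gives intertwining with all of $A$.
\end{itemize}
This is exactly the detail left implicit in the paper's assertion that $\{a : a\pi(b)=\pi(b)a,\ b\in A\}$ is weak*-closed. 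With this fix your argument is complete.
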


\begin{proof}
First we show that $\text{DHR}(A)$ is a W*-category. It suffices to show that for any $X,Y\in \text{DHR}(A)$, $\text{End}(X\oplus Y)$ is a W*-algebra. Let $R$ be the max of the localization radii for $X$ and $Y$, and choose an interval $I\ge R$, and localized projective bases $\{b_{i}\}\subseteq X$ and $\{c_{j}\}\subseteq Y$ respectively. Then $E:=\{(b_{i},0)\}\cup \{(0,c_{j})\}$ is a projective basis localized in $I$ for $X\oplus Y$, and thus defines and amplimorphism $\pi: A\rightarrow M_{n}(A)$ for $n=|E|$. For any bimodule endomorphism $T\in \text{End}(X\oplus Y)$, relabelling the elements of $E$ by $\{e_i\}$, $T$ is uniquely determined by an element $\widetilde{T}=(\langle e_{i}\ | T(e_j)\rangle)_{i,j}\in PM_{n}(A)P$, where $P=(\langle e_{i} |\ e_j\rangle)_{i,j}\in M_{n}(A)$ is an orthogonal projection. But by weak algebraic Haag duality, $P\in M_{n}(A_{I^{+R'}})$ for some $R'$. In fact, we see that in the amplimorphism picture, $\text{End}(X\oplus Y)\cong \{a\in PM_{n}(A_{I^{+R'}})\ : a\pi(b)=\pi(b)a,\ b\in A\}$. But this equation is weak*-closed in the von Neumann algebra $PM_{n}(A_{I^{+R'}})P$, hence has a canonical pre-dual. 

Using this type of argument, it is straightforward to check that composition and $*$ are (separately) weak* continuous. Local normality is used to show that tensoring (which in the amplimorphism picture involves applying an amplimorphism to an intertwiner, which is local) is separately weak*-continuous.

\end{proof}

Following \cite{evans2026operatoralgebraicapproachfusion}, let $B\subseteq A$ be a unital subalgebra. Then $B$ is called a \textit{quasi-local subalgebra} if $B_{I}:=B\cap A_{I}$ is a W*-algebra, and $\vee_{I} B_{I}$ is norm dense in $B$. A quasi-local subalgebra satisfies \textit{relative weak Haag duality} if there exists an $R\ge 0$ such that $\{a\in A\ :\ [a,B_{I^{c}}]=0\}\subseteq A_{I^{+R}}$. If $R=0$, we say $A$ satisfies \textit{ algebraic Haag duality} relative to $B$. We will also assume that $B$ satisfies local normality.

We point out that if $B\subseteq A$ satisfies relative weak algebraic Haag duality then this implies that $B$ itself satisfies weak algebraic Haag duality. Furthermore, we interpret $\varnothing^{+R}=\varnothing$, and thus weak relative algebraic Haag duality implies the inclusion $B\subseteq A$ is irreducible. In particular, this implies $A$ and $B$ both have trivial center. Irreducibility also implies that if there exists a conditional expectation $E:A\rightarrow B$, it is unique.

\begin{defn}{\cite{evans2026operatoralgebraicapproachfusion}}
If $A$ is a quasi-local algebra over $\mathbbm{Z}$, a \textit{local subalgebra} is a quasi-local subalgebra $B\subseteq A$ satisfying weak relative algebraic Haag duality, such that there exists a locally normal conditional expectation $E:A\rightarrow B$, and for every sufficiently large interval $I$, there is a finite subset $\{b_{i}\}\subseteq A_{I}$ such that $a=\sum_{i}b_{i}E(b^{\dagger}_{i}a)$.
\end{defn}

Recall that for a conditional expectation $E:A\rightarrow B$, we can form the $B$-$B$ correspondence $_{B} A_{B}:=A$ as a vector space, with left and right actions $b_{1}\cdot a\cdot b_{2}=b_{1}ab_{2}$. The right $B$-valued inner product is given by 

$$\langle a\ |\ a^{\prime}\rangle:=E(a^{\dagger}a^{\prime}).$$

\noindent Then by \cite{MR4419534}, $_{B}A_{B}\in \text{Corr}(B)$ is a dual Q-system, in the sense of \cite{evans2026operatoralgebraicapproachfusion}.

\bigskip

There are two significant facts concerning local subalgebras:

\begin{enumerate}
    \item 
    $\text{DHR}_{0}(B)$ is a unitary braided tensor category in the sense of \cite{2021arXiv211106378C}.

\medskip
    
    \item 
    The dual, connected Q-system $_{B}A_{B}\in \text{DHR}_{0}(B)$ and is commutative.
\end{enumerate}

\medskip

\noindent If $L$ is a commutative Q-system in a braided unitary fusion category, then we may equip the category of right $L$-modules $\mathcal{B}_{L}$ with the $\beta_{-}$ monoidal structure as in \cite{MR1729094, evans2026operatoralgebraicapproachfusion}. Then the free module functor can be canonically equipped with a central structure $\mathcal{B}\rightarrow \mathcal{Z}(\beta_{-}(\mathcal{B}_{L}))$. $L$ is called \textit{Lagrangian} if this central functor is an equivalence.

\begin{defn}\label{def:physicalboundarysubalg}
A local subalgebra $B\subseteq A$ is called a \text{physical boundary subalgebra} if $\text{DHR}(B)$ is the Hilb completion of a fusion category (necessarily $\text{DHR}_{0}(B)$), and the dual Q-system $_{B}A_{B}\in \text{DHR}_{0}(B)$ is Lagrangian. We call such an inclusion a \textit{symmetry inclusion}, and we call $\beta_{-}(\text{DHR}_{0}(B)_{A})$ the \textit{symmetry fusion category}.
\end{defn}

We can understand the symmetry category directly as a tensor subcategory of correspondences of $A$ satisfying a solitonic type of localization condition, $\text{DHR}_{-}(A|B)_{0}$ \cite{evans2026operatoralgebraicapproachfusion}. Briefly recall that $X\in \text{DHR}_{-}(A|B)_{0}$ if for all sufficiently large $I$, there exists a locally normal projective basis $\{b_{i}\}$ for $X$ such that $[A_{>I},b_{i}]=0$ and $[B_{< I}, b_{i}]=0$. Here $\text{DHR}{-}(A|B)_{0}$ denotes the dualizable objects in this category. Slightly modifying the argument of \cite{evans2026operatoralgebraicapproachfusion} to take dualizability into account (which is easy since $_{B}A_{B}$ is dualizable), we see that $\beta_{-}(\text{DHR}_{0}(B)_{A})\cong \text{DHR}_{-}(A|B)_{0}$.

\begin{defn}
If $\mathcal{C}$ is a unitary fusion category, a realization of $\mathcal{C}$ as symmetries on a quasi-local algebra $A$ is a symmetry inclusion $B\subseteq A$ such that $\mathcal{C}\cong \beta_{-}(\text{DHR}_{0}(B)_{A})$. 
\end{defn}

Of course, if there is an equivalence $\mathcal{C}\cong \beta_{-}(\text{DHR}_{0}(B)_{A})$, there is in general more than one (these will form a torsor over the group of monoidal auto-equivalences $\text{Aut}_{\otimes}(\mathcal{C})$, but we don't care \textit{which} one so we don't keep track of this data).

\begin{defn}
Two symmetry inclusions $B\subseteq A$ and $B^{\prime}\subseteq A^{\prime}$ are \textit{equivalent} if there exists a bounded spread isomorphism $\alpha: A\rightarrow A^{\prime}$ such that $\alpha(B)=B^{\prime}$. An equivalence of two realizations of $\mathcal{C}$ is an equivalence of the corresponding symmetry inclusions.
\end{defn}

Note that this is a coherent definition of symmetry equivalence: if we have an equivalence of symmetry inclusions, conjugating by $\alpha$ will give an equivalence 

$$\text{DHR}_{-}(A|B)_{0}\cong \text{DHR}_{-}(A^{\prime}|B^{\prime})_{0}.$$

\noindent We do not require that this natural equivalence intertwine our particular choices of \textit{identification} of $\mathcal{C}$ with $\text{DHR}_{-}(A|B)_{0}$ and $\text{DHR}_{-}(A^{\prime}|B^{\prime})_{0}$, respectively. If we have a fixed symmetry realization and we are attempting to classify symmetric phases with respect to this, such an equivalence intertwining specific identifications would be appropriate. The relation we have defined above is more suitable to kinematic equivalence.

\subsection{DHR theory for fusion spin chains}

The category of DHR bimodules for fusion spin chains $A(\mathcal{D},X)$ can be calculated in terms of the Drinfeld center of the category $\mathcal{D}$. We briefly recall a formal definition of this category in the graphical calculus, so that we can make use of these pictures when describing the equivalence. We note that we are using the conventions in \cite{MR1966525, Izumi2000LongoRehrenI} rather than the conventions in \cite{MR3242743}.

Let $(Z,\sigma)\in \mathcal{Z}(\mathcal{D})$, where $\sigma=\{\sigma_{Z,a}: Z\otimes a\cong a\otimes Z\}_{a\in \mathcal{D}}$, natural in $a$. Graphically, we represent this by

\begin{center}
\begin{minipage}[c]{0.1cm}
    $$\sigma_{Z,a}:=$$
\end{minipage}
\begin{minipage}[c]{0.3\textwidth}
    $$
\begin{tikzpicture}[scale=0.7]
  
  \draw[black, line width=1.2pt] (0,2) .. controls (0,1.2) and (1,0.8) .. (1,0);

  \draw[red, line width=1.2pt] (1,2) .. controls (1,1.2) and (0,0.8) .. (0,0);

  \node[below] at (0,0) {$Z$};
  \node[below] at (1,0) {$a$};
\end{tikzpicture}
$$
\end{minipage}
\end{center}

\

\noindent Naturality is the condition that for any $f\in \mathcal{D}(a,b)$, we have $(f\otimes 1_{Z})\circ \sigma_{Z,a}=\sigma_{Z,b}\circ (1_{Z}\otimes f)$, with graphical representation

\medskip

\begin{center}
\begin{minipage}[c]{0.17\textwidth}
\begin{tikzpicture}[scale=0.9]
  
  \draw[black, line width=1.2pt] (0,2) .. controls (0,1.2) and (1,0.8) .. (1,0);

  \draw[red, line width=1.2pt] (1,2) .. controls (1,1.2) and (0,0.8) .. (0,0);

  \node[draw, fill=white, inner sep=1.2pt, font=\scriptsize] at (0.12,1.5) {$f$};

  \node[below] at (0,0) {$Z$};
  \node[below] at (1,0) {$a$};
  \node[above] at (0,2) {$b$};
  \node[above] at (1,2) {$Z$};
\end{tikzpicture}
\end{minipage}
\begin{minipage}[c]{0.05\textwidth}
=
\end{minipage}
\begin{minipage}[c]{0.1\textwidth}
\begin{tikzpicture}[scale=0.9]

  \draw[black, line width=1.2pt] (0,2) .. controls (0,1.2) and (1,0.8) .. (1,0);

  \draw[red, line width=1.2pt] (1,2) .. controls (1,1.2) and (0,0.8) .. (0,0);

  \node[draw, fill=white, inner sep=1.2pt, font=\scriptsize] at (0.8,0.6) {$f$};

  \node[below] at (0,0) {$Z$};
  \node[below] at (1,0) {$a$};
  \node[above] at (0,2) {$b$};
  \node[above] at (1,2) {$Z$};
\end{tikzpicture}
\end{minipage}
\end{center}

\noindent Half-braidings are usually required to be normalized, i.e. $\sigma_{Z,\mathbbm{1}}=1_{Z}$, and are required to satisfy the following hexagon axiom (suppressing associators; or since we are just considering examples, assuming our categories are strict)

\begin{center}
\begin{minipage}[c]{0.22\textwidth}
\begin{tikzpicture}[scale=0.7]
  
  \draw[red, line width=1.2pt]
    (0,0) .. controls (0,0.8) and (2,1.2) .. (2,2);

  \draw[black, line width=1.2pt] (1,2) -- (1,0);

  \node[above] at (2,2) {$Z$};
  \node[above] at (1,2) {$a\otimes b$};

  \node[below] at (0,0) {$Z$};
  \node[below] at (1,0) {$a\otimes b$};
\end{tikzpicture}
\end{minipage}
\begin{minipage}[c]{0.06\textwidth}
\[
=
\]
\end{minipage}
\begin{minipage}[c]{0.28\textwidth}
\begin{tikzpicture}[scale=0.7]
  
  \draw[red, line width=1.2pt]
    (0,0) .. controls (0,0.8) and (2.4,1.2) .. (2.4,2);

  \draw[black, line width=1.2pt] (0.8,2) -- (0.8,0);
  \draw[black, line width=1.2pt] (1.6,2) -- (1.6,0);

  \node[above] at (2.4,2) {$Z$};
  \node[above] at (0.8,2) {$a$};
  \node[above] at (1.6,2) {$b$};

  \node[below] at (0,0) {$Z$};
  \node[below] at (0.8,-0.05) {$a$};
  \node[below] at (1.6,0) {$b$};
\end{tikzpicture}
\end{minipage}
\end{center}

\noindent The morphisms between objects $(Z,\sigma)$ and $(W, \gamma)$ are morphisms $f\in \mathcal{D}(Z,W)$ such that 

\begin{center}
\begin{minipage}[c]{0.17\textwidth}
\begin{tikzpicture}[scale=0.9]
  \draw[black, line width=1.2pt] (0,2) .. controls (0,1.2) and (1,0.8) .. (1,0);

  \draw[red, line width=1.2pt] (1,2) .. controls (1,1.2) and (0,0.8) .. (0,0);

  \node[draw, fill=white, inner sep=1.2pt, font=\scriptsize] at (0.84,1.5) {$f$};

  \node[below] at (0,0) {$Z$};
  \node[below] at (1,0) {$a$};
  \node[above] at (0,2) {$a$};
  \node[above] at (1,2) {$W$};
\end{tikzpicture}
\end{minipage}
\begin{minipage}[c]{0.05\textwidth}
=
\end{minipage}
\begin{minipage}[c]{0.1\textwidth}
\begin{tikzpicture}[scale=0.9]
  
  \draw[black, line width=1.2pt] (0,2) .. controls (0,1.2) and (1,0.8) .. (1,0);

  \draw[red, line width=1.2pt] (1,2) .. controls (1,1.2) and (0,0.8) .. (0,0);

  \node[draw, fill=white, inner sep=1.2pt, font=\scriptsize] at (0.22,0.6) {$f$};

  \node[below] at (0,0) {$Z$};
  \node[below] at (1,0) {$a$};
  \node[above] at (0,2) {$a$};
  \node[above] at (1,2) {$W$};
\end{tikzpicture}
\end{minipage}
\end{center}

\noindent The tensor product is defined as  $(Z,\sigma)\otimes (W,\gamma)=(Z\otimes W, \sigma\boxtimes \gamma)$, where $(\sigma\boxtimes \gamma)_{Z\otimes W, a}:=(\sigma_{Z,a}\otimes 1_{W})\circ(1_{Z}\otimes \gamma_{W,a}),$

\begin{center}
\begin{tikzpicture}[scale=1]
\node at (-1.8,1) {$(\sigma\boxtimes \gamma)_{Z\otimes W, a}=$};
  
  \draw[red, line width=1.2pt]
    (0.2,0) .. controls (0.5,0.8) and (1.3,1.2) .. (1.6,2);

  \draw[red, line width=1.2pt]
    (0.6,0) .. controls (0.9,0.8) and (1.7,1.2) .. (2.0,2);

  \draw[black, line width=1.2pt]
    (2.1,0) .. controls (1.8,0.8) and (0.8,1.2) .. (0.5,2);

  \node[above] at (1.6,2) {$Z$};
  \node[above] at (2.0,2) {$W$};
  \node[above] at (0.5,2) {$a$};

  \node[below] at (0.2,0) {$Z$};
  \node[below] at (0.6,0) {$W$};
  \node[below] at (2.1,0) {$a$};
\end{tikzpicture}
\end{center}

\noindent A standard result \cite{Izumi2000LongoRehrenI, MR1966525} is that $\mathcal{Z}(\mathcal{D})$ is a unitary braided fusion category, with braiding $(Z,\sigma)\otimes (W, \gamma)\cong (W,\gamma)\otimes (Z,\sigma)$ given simply by the half-braiding $\sigma_{Z,W}$.

The Drinfeld center $\mathcal{Z}(\mathcal{D})$ is very important in the general theory of fusion categories since it controls Morita  equivalence. In particular, two (unitary) fusion categories are Morita equivalent if and only if their Drinfeld centers are equivalent as unitary braided fusion categories.

\begin{thm}\cite{jones2024dhr}
$\text{DHR}(A(\mathcal{D},X))\cong \mathcal{Z}(\mathcal{D})$ for any fusion spin chain.
\end{thm}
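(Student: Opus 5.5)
The plan is to build a functor $F:\mathcal{Z}(\mathcal{D})\rightarrow \text{DHR}(A)$, where $A:=A(\mathcal{D},X)$, and then show that it is a braided unitary equivalence.

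\emph{Construction of $F$.} Given $(Z,\sigma)\in\mathcal{Z}(\mathcal{D})$, define for each interval $I$ the space
$$K^{Z}_{I}:=\text{Hom}_{\mathcal{D}}(X^{\otimes I},Z\otimes X^{\otimes I}).$$
\begin{itemize}
\item For $I\subseteq J$, map $K^{Z}_{I}\rightarrow K^{Z}_{J}$ by tensoring with $1_{X}$ on the right-hand sites. On the left-hand sites, tensor with $1_X$ and then use $\sigma_{Z,X^{\otimes(J\setminus I)_{<}}}$ to move the $Z$ strand to the far left.
\item The hexagon axiom and naturality of $\sigma$ make these maps compatible, so we may set $F(Z):=\text{colim}_{I} K^{Z}_{I}$, completed in the norm defined below.
\item The right action of $A_{I}$ is precomposition.
\item The left action of $A_{I}$ is postcomposition with $1_{Z}\otimes f$.
\item The $A$-valued inner product is $\langle \xi\,|\,\eta\rangle:=\xi^{\dagger}\circ\eta\in \text{End}(X^{\otimes I})=A_{I}$.
\end{itemize}
One checks that these structures are compatible with the colimit maps.

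\emph{$F(Z)$ is a DHR bimodule.} Since $X$ is a strong tensor generator and $Z$ is a finite sum of simples, for $|I|\ge n$ we can choose finitely many $b_{i}\in K^{Z}_{I}$ with $\sum_{i} b_{i}b_{i}^{\dagger}=1_{Z\otimes X^{\otimes I}}$. For this, decompose $Z\otimes X^{\otimes I}$ into simples, each of which appears in $X^{\otimes I}$, and use the resulting isometries. This family is a projective basis. Each $b_{i}$ commutes with $A_{I^{c}}$: on the right this is immediate, and on the left it follows from naturality of $\sigma$ under the inclusion maps.

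\emph{Monoidality, braiding and full faithfulness.}
\begin{itemize}
\item The multiplication map $K^{Z}_{I}\otimes K^{W}_{I}\rightarrow K^{Z\otimes W}_{I}$, $\xi\otimes\eta\mapsto(1_{Z}\otimes\eta)\circ\xi$, descends to a unitary $F(Z)\boxtimes_{A}F(W)\cong F(Z\otimes W)$.
\item The DHR braiding, computed on bases localized in disjoint, far-separated intervals, reduces under the colimit maps to the half-braiding $\sigma_{Z,W}$.
\item For full faithfulness: a bimodule map $F(Z)\rightarrow F(W)$ is determined by its values on a localized basis. By weak algebraic Haag duality these values lie in a localized hom-space $\text{Hom}(Z\otimes X^{\otimes J},W\otimes X^{\otimes J})$. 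Commuting with the left action of all $A_{J}$ forces the map to have the form $f\otimes 1$. Compatibility with the colimit maps then forces $f$ to intertwine the half-braidings.
\end{itemize}

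\emph{Essential surjectivity (the main obstacle).} Given a DHR bimodule $Y$ with projective basis $\{b_{i}\}_{i=1}^{m}$ localized in $I$, consider the projection
$$P=(\langle b_{i}|b_{j}\rangle)_{i,j}\in M_{m}(A),$$
which lies in $M_{m}(A_{I^{+R}})=\text{End}(\mathbbm{C}^{m}\otimes X^{\otimes I^{+R}})$ by Haag duality. Together with the amplimorphism $\pi:A\rightarrow PM_{m}(A)P$, $P$ describes $Y$.
\begin{itemize}
\item On the localized region, $\pi$ restricts to a unital map $\text{End}(X^{\otimes J})\rightarrow P\,\text{End}(\mathbbm{C}^m\otimes X^{\otimes J})P$ for $J\supseteq I^{+R}$. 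Because these maps are compatible in $J$, they should be given by conjugating by an object: we aim to show $\text{Im}(P)\cong Z\otimes X^{\otimes J}$ for a single $Z\in\mathcal{D}$. One could attempt to extract $Z$ via Frobenius reciprocity by peeling off the $X^{\otimes J}$ factor using duals, though doing so while keeping the identification of $\pi$ with left action $1_Z\otimes(\cdot)$ is the delicate point.
\item The half-braiding $\sigma_{Z,X}$ is then read off by comparing the two identifications obtained when $J$ is enlarged by one site on the left. Naturality and the hexagon axiom follow from the uniqueness of these identifications.
\end{itemize}
Making this extraction canonical, and showing that it is independent of the chosen interval, is where I expect most of the work to lie. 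Uniform strong generation by a fixed power $X^{\otimes n}$ is what should make this step work.
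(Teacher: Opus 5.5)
Your construction of $F$ is the paper's sketch of $F_{\mathbbm{Z}}$, with the $Z$ strand pulled to the left instead of the right. Your check that $F(Z)$ is a DHR bimodule is fine: the projective basis comes from strong generation and localization from naturality of $\sigma$.

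The genuine gap is essential surjectivity, which is the real content of the theorem. The paper only sketches the functor and says that proving it is a braided equivalence ``requires significantly more work,'' deferring to \cite{jones2024dhr}, which uses subfactor theory. Your final step states the conclusion as an aim rather than arguing for it.

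What Haag duality actually gives you is an object $V=\mathrm{Im}(P)\subseteq\mathbbm{C}^m\otimes X^{\otimes I}$ together with compatible unital $*$-homomorphisms $\pi_{L,R}:\text{End}(X^{\otimes L}\otimes X^{\otimes I}\otimes X^{\otimes R})\to\text{End}(X^{\otimes L}\otimes V\otimes X^{\otimes R})$. These are the identity on $\text{End}(X^{\otimes L})$ and $\text{End}(X^{\otimes R})$. Compatibility in $J$ alone does not make them of the form $1_Z\otimes(\cdot)$; ``they should be given by conjugating by an object'' is exactly what must be proved.

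The missing ingredient is rigidity. One way to use it:
\begin{enumerate}
\item Choose $R$ with an isometric embedding $\iota:\overline{X^{\otimes I}}\hookrightarrow X^{\otimes R}$, which exists by strong generation.
\item Let $q\in\text{End}(X^{\otimes I}\otimes X^{\otimes R})$ be the normalized cup projection through $\iota$, and set $Z:=\mathrm{Im}\,\pi(q)\subseteq V\otimes X^{\otimes R}$.
\item Let $\Phi:Z\otimes X^{\otimes I}\to V$ be $\sqrt{d}$ times the cap.
\item Apply $\pi$ to the Temperley--Lieb relations $e_ie_je_i=d^{-2}e_i$ on $X^{\otimes I}\otimes X^{\otimes R}\otimes X^{\otimes R'}$ with $|R'|=|I|$. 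Since $\pi$ is the identity on the two right-hand blocks, this shows $\Phi$ is unitary.
\item Slide $f\otimes 1$ through the cup into $1\otimes\text{End}(X^{\otimes R})$. This gives $\Phi(1_Z\otimes f)=\pi(f)\Phi$ for $f\in A_I$.
\end{enumerate}
You would still need to show that $\Phi$ intertwines $\pi_{L,R}$ on morphisms straddling $I$ and its complement. You would also need to read off $\sigma_{Z,X}$ by enlarging on the left, verify naturality and the hexagon axiom, and prove independence of $R$ and $\iota$. None of this is in the proposal.

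Two of the steps you do carry out are also incorrect as written.

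First, the tensorator. With the strand on the left, $\xi\otimes\eta\mapsto(1_Z\otimes\eta)\circ\xi$ is not balanced over $A$; for $Z=W=\mathbbm{1}$ it is $\xi\otimes\eta\mapsto\eta\xi$. The balanced map is $\xi\otimes\eta\mapsto(1_W\otimes\xi)\circ\eta$, which lands in $F(W\otimes Z)$. So your $F$ is monoidal out of $\mathcal{Z}(\mathcal{D})^{mp}$, the same reversal the paper notes for $G_{I}(a)$. Either correct this with the monoidal equivalence $\mathcal{Z}(\mathcal{D})\simeq\mathcal{Z}(\mathcal{D})^{mp}$ furnished by the braiding (and redo the comparison of braidings), or pull the strand to the right as the paper does. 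There $\xi\otimes\eta\mapsto(\xi\otimes 1_W)\circ\eta$ is balanced and lands in $F(Z\otimes W)$.

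Second, full faithfulness. For a fixed $J$, commuting with $1_Z\otimes A_J$ does not force $g\in\text{Hom}(Z\otimes X^{\otimes J},W\otimes X^{\otimes J})$ to be $f\otimes 1$. That commutant is $\bigoplus_{x}\text{Hom}(Z\otimes x,W\otimes x)$ over simples $x\prec X^{\otimes J}$; already for $Z=W=\mathbbm{1}$ it is the center of $A_J$. You have to pass to some $J'\supsetneq J$, where $g\otimes 1$ must commute with the cup projection joining $X^{\otimes J}$ to a dual copy further right. Capping off then yields $g=f\otimes 1$ with $f$ a normalized partial trace. More simply, once $F$ is an honest braided tensor functor, full faithfulness is automatic because $\mathcal{Z}(\mathcal{D})$ is non-degenerate. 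This is the argument the paper uses in the proof of Theorem~\ref{StabilizedDHR}.
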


We note that the original version of the theorem assumed $\mathcal{D}$ was fusion rather than indecomposable multi-fusion, and that $X$ was both self-dual and strong tensor generating rather than just strong tensor generating. These assumptions were made to simplify the argument, so that classical subfactor theory could be used ``out of the box". However, the result of \cite{hataishi2025structuredhrbimodulesabstract} shows that the self-dual assumption is not necessary, and the argument in \cite[Appendix A]{jones2025holographybulkboundarylocaltopological} that the assumption of fusion (rather than indecomposable multi-fusion) is not necessary. We will here explicitly sketch the functor $F_{\mathbbm{Z}}:\mathcal{Z}(\mathcal{D})\rightarrow \text{DHR}(A(\mathcal{D},X))$ from the original proof of the above theorem, which is the same construction but illustrated with pictures.
\medskip

\bigskip

For any interval $I$ with $|I|\ge n$, we have a natural tensor functor $F_{I}: \mathcal{Z}(\mathcal{D})\rightarrow \text{Corr}(A_{I})$ given by 

$$F_{I}(Z,\sigma):=\mathcal{D}(X^{\otimes I}, X^{\otimes I}\otimes Z)$$ which can be graphically represented as

$$\begin{tikzpicture}[scale=0.7]

  \node at (-3.1,0) {$\xi\in F_I(Z,\sigma)\;\leftrightsquigarrow$};

  \draw[thick] (-0.6,0.6) rectangle (0.6,-0.6);
  \node at (0,0) {$\xi$};

  \draw[thick] (-0.45,1.4) -- (-0.45,0.6);
  \draw[thick] (-0.15,1.4) -- (-0.15,0.6);
  \draw[thick] (0.15,1.4) -- (0.15,0.6);
  \draw[thick] (0.45,1.4) -- (0.45,0.6);

  \draw[thick] (-0.45,-0.6) -- (-0.45,-1.4);
  \draw[thick] (-0.15,-0.6) -- (-0.15,-1.4);
  \draw[thick] (0.15,-0.6) -- (0.15,-1.4);
  \draw[thick] (0.45,-0.6) -- (0.45,-1.4);

  \draw[red, thick] (0.6,0.6) -- (1.05,1.4);
  \node[above] at (1.05,1.4) {$Z$};

  \draw[decorate,decoration={brace,mirror,amplitude=5pt}]
    (-0.55,-1.55) -- (0.55,-1.55)
    node[midway,below=6pt] {$I$};

\end{tikzpicture}
$$

\noindent The left and right $A_{I}$ actions are given by $f\triangleright \xi\triangleleft g:=(f\otimes 1_{Z})\circ \xi\circ g$, while the right $A_{I}$-valued inner product is given by $\langle \xi \mid \nu\rangle := \xi^{\dagger}\circ \nu$. These can be represented graphically as 

\medskip

\begin{tikzpicture}[scale=0.7]
\begin{scope}
  
  \node at (-3.1,0) {$f\triangleright \xi\triangleleft g\;=$};

  \draw[thick] (-0.6,1.8) rectangle (0.6,0.9);
  \node at (0,1.35) {$f$};

  \draw[thick] (-0.6,0.6) rectangle (0.6,-0.6);
  \node at (0,0) {$\xi$};

  \draw[thick] (-0.6,-0.9) rectangle (0.6,-1.8);
  \node at (0,-1.35) {$g$};

  \draw[thick] (-0.45,2.6) -- (-0.45,1.8);
  \draw[thick] (-0.15,2.6) -- (-0.15,1.8);
  \draw[thick] (0.15,2.6) -- (0.15,1.8);
  \draw[thick] (0.45,2.6) -- (0.45,1.8);

  \draw[thick] (-0.45,0.9) -- (-0.45,0.6);
  \draw[thick] (-0.15,0.9) -- (-0.15,0.6);
  \draw[thick] (0.15,0.9) -- (0.15,0.6);
  \draw[thick] (0.45,0.9) -- (0.45,0.6);

  \draw[thick] (-0.45,-0.6) -- (-0.45,-0.9);
  \draw[thick] (-0.15,-0.6) -- (-0.15,-0.9);
  \draw[thick] (0.15,-0.6) -- (0.15,-0.9);
  \draw[thick] (0.45,-0.6) -- (0.45,-0.9);

  \draw[thick] (-0.45,-1.8) -- (-0.45,-2.6);
  \draw[thick] (-0.15,-1.8) -- (-0.15,-2.6);
  \draw[thick] (0.15,-1.8) -- (0.15,-2.6);
  \draw[thick] (0.45,-1.8) -- (0.45,-2.6);

  \draw[red, thick] (0.6,0.6) .. controls (1.15,1.15) and (1.05,1.95) .. (1.05,2.6);
  \node[above] at (1.05,2.6) {$Z$};

  \draw[decorate,decoration={brace,mirror,amplitude=5pt}]
    (-0.55,-2.75) -- (0.55,-2.75)
    node[midway,below=6pt] {$I$};
\end{scope}

 \node at (3,0) {and};

\begin{scope}[xshift=9cm]
\node at (-2.1,0) {$\langle \nu\ |\ \xi\rangle=$};
  
  \draw[thick] (-0.6,1.8) rectangle (0.6,0.6);
  \node at (0,1.2) {$\nu^{\dagger}$};

  \draw[thick] (-0.6,0.3) rectangle (0.6,-0.9);
  \node at (0,-0.3) {$\xi$};

  \draw[thick] (-0.45,2.6) -- (-0.45,1.8);
  \draw[thick] (-0.15,2.6) -- (-0.15,1.8);
  \draw[thick] (0.15,2.6) -- (0.15,1.8);
  \draw[thick] (0.45,2.6) -- (0.45,1.8);

  \draw[thick] (-0.45,0.6) -- (-0.45,0.3);
  \draw[thick] (-0.15,0.6) -- (-0.15,0.3);
  \draw[thick] (0.15,0.6) -- (0.15,0.3);
  \draw[thick] (0.45,0.6) -- (0.45,0.3);

  \draw[thick] (-0.45,-0.9) -- (-0.45,-1.7);
  \draw[thick] (-0.15,-0.9) -- (-0.15,-1.7);
  \draw[thick] (0.15,-0.9) -- (0.15,-1.7);
  \draw[thick] (0.45,-0.9) -- (0.45,-1.7);

  \draw[red, thick] (0.6,0.3) .. controls (0.78,0.45) and (0.78,0.45) .. (0.6,0.6);

  \draw[decorate,decoration={brace,mirror,amplitude=5pt}]
    (-0.55,-1.85) -- (0.55,-1.85)
    node[midway,below=6pt] {$I$};
\end{scope}
\end{tikzpicture}

\noindent

This assignment is functorial: if $f\in \mathcal{Z}(\mathcal{D})((Z,\sigma),(W,\gamma))$, then $F_{I}(f)(\xi):=(1_{X^{\otimes I}}\otimes f)\circ \xi$. Since $X^{\otimes n}$ contains all the simple objects, this naturally extends to a tensor functor, with tensorator $\mu^{I}_{(Z,\sigma),(W,\gamma)}: F_{I}(Z,\sigma)\otimes_{A_{I}} F_{I}(W,\gamma)\cong F_{I}(Z\otimes W, \sigma\boxtimes \gamma)$ given by composition $\xi\otimes \eta\mapsto (\xi\otimes 1_{F(b)})\circ \eta\in F_{I}(Z\otimes W, \sigma\boxtimes \gamma)$ \cite{chen2022ktheoretic}. One can easily check that this extends to an isometric map of correspondences from the definitions. To check that this is an actual unitary isomorphism, we use the assumption that $X$ is strong tensor generating. This implies that there is a collection $\{e^{Z}_{i}\}\subseteq F_{I}(Z,\sigma)$ satisfying $\sum_{i} e^{Z}_{i}\circ (e^{Z}_{i})^{\dagger}=1_{X^{\otimes n}}\otimes 1_{Z}$.

$$\begin{tikzpicture}[scale=0.7, baseline={(current bounding box.center)}]
  
  \begin{scope}[xshift=0cm]
    \draw[thick] (0,0) -- (0,4.8);
    \draw[thick] (0.8,0) -- (0.8,4.8);
    \draw[thick] (1.6,0) -- (1.6,4.8);
    \draw[thick] (2.4,0) -- (2.4,4.8);
    \draw[thick,red] (3.4,0) -- (3.4,4.8);
  \end{scope}

  \node at (5.4,2) {$\displaystyle =\sum_i$};

  \begin{scope}[xshift=7.2cm]
    
    \draw[thick] (0,0) -- (0,1.2);
    \draw[thick] (0.8,0) -- (0.8,1.2);
    \draw[thick] (1.6,0) -- (1.6,1.2);
    \draw[thick] (2.4,0) -- (2.4,1.2);
    \draw[thick,red] (3.4,0) -- (3.4,1.2);

    \draw[fill=white] (-0.3,1.2) rectangle (3.7,2.0);
    \node at (1.7,1.6) {$(e^{Z}_i)^{\dagger}$};

    \draw[thick] (0,2.0) -- (0,2.8);
    \draw[thick] (0.8,2.0) -- (0.8,2.8);
    \draw[thick] (1.6,2.0) -- (1.6,2.8);
    \draw[thick] (2.4,2.0) -- (2.4,2.8);

    \draw[fill=white] (-0.3,2.8) rectangle (3.7,3.6);
    \node at (1.7,3.2) {$e^{Z}_{i}$};

    \draw[thick] (0,3.6) -- (0,4.8);
    \draw[thick] (0.8,3.6) -- (0.8,4.8);
    \draw[thick] (1.6,3.6) -- (1.6,4.8);
    \draw[thick] (2.4,3.6) -- (2.4,4.8);
    \draw[thick,red] (3.4,3.6) -- (3.4,4.8);
  \end{scope}
\end{tikzpicture}
$$

\noindent Then this gives a projective basis (a.k.a. Pimsner-Popa/Watatani basis) for the right Hilbert module $F_{I}(Z,\sigma)$. Up to this point, we haven't actually used the half-braiding $\sigma$. We will us this to define inclusions of correspondences and then take the inductive limit.

Now, if $I\subseteq J$, we have a natural inclusion $j_{I\subseteq J}: F_{I}(Z,\sigma)\hookrightarrow F_{J}(Z,\sigma)$ defined by 

$$\xi\mapsto 1_{X^{\otimes (J<I)}}\otimes \left(( 1_{X^{\otimes I}}\otimes \sigma_{Z,X^{\otimes J>I}})\circ (\xi \otimes 1_{X^{\otimes (J>I)}})\right),$$

\medskip

\bigskip

\begin{center}
\begin{tikzpicture}[scale=0.7, baseline={(current bounding box.center)}]

  \begin{scope}[xshift=0cm]

    \draw[thick] (0,0) -- (0,1.2);
    \draw[thick] (0.8,0) -- (0.8,1.2);
    \draw[thick] (1.6,0) -- (1.6,1.2);
    \draw[thick] (2.4,0) -- (2.4,1.2);

    \draw[fill=white] (-0.3,1.2) rectangle (2.7,2.0);
    \node at (1.2,1.6) {$\xi$};

    \draw[thick] (0,2.0) -- (0,3.6);
    \draw[thick] (0.8,2.0) -- (0.8,3.6);
    \draw[thick] (1.6,2.0) -- (1.6,3.6);
    \draw[thick] (2.4,2.0) -- (2.4,3.6);

    \draw[thick,red] (2.7,2.0) -- (3.3,3.6);
  \end{scope}

  \node at (4,1.8) {$\mapsto$};

  \begin{scope}[xshift=6cm]
   
    \draw[thick] (-0.8,0) -- (-0.8,3.6);
    \draw[thick] (0,0) -- (0,1.2);
    \draw[thick] (0.8,0) -- (0.8,1.2);
    \draw[thick] (1.6,0) -- (1.6,1.2);
    \draw[thick] (2.4,0) -- (2.4,1.2);

    \draw[thick] (3.6,0) -- (3.6,3.6);
    \draw[thick] (4.4,0) -- (4.4,3.6);

    \draw[fill=white] (-0.3,1.2) rectangle (2.7,2.0);
    \node at (1.2,1.6) {$\xi$};

    \draw[thick] (0,2.0) -- (0,3.6);
    \draw[thick] (0.8,2.0) -- (0.8,3.6);
    \draw[thick] (1.6,2.0) -- (1.6,3.6);
    \draw[thick] (2.4,2.0) -- (2.4,3.6);

    \draw[thick,red] (2.7,2.0) -- (5.1,3.6);
       \draw[decorate,decoration={brace,mirror,amplitude=5pt}]
    (-0.3,-0.2) -- (2.7,-0.2)
    node[midway,below=6pt] {$I$};
     \draw[decorate,decoration={brace,amplitude=5pt}]
    (-1,4) -- (4.5,4)
    node[midway,above=6pt] {$J$};
  \end{scope}

\end{tikzpicture}
\end{center}

\bigskip

\noindent \noindent By the properties of the half-braiding, this extends to an equivariant inclusion of actions $F_{I}\rightarrow F_{J}$, and thus extends to a tensor functor $F_{\mathbbm{Z}}: \mathcal{C}\rightarrow \text{Corr}(A)$. Now, for any sufficiently large interval $I$, if we take the projective bases built for $F_{I}(Z,\sigma)$, this extends to a projective basis for $F_{\mathbbm{Z}}(Z,\sigma)$. But by naturality of the half-braiding, this is localized in $I$:

\bigskip

\begin{center}

\begin{tikzpicture}[scale=0.7]
\begin{scope}[xshift=0cm]
   
    \draw[thick] (-1.6,0) -- (-1.6,3.6);
    \draw[thick] (-0.8,0) -- (-0.8,3.6);
    \draw[thick] (0,0) -- (0,1.2);
    \draw[thick] (0.8,0) -- (0.8,1.2);
    \draw[thick] (1.6,0) -- (1.6,1.2);
    \draw[thick] (2.4,0) -- (2.4,1.2);

    \draw[thick] (3.2,0) -- (3.2,3.6);
    \draw[thick] (4.0,0) -- (4.0,3.6);
    \draw[thick] (4.8,0) -- (4.8,3.6);

    \draw[fill=white] (-0.3,1.2) rectangle (2.7,2.0);
    \node at (1.2,1.6) {$b_i$};

    \draw[thick] (0,2.0) -- (0,3.6);
    \draw[thick] (0.8,2.0) -- (0.8,3.6);
    \draw[thick] (1.6,2.0) -- (1.6,3.6);
    \draw[thick] (2.4,2.0) -- (2.4,3.6);

    \draw[thick,red] (2.7,2.0) -- (5.8,3.2);

    \draw[fill=white] (-1.9,2.85) rectangle (-0.5,3.45);
    \node at (-1.2,3.15) {$f$};

    \draw[fill=white] (3.0,2.85) rectangle (4.2,3.45);
    \node at (3.6,3.15) {$g$};
      \draw[decorate,decoration={brace,mirror,amplitude=5pt}]
    (-0.3,-0.2) -- (2.7,-0.2)
    node[midway,below=6pt] {$I$};
\end{scope}

\node at (6.0,1.8) {$=$};

\begin{scope}[xshift=8.5cm]
    
    \draw[thick] (-1.6,0) -- (-1.6,3.6);
    \draw[thick] (-0.8,0) -- (-0.8,3.6);
    \draw[thick] (0,0) -- (0,1.2);
    \draw[thick] (0.8,0) -- (0.8,1.2);
    \draw[thick] (1.6,0) -- (1.6,1.2);
    \draw[thick] (2.4,0) -- (2.4,1.2);

    \draw[thick] (3.2,0) -- (3.2,3.6);
    \draw[thick] (4.0,0) -- (4.0,3.6);
    \draw[thick] (4.8,0) -- (4.8,3.6);

    \draw[fill=white] (-0.3,1.2) rectangle (2.7,2.0);
    \node at (1.2,1.6) {$b_i$};

    \draw[thick] (0,2.0) -- (0,3.6);
    \draw[thick] (0.8,2.0) -- (0.8,3.6);
    \draw[thick] (1.6,2.0) -- (1.6,3.6);
    \draw[thick] (2.4,2.0) -- (2.4,3.6);

    \draw[thick,red] (2.7,2.0) -- (5.8,3.2);

    \draw[fill=white] (-1.9,0.35) rectangle (-0.5,0.95);
    \node at (-1.2,0.65) {$f$};

    \draw[fill=white] (3.0,0.35) rectangle (4.2,0.95);
    \node at (3.6,0.65) {$g$};
      \draw[decorate,decoration={brace,mirror,amplitude=5pt}]
    (-0.3,-0.2) -- (2.7,-0.2)
    node[midway,below=6pt] {$I$};
\end{scope}

\end{tikzpicture}

\end{center}

\noindent Thus $F_{\mathbbm{Z}}:\mathcal{Z}(\mathcal{D})\rightarrow \text{DHR}(A)$ extends to a monoidal functor. It turns out that $F_{\mathbbm{Z}}$ is a braided equivalence, but proving this requires significantly more work. Nevertheless, it is useful to have this picture in mind.

\subsection{The standard model: anyon chains}\label{sec:anyonchains}
We can now describe the standard model for fusion categorical symmetries on the lattice: anyon chains. We present this in a way most convenient for our subfactor-theoretic perspective, but there are equivalent formulations in terms of matrix product operators and/or weak Hopf algebras.

Let $\mathcal{C}$ be a unitary fusion category. To build an example of a quasi-local algebra, the input for this construction is a right indecomposable $\mathcal{C}$-module category $\mathcal{M}$ \cite{MR1976459}. Let $\mathcal{D}:=\mathcal{C}^{*}_{\mathcal{M}}= \text{End}_{\mathcal{C}}(\mathcal{M})$ be the dual category, which in this context can be thought of as the ``charge" category. We also choose a strong tensor generator $X\in \mathcal{D}$.

Now, let $\mathcal{E}:=\text{End}(\mathcal{M})\cong \text{Mat}_{n}(\text{Hilb}_{f.d.})$, where $n=\text{rank}(\mathcal{M})$. Then $\mathcal{E}$ is an indecomposable, Morita trivial multi-fusion category.
$\mathcal{M}$ is naturally equipped with the structure of an invertible $\mathcal{D}$-$\mathcal{C}$ bimodule category.

There is a natural forgetful inclusion $\mathcal{D}=\text{End}_{\mathcal{C}}(\mathcal{M})\rightarrow \text{End}(\mathcal{M})=\mathcal{E}$, and thus we have a natural inclusion 

$$A(\mathcal{D},X)_{I}\hookrightarrow A(\mathcal{E},X)_{I},$$

\noindent where in the latter, we identify $X$ with the underlying functor on $\mathcal{M}$ (forgetting the module structure). Identifying $A(\mathcal{D},X)_{I}$ with its image extends to an inclusion $A(\mathcal{D},X)\subseteq A(\mathcal{E},X)$. We recognize the latter quasi-local algebra as an anyon chain. We have the following from \cite{jones2025quantumcellularautomatacategorical, evans2026operatoralgebraicapproachfusion}:

\begin{thm}
For any indecomposable $\mathcal{C}$-module category $\mathcal{M}$ and strong tensor generator $X\in \mathcal{C}^{*}_{\mathcal{M}}$, $A(\mathcal{D},X)\hookrightarrow A(\mathcal{E},X)$ is a physical boundary subalgebra, with symmetry category $\mathcal{C}$.
\end{thm}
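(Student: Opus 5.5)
This theorem is stated as a citation of \cite{jones2025quantumcellularautomatacategorical, evans2026operatoralgebraicapproachfusion}. The plan is to verify, in order, the defining conditions of a physical boundary subalgebra (Definition \ref{def:physicalboundarysubalg}) for $B:=A(\mathcal{D},X)\subseteq A:=A(\mathcal{E},X)$, and then identify the symmetry category.

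\textbf{Step 1: $B$ is a local subalgebra of $A$.}

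First, $B_I=B\cap A_I$ should equal the image of $\text{End}_{\mathcal{D}}(X^{\otimes I})$. This follows from the existence of a conditional expectation compatible with the inclusions.

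Second, I construct the expectation $E:A\to B$. I realize $\mathcal{D}=\text{End}_{\mathcal{C}}(\mathcal{M})$ inside $\mathcal{E}$ via the adjunction with the free module functor. Concretely, $E$ averages an endomorphism $f\in\text{End}_{\mathcal{E}}(X^{\otimes I})$ against the $\mathcal{C}$-action: I act on $\mathcal{M}$ by $\bigoplus_{c\in\text{Irr}(\mathcal{C})} (\cdot)\triangleleft c$, conjugate $f$ by the module structure isomorphisms, and weight by $d_c/\dim\mathcal{C}$. This is the "projection onto $\mathcal{C}$-module endomorphisms" coming from the canonical Q-system $\bigoplus_c c\boxtimes \bar c$. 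Because all local algebras are finite-dimensional, normality is automatic. Compatibility with the inclusions $\pi_{I,J}$ holds since the averaging commutes with tensoring by $1_X$.

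Third, I obtain a finite Pimsner–Popa basis from the module-structure isometries $X^{\otimes I}\triangleleft c$. Strong generation of $X$ in $\mathcal{D}$, and hence in $\mathcal{E}$ after $n$ steps, makes this basis localized in any sufficiently large $I$.

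Fourth, relative weak Haag duality says that the commutant in $A$ of $B_{I^c}$ lies in $A_{I^{+R}}$. I prove this by the same graphical argument used for Haag duality of fusion spin chains in \cite{jones2024dhr}. An element commuting with $\text{End}_{\mathcal{D}}(X^{\otimes J})$ for $J$ left and right of $I$ can be pushed into $I$ plus a collar of length $n$, using that $X^{\otimes n}$ contains all simples of $\mathcal{D}$.

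\textbf{Step 2: identify $\text{DHR}(B)$ and the Q-system.}

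By the theorem above, $\text{DHR}(B)\cong\mathcal{Z}(\mathcal{D})$, and it is the Hilb completion of this fusion category. Next I compute ${}_BA_B$ as an object of $\mathcal{Z}(\mathcal{D})$. Via the functor $F_{\mathbbm{Z}}$, the expectation in Step 1 exhibits ${}_BA_B\cong F_{\mathbbm{Z}}(L)$. Here $L=I(\mathbbm{1})$ is the image of the unit under the adjoint of the forgetful functor $\mathcal{Z}(\mathcal{D})\to\mathcal{D}\to$ (dual via $\mathcal{M}$). Equivalently, $L$ is the canonical Lagrangian algebra $\bigoplus_c$ (full center of $\mathcal{M}$) corresponding to the Morita equivalence $\mathcal{D}\sim\mathcal{C}$.

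To establish this identification, I match the local correspondences $F_I(L)=\mathcal{D}(X^{\otimes I},X^{\otimes I}\otimes L)$ with $A_I=\text{End}_{\mathcal{E}}(X^{\otimes I})$. I use the identification $\mathcal{E}(a,b)\cong\mathcal{D}(a,b\otimes L)$, which holds because $\mathcal{E}$ is the dual of $\mathcal{D}$ with respect to $\mathcal{M}$ as a $\mathcal{D}$–$\text{Hilb}$ bimodule, together with the half-braiding of $L$ coming from the inclusion maps.

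Since $L$ is Lagrangian, with $\mathcal{Z}(\mathcal{D})_L\cong\mathcal{C}$ the module-category dual, the inclusion is a physical boundary subalgebra. Finally, $\beta_-(\mathcal{Z}(\mathcal{D})_L)\cong \mathcal{D}^*_{\mathcal{M}}\cong\mathcal{C}$, possibly up to $\mathcal{C}^{\text{mp}}$ depending on conventions; a careful choice of $\beta_-$ fixes this.

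\textbf{Main obstacle.}

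The hardest step is Step 2's identification of ${}_BA_B$ with $F_{\mathbbm{Z}}(L)$ as a Q-system, compatibly with the half-braiding. This requires showing that the inductive-limit inclusions $j_{I\subseteq J}$ match the embeddings $A_I\subseteq A_J$. I would first verify this on the generators given by the module-associator isometries, and then extend by naturality.
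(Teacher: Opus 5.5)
Your Steps 1--2 follow, in outline, the standard route that the paper records around this statement: ${}_BA_B$, for $B=A(\mathcal{D},X)\subseteq A=A(\mathcal{E},X)$, is the Lagrangian algebra $L\in\mathcal{Z}(\mathcal{D})\simeq\text{DHR}(B)$ attached to $\mathcal{M}$. The genuine gap is the final identification of the symmetry category. You conclude ``$\beta_-(\mathcal{Z}(\mathcal{D})_L)\cong\mathcal{D}^*_{\mathcal{M}}\cong\mathcal{C}$, possibly up to $\mathcal{C}^{\mathrm{mp}}$; a careful choice of $\beta_-$ fixes this.'' There is nothing to choose. Definition~\ref{def:physicalboundarysubalg} fixes $\beta_-$, and $\beta_+(\mathcal{Z}(\mathcal{D})_L)$ and $\beta_-(\mathcal{Z}(\mathcal{D})_L)$ are monoidal opposites of each other: the braiding intertwines the two relative tensor products with the factors swapped. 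Moreover, in the paper's conventions $R_a(m)=m\triangleleft a$ identifies $\mathcal{D}^*_{\mathcal{M}}=\mathcal{Z}_{\mathcal{E}}(\mathcal{D})$ with $\mathcal{C}^{\mathrm{mp}}$, so your chain cannot hold as written. In general $\mathcal{C}\not\simeq\mathcal{C}^{\mathrm{mp}}$: for example $\text{Hilb}_{f.d.}(\mathbb{Z}_3,\omega)^{\mathrm{mp}}\simeq\text{Hilb}_{f.d.}(\mathbb{Z}_3,\omega^{-1})$, which is inequivalent to $\text{Hilb}_{f.d.}(\mathbb{Z}_3,\omega)$ for nontrivial $\omega$. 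So your argument only shows that the symmetry category is $\mathcal{C}$ or $\mathcal{C}^{\mathrm{mp}}$.

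The paper pins this down with an explicit functor, $G_I(a)=\text{Hom}_{\mathcal{E}}(X^{\otimes I},R(a)\otimes X^{\otimes I})$, with the $R(a)$-string pulled to the \emph{left}. This placement is forced. The half-braiding of $R(a)\in\mathcal{Z}_{\mathcal{E}}(\mathcal{D})$ is natural only for $\mathcal{D}$-morphisms, so the localized basis commutes with all of $A_{>I}$ but only with $B_{<I}$. That is exactly the localization defining $\text{DHR}_-(A|B)_0\simeq\beta_-(\text{DHR}_0(B)_A)$. Stacking left-pulled strings reverses the tensor order and cancels the opposite coming from $R$. This gives $G_I(a)\boxtimes_{A_I}G_I(b)\cong G_I(a\otimes b)$, hence $G_{\mathbb{Z}}:\mathcal{C}\to\text{DHR}_-(A|B)_0$. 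You need this computation, or an equivalent one tracking which half-line the defects attach to, rather than a convention choice.

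Two smaller points also need repair.

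First, as written, $L=I(\mathbb{1})$ for the adjoint of $\mathcal{Z}(\mathcal{D})\to\mathcal{D}$ is the Lagrangian algebra of the regular module, with $\mathcal{Z}(\mathcal{D})_L\simeq\mathcal{D}$ rather than $\mathcal{C}$. There is also no forgetful functor $\mathcal{D}\to\mathcal{C}$ to compose with. What you want is the adjoint of $\mathcal{Z}(\mathcal{D})\simeq\mathcal{Z}(\mathcal{C})\to\mathcal{C}$ applied to $\mathbb{1}_{\mathcal{C}}$, whose underlying object is $\bigoplus_{m\in\text{Irr}(\mathcal{M})}\underline{\text{Hom}}(m,m)$. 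Your isomorphism $\mathcal{E}(a,b)\cong\mathcal{D}(a,b\otimes L)$ then comes from $\mathcal{E}(a,b)=\bigoplus_m\mathcal{M}(a(m),b(m))$ together with the internal-Hom adjunction. It does not come from $\mathcal{E}$ being a dual of $\mathcal{D}$; $\mathcal{E}$ is the dual of $\text{Hilb}_{f.d.}$ with respect to $\mathcal{M}$.

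Second, the ``module-structure isometries $X^{\otimes I}\triangleleft c$'' are morphisms in $\mathcal{M}$, not elements of $A_I=\text{End}_{\mathcal{E}}(X^{\otimes I})$. A local subalgebra also needs a basis in $A_I$ that expands \emph{every} $a\in A$, not just elements of $A_I$. The natural source comes after Step 2:
\begin{itemize}
\item take $e_i\in\mathcal{D}(X^{\otimes I},X^{\otimes I}\otimes L)$ with $\sum_ie_ie_i^{\dagger}=1_{X^{\otimes I}\otimes L}$;
\item these exist because $X^{\otimes I}$ contains every simple of $\mathcal{D}$;
\item they remain a basis on all of $A$ because the half-braiding inclusions $j_{I\subseteq J}$ preserve this identity.
\end{itemize}
With such a basis $\{b_i\}\subseteq A_I$, relative Haag duality follows without a graphical argument. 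If $[a,B_{I^c}]=0$, then $E(b_i^{\dagger}a)$ commutes with $B_{I^c}$, so it lies in $B_I$ by algebraic Haag duality of $B$. Hence $a=\sum_ib_iE(b_i^{\dagger}a)\in A_I$.
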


In \cite{jones2025quantumcellularautomatacategorical}, these inclusions were called \textit{spatial realizations}, a nod to the graph-based Hilbert space structure underlying an anyon chain. Here, we will call them \textit{anyon models of the fusion category symmetry}. Recall that there is a canonical correspondence between \textit{left} $\mathcal{D}$ module categories and Lagrangian algebras in $\mathcal{Z}(\mathcal{D})$. Given our setup, if we view $\mathcal{M}$ as a left $\mathcal{D}$-module category then we denote the associated Lagrangian algebra (normalized to be a dual Q-system) by $L$. Then $A(\mathcal{E},X)$, as a (dual) Q-system in $\text{DHR}(A(\mathcal{D},X))\cong \mathcal{Z}(\mathcal{D})$, is precisely $L$.

Although we know abstractly that $\mathcal{C}\cong \text{DHR}_{-}(A(\mathcal{E},X), A(\mathcal{D},X))$, we want to write this equivalence down explicitly. Namely, how do we see $\mathcal{C}$ as correspondences of $A(\mathcal{E},X)$?

We have a natural unitary tensor functor from the monoidal opposite $R:\mathcal{C}^{mp}\rightarrow \mathcal{E}=\text{End}(\mathcal{M})$, given from the right module structure by $R_{a}(m):=m\triangleleft a$, and tensorator given by the natural transformation

$$\mu_{a,b}:R_{a}\circ R_{b}(m)=(m\triangleleft b)\triangleleft a\cong m\triangleleft (b\otimes a)=R_{b\otimes a}(m)=R_{a\otimes^{mp}b}(m).$$

In fact, $R$ gives an equivalence $$R:\mathcal{C}^{mp}\cong \mathcal{Z}_{\mathcal{E}}(\mathcal{D}),$$

\noindent where $\mathcal{Z}_{\mathcal{E}}(\mathcal{D})$ is the \textit{centralizer} of $\mathcal{D}$ in the larger multi-fusion category $\mathcal{E}$. Here $\mathcal{Z}_{\mathcal{E}}(\mathcal{D})$ consists of pairs $(E,\sigma)$, where $E\in \mathcal{E}$ and $\sigma=\{\sigma_{E,Y}: Y\otimes E\cong E\otimes Y\}_{Y\in \mathcal{D}}$ is a unitary half-braiding on $Y$ but only with respect to objects in $\mathcal{D}$. Furthermore, it should only be natural with respect to morphisms in $\mathcal{D}$ rather than all morphisms in $\mathcal{E}$, since the inclusion of categories $\mathcal{D}\subseteq \mathcal{E}$ is not full (or in other words, there are typically many more morphisms in $\mathcal{E}$ than in $\mathcal{D}$). These are required to satisfy the same type of conditions (e.g. normalization, hexagon axiom) as half-braidings for the Drinfeld center.

\begin{center}
\begin{minipage}[c]{0.1cm}
    $$\sigma_{E,Y}:=$$
\end{minipage}
\begin{minipage}[c]{0.3\textwidth}
    $$
\begin{tikzpicture}[scale=1.2]
  
  \draw[red, line width=1.2pt] (0,2) .. controls (0,1.2) and (1,0.8) .. (1,0);

  \draw[black, line width=1.2pt] (1,2) .. controls (1,1.2) and (0,0.8) .. (0,0);

  \node[below] at (0,0) {$Y$};
  \node[below] at (1,0) {$E$};
\end{tikzpicture}
$$
\end{minipage}
\end{center}

As mentioned above, the monoidal functor $R:\mathcal{C}^{mp}\rightarrow \text{End}(\mathcal{M})=\mathcal{E}$ extends to an equivalence $R:\mathcal{C}^{mp}\cong \mathcal{Z}_{\mathcal{E}}(\mathcal{D})$, by naturally equipping the objects $R(a)$ with the $\mathcal{D}$ half-braidings

$$(Y\otimes R(a))(m)=Y(R(a)(m))=Y(m\triangleleft a)\cong Y(m)\triangleleft a=R(a)(Y(m))=(R(a)\otimes Y)(m),$$

\noindent where the middle isomorphism uses the $\mathcal{C}$-module functor structure on $Y$, and naturality follows from the naturality of the module functor structure.

Now, using this, we can build a model for bimodules in $\text{DHR}_{-}(A(\mathcal{E},X)| A(\mathcal{D},X))$ in a manner very similar to how we constructed DHR bimodules for $A(\mathcal{D},X)$ in the previous section. The only differences are that 

\begin{enumerate}
    \item 
    We use morphisms in $\mathcal{E}$ rather than morphisms in $\mathcal{D}$;
    \item 
    We have our red strings (labelled by $R(a)$ and indexing the bimodule) pulled to the \textit{left}, rather than right. 
\end{enumerate}

\noindent Otherwise, the construction proceeds essentially verbatim. We define, for each $I$, $G(a)_{I}:=\text{Hom}_{\mathcal{E}}(X^{\otimes I}, R(a)\otimes X^{\otimes I})$. Graphically, we have the representation

\begin{center}
\begin{tikzpicture}[scale=0.9]

  \node at (-3.1,0) {$\xi\in G_I(a)\;\leftrightsquigarrow$};

  \draw[thick] (-0.6,0.6) rectangle (0.6,-0.6);
  \node at (0,0) {$\xi$};

  \draw[thick] (-0.45,1.4) -- (-0.45,0.6);
  \draw[thick] (-0.15,1.4) -- (-0.15,0.6);
  \draw[thick] (0.15,1.4) -- (0.15,0.6);
  \draw[thick] (0.45,1.4) -- (0.45,0.6);

  \draw[thick] (-0.45,-0.6) -- (-0.45,-1.4);
  \draw[thick] (-0.15,-0.6) -- (-0.15,-1.4);
  \draw[thick] (0.15,-0.6) -- (0.15,-1.4);
  \draw[thick] (0.45,-0.6) -- (0.45,-1.4);

  \draw[red, thick] (-0.6,0.6) -- (-1.05,1.4);
  \node[above] at (-1.05,1.4) {$R(a)$};

  \draw[decorate,decoration={brace,mirror,amplitude=5pt}]
    (-0.55,-1.55) -- (0.55,-1.55)
    node[midway,below=6pt] {$I$};

\end{tikzpicture}
\end{center}

We have verbatim the same $A_{I}$-bimodule and inner product structure as in the Drinfeld center case. The first difference we see is that since our strings are to the left, stacking pictures \textit{reverses} the order of tensor products. But since $R$ also reverses tensor products, $G$ orients the tensor products correctly: 

$$G_{I}(a)\boxtimes_{A_{I}} G_{I}(b)\cong G_{I}(a\otimes b).$$

Our inclusion also works similarly. Namely if $I\subseteq J$, we have the inclusion

\begin{center}
\begin{tikzpicture}[scale=0.9]

\begin{scope}
  
  \draw[thick] (-0.6,0.6) rectangle (0.6,-0.6);
  \node at (0,0) {$\xi$};

  \draw[thick] (-0.45,1.4) -- (-0.45,0.6);
  \draw[thick] (-0.15,1.4) -- (-0.15,0.6);
  \draw[thick] (0.15,1.4) -- (0.15,0.6);
  \draw[thick] (0.45,1.4) -- (0.45,0.6);

  \draw[thick] (-0.45,-0.6) -- (-0.45,-1.4);
  \draw[thick] (-0.15,-0.6) -- (-0.15,-1.4);
  \draw[thick] (0.15,-0.6) -- (0.15,-1.4);
  \draw[thick] (0.45,-0.6) -- (0.45,-1.4);

  \draw[red, thick] (-0.6,0.6) -- (-1.05,1.4);
  \node[above] at (-1.05,1.4) {$R(a)$};

  \draw[decorate,decoration={brace,mirror,amplitude=5pt}]
    (-0.55,-1.55) -- (0.55,-1.55)
    node[midway,below=6pt] {$I$};

\end{scope}

  \node at (2.5,0.2) {$\mapsto$};

\begin{scope}[xshift=5cm]
  
  \draw[thick] (-0.6,0.6) rectangle (0.6,-0.6);
  \node at (0,0) {$\xi$};

  \draw[thick] (-0.45,1.4) -- (-0.45,0.6);
  \draw[thick] (-0.15,1.4) -- (-0.15,0.6);
  \draw[thick] (0.15,1.4) -- (0.15,0.6);
  \draw[thick] (0.45,1.4) -- (0.45,0.6);

  \draw[thick] (-0.45,-0.6) -- (-0.45,-1.4);
  \draw[thick] (-0.15,-0.6) -- (-0.15,-1.4);
  \draw[thick] (0.15,-0.6) -- (0.15,-1.4);
  \draw[thick] (0.45,-0.6) -- (0.45,-1.4);

  \draw[thick] (-0.75,1.4) -- (-0.75,-1.4);
  \draw[thick] (-1.05,1.4) -- (-1.05,-1.4);
  \draw[thick] (.75,1.4) -- (.75,-1.4);
  \draw[thick] (1.05,1.4) -- (1.05,-1.4);

  \draw[red, thick] (-0.6,0.6) -- (-1.65,1.4);
  \node[above] at (-1.65,1.4) {$R(a)$};

  \draw[decorate,decoration={brace,mirror,amplitude=5pt}]
    (-0.55,-1.55) -- (0.55,-1.55)
    node[midway,below=6pt] {$I$};
     \draw[decorate,decoration={brace,amplitude=5pt}]
    (-1.1,1.5) -- (1.1,1.5)
    node[midway,above=6pt] {$J$};

    \end{scope}

\end{tikzpicture}
\end{center}

This establishes an $A-A$ correspondence $G_{\mathbbm{Z}}(a)$ by taking the colimit as above, which extends to a (fully faithful) monoidal functor $G_{\mathbbm{Z}}:\mathcal{C}\rightarrow \text{Corr}(A)$. If we take one of the basis elements $b_{i}$ in $G_{I}(a)$ and look at its image in $G_{\mathbbm{Z}}(a)$, we see that for all $g\in A(\mathcal{E},X)_{>I}$, $gb_{i}=b_{i}g$,

\bigskip

\begin{center}

\begin{tikzpicture}[scale=0.7]
\begin{scope}[xshift=0cm]
    
    \draw[thick] (-1.6,0) -- (-1.6,3.6);
    \draw[thick] (-0.8,0) -- (-0.8,3.6);
    \draw[thick] (0,0) -- (0,1.2);
    \draw[thick] (0.8,0) -- (0.8,1.2);
    \draw[thick] (1.6,0) -- (1.6,1.2);
    \draw[thick] (2.4,0) -- (2.4,1.2);

    \draw[thick] (3.2,0) -- (3.2,3.6);
    \draw[thick] (4.0,0) -- (4.0,3.6);
    \draw[thick] (4.8,0) -- (4.8,3.6);

    \draw[fill=white] (-0.3,1.2) rectangle (2.7,2.0);
    \node at (1.2,1.6) {$b_i$};

    \draw[thick] (0,2.0) -- (0,3.6);
    \draw[thick] (0.8,2.0) -- (0.8,3.6);
    \draw[thick] (1.6,2.0) -- (1.6,3.6);
    \draw[thick] (2.4,2.0) -- (2.4,3.6);

    \draw[thick,red] (-0.3,2.0) -- (-3.1,3.6);

    \draw[fill=white] (3.0,2.85) rectangle (4.2,3.45);
    \node at (3.6,3.15) {$g$};
      \draw[decorate,decoration={brace,mirror,amplitude=5pt}]
    (-0.3,-0.2) -- (2.7,-0.2)
    node[midway,below=6pt] {$I$};
\end{scope}

\node at (6.0,1.8) {$=$};

\begin{scope}[xshift=10cm]
    
    \draw[thick] (-1.6,0) -- (-1.6,3.6);
    \draw[thick] (-0.8,0) -- (-0.8,3.6);
    \draw[thick] (0,0) -- (0,1.2);
    \draw[thick] (0.8,0) -- (0.8,1.2);
    \draw[thick] (1.6,0) -- (1.6,1.2);
    \draw[thick] (2.4,0) -- (2.4,1.2);

    \draw[thick] (3.2,0) -- (3.2,3.6);
    \draw[thick] (4.0,0) -- (4.0,3.6);
    \draw[thick] (4.8,0) -- (4.8,3.6);

    \draw[fill=white] (-0.3,1.2) rectangle (2.7,2.0);
    \node at (1.2,1.6) {$b_i$};

    \draw[thick] (0,2.0) -- (0,3.6);
    \draw[thick] (0.8,2.0) -- (0.8,3.6);
    \draw[thick] (1.6,2.0) -- (1.6,3.6);
    \draw[thick] (2.4,2.0) -- (2.4,3.6);

    \draw[fill=white] (3.0,0.35) rectangle (4.2,0.95);
    \node at (3.6,0.65) {$g$};
      \draw[decorate,decoration={brace,mirror,amplitude=5pt}]
    (-0.3,-0.2) -- (2.7,-0.2)
    node[midway,below=6pt] {$I$};
     \draw[thick,red] (-0.3,2.0) -- (-3.1,3.6);
\end{scope}

\end{tikzpicture}

\end{center}

\bigskip
\noindent

\noindent However, since the half-braiding is only natural for morphisms in $\mathcal{D}$, we can only say in general for $f\in A(\mathcal{D},X)_{<I}$ that $fb_{i}=b_{i}f$,

\bigskip

\begin{center}

\begin{tikzpicture}[scale=0.7]
\begin{scope}[xshift=0cm]
    \draw[thick] (-1.6,0) -- (-1.6,3.6);
    \draw[thick] (-0.8,0) -- (-0.8,3.6);
    \draw[thick] (0,0) -- (0,1.2);
    \draw[thick] (0.8,0) -- (0.8,1.2);
    \draw[thick] (1.6,0) -- (1.6,1.2);
    \draw[thick] (2.4,0) -- (2.4,1.2);

    \draw[thick] (3.2,0) -- (3.2,3.6);
    \draw[thick] (4.0,0) -- (4.0,3.6);
    \draw[thick] (4.8,0) -- (4.8,3.6);

    \draw[fill=white] (-0.3,1.2) rectangle (2.7,2.0);
    \node at (1.2,1.6) {$b_i$};

    \draw[thick] (0,2.0) -- (0,3.6);
    \draw[thick] (0.8,2.0) -- (0.8,3.6);
    \draw[thick] (1.6,2.0) -- (1.6,3.6);
    \draw[thick] (2.4,2.0) -- (2.4,3.6);

    \draw[thick,red] (-0.3,2.0) -- (-3.5,3.6);

    \draw[fill=white] (-1.9,2.85) rectangle (-0.4,3.5);
    \node at (-1.2,3.15) {$f$};

      \draw[decorate,decoration={brace,mirror,amplitude=5pt}]
    (-0.3,-0.2) -- (2.7,-0.2)
    node[midway,below=6pt] {$I$};
\end{scope}

\node at (6.0,1.8) {$=$};

\begin{scope}[xshift=8.5cm]
    \draw[thick] (-1.6,0) -- (-1.6,3.6);
    \draw[thick] (-0.8,0) -- (-0.8,3.6);
    \draw[thick] (0,0) -- (0,1.2);
    \draw[thick] (0.8,0) -- (0.8,1.2);
    \draw[thick] (1.6,0) -- (1.6,1.2);
    \draw[thick] (2.4,0) -- (2.4,1.2);

    \draw[thick] (3.2,0) -- (3.2,3.6);
    \draw[thick] (4.0,0) -- (4.0,3.6);
    \draw[thick] (4.8,0) -- (4.8,3.6);

    \draw[fill=white] (-0.3,1.2) rectangle (2.7,2.0);
    \node at (1.2,1.6) {$b_i$};

    \draw[thick] (0,2.0) -- (0,3.6);
    \draw[thick] (0.8,2.0) -- (0.8,3.6);
    \draw[thick] (1.6,2.0) -- (1.6,3.6);
    \draw[thick] (2.4,2.0) -- (2.4,3.6);

    \draw[thick,red] (-0.3,2.0) -- (-3.1,3.6);

    \draw[fill=white] (-1.9,0.35) rectangle (-0.5,0.95);
    \node at (-1.2,0.65) {$f$};

      \draw[decorate,decoration={brace,mirror,amplitude=5pt}]
    (-0.3,-0.2) -- (2.7,-0.2)
    node[midway,below=6pt] {$I$};
\end{scope}

\end{tikzpicture}

\end{center}

\bigskip

\noindent Thus $G_{\mathbbm{Z}}:\mathcal{C}\rightarrow \text{DHR}_{-}(A(\mathcal{E},X),A(\mathcal{D},X))$, and in fact this is an equivalence. We now have a concrete model for the symmetry category $\mathcal{C}$ as correspondences of $A$.

\section{Infinite stabilization}

We now define a version of spin chains with infinite-dimensional, separable local Hilbert spaces $\ell^{2}(\mathbbm{N})$. In some corners of the condensed matter world, we may prefer the separable Hilbert space $L^{2}(S^{1})$ and call this a ``rotor chain;'' however, we do not need to specify a Hamiltonian so the actual form of the separable Hilbert space is immaterial. Using $\ell^{2}(\mathbbm{N})$ is typically associated with a quantum harmonic oscillator.   

We define 
$$\mathcal{L}^{\infty}:=A(\text{Hilb}, \ell^{2}(\mathbbm{N})).$$ 

More concretely, for any interval $I\subseteq \mathbbm{Z}$, define the local Hilbert space $(\ell^{2}(\mathbbm{N}))^{\otimes I}$, the $|I|$-fold tensor power of the Hilbert space $\ell^{2}(\mathbbm{N})$, with tensor factors indexed by sites in $I$. The local algebra $\mathcal{L}^{\infty}_{I}:=\mathcal{B}(\ell^{2}(\mathbbm{N})^{\otimes I})\cong \mathcal{B}(\ell^{2}(\mathbbm{N}))\overline{\otimes}\dots \overline{\otimes} \mathcal{B}(\ell^{2}(\mathbbm{N}))$, where $\mathcal{B}(\ell^{2}(\mathbbm{N}))$ denotes the von Neumann algebra of all bounded operators on the Hilbert space $\ell^{2}(\mathbbm{N})$, and $\overline{\otimes}$ denotes the tensor product of von Neumann algebras. The quasi-local C*-algebra is

$$\mathcal{L}^{\infty}:=\text{colim}_{I} \mathcal{L}^{\infty}_{I}.$$ 

\begin{remark}
Physically, $\mathcal{L}^{\infty}$ is realized explicitly by free bosons \cite{MR1441540} in the following sense. Consider the single-particle Hilbert space $\ell^{2}(\mathbbm{Z})$ and the orthonormal basis $e_{k}=\delta_{k}$ for $k\in \mathbbm{Z}$. Applying bosonic second quantization, we obtain an algebra of creation and annihilation operators $b^{\dagger}_{k}, b_{k}$,  where $k\in \mathbbm{Z}$, satisfying

$$[b^{\dagger}_{k}, b^{\dagger}_{j}]=[b_{k},b_{j}]=0$$

$$[b_{j}, b^{\dagger}_{k}]=\delta_{j,k} 1$$

This factorizes into a tensor product of algebras $\widetilde{L}_{k}$ generated by $1, b_{k}, b^{\dagger}_{k}$ for $k\in \mathbbm{Z}$ (thinking of the integers as spatial lattice sites), and for any finite $F\subseteq \mathbbm{Z}$, we have the subalgebra $\widetilde{L}_{F}\cong \otimes_{k\in F} \widetilde{L}_{k}$. Each of these algebras integrates to a Weyl algebra with one-dimensional underlying complex Hilbert space. In the standard Fock space representation (or any representation such that the appropriate one-parameter groups are continuous) the W*-algebra of bounded operators generated by $\widetilde{L}_{F}$ yields a copy of $\mathcal{B}(\ell^{2}(\mathbbm{N}))$. In fact, we see that the local von Neumann algebra generated by $\mathcal{L}_{F}:=\text{W}^{*}(\widetilde{L}_{F})\cong \mathcal{B}(\ell^{2}(\mathbbm{N}))^{\overline{\otimes} F}$. Thus by completing von Neumann algebras locally, we precisely obtain the quasi-local algebra $\mathcal{L}^{\infty}$.
\end{remark}

We will now see that $\mathcal{L}^{\infty}$ satisfies a strong form of algebraic Haag duality, where $F$ can be arbitrary finite subsets of $\mathbbm{Z}$.

\begin{thm}
For any finite subset $F\subseteq \mathbb{Z}$, we have $\{x\in \mathcal{L}^{\infty}\ :\ [x,\mathcal{L}^{\infty}_{F^{c}}]=0\}=\mathcal{L}^{\infty}_{F}$.
\end{thm}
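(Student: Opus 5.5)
The plan is to realize $\mathcal{L}^{\infty}$ faithfully on a single Hilbert space that factorizes as $\mathcal{H}_{F}\otimes\mathcal{H}_{F^{c}}$, and then apply the commutation theorem $(1\otimes\mathcal{B}(\mathcal{H}_{F^{c}}))'=\mathcal{B}(\mathcal{H}_{F})\otimes 1$.

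\textbf{Step 1: a faithful representation.} Fix the unit vector $e_{0}\in\ell^{2}(\mathbbm{N})$. Form the incomplete (von Neumann) infinite tensor product $\mathcal{H}:=\bigotimes_{k\in\mathbbm{Z}}(\ell^{2}(\mathbbm{N}),e_{0})$. This is the closed span of elementary tensors $\otimes_{k}\xi_{k}$ with $\xi_{k}=e_{0}$ for all but finitely many $k$.

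For each finite $J\subseteq\mathbbm{Z}$ there is a canonical unitary $\mathcal{H}\cong\ell^{2}(\mathbbm{N})^{\otimes J}\otimes\mathcal{H}_{J^{c}}$, where $\mathcal{H}_{J^{c}}$ is the incomplete tensor product over $J^{c}$. Through it we obtain a normal representation $\pi_{J}(a)=a\otimes 1$ of $\mathcal{L}^{\infty}_{J}$.

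These representations are compatible with the inclusions $\pi_{I,J}$, since tensoring by $1$ on $J\setminus I$ matches the factorization. Hence they assemble into a $*$-homomorphism $\pi$ on $\bigcup_{I}\mathcal{L}^{\infty}_{I}$.

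Each $\pi_{J}$ is injective, because $\mathcal{L}^{\infty}_{J}$ is a factor. An injective $*$-homomorphism of C*-algebras is isometric. So $\pi$ is isometric on the dense union and extends to an isometric, hence faithful, representation of $\mathcal{L}^{\infty}$.

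\textbf{Step 2: the commutant of the complement.} Write $\mathcal{H}=\mathcal{H}_{F}\otimes\mathcal{H}_{F^{c}}$ with $\mathcal{H}_{F}=\ell^{2}(\mathbbm{N})^{\otimes F}$. Then $\pi(\mathcal{L}^{\infty}_{F^{c}})$ is the norm closure of $\bigcup_{J}1\otimes\mathcal{B}(\ell^{2}(\mathbbm{N})^{\otimes J})\otimes 1$, the union running over finite $J\subseteq F^{c}$.

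The key claim is that this union acts irreducibly on $\mathcal{H}_{F^{c}}$. Let $T\in\mathcal{B}(\mathcal{H}_{F^{c}})$ commute with all of it. Then $T$ commutes with each projection $P_{J}$ onto $\ell^{2}(\mathbbm{N})^{\otimes J}\otimes\Omega_{F^{c}\setminus J}$, where $\Omega$ denotes the reference vector $\otimes e_{0}$.

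On the range of $P_{J}$, the operator $T$ commutes with $\mathcal{B}(\ell^{2}(\mathbbm{N})^{\otimes J})$, so $T$ acts there as a scalar $\lambda_{J}$. These scalars agree as $J$ increases. Since $\bigcup_{J}\mathrm{ran}(P_{J})$ is dense, $T=\lambda 1$.

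Therefore $\pi(\mathcal{L}^{\infty}_{F^{c}})''=1\otimes\mathcal{B}(\mathcal{H}_{F^{c}})$. By the commutation theorem,
$$\pi(\mathcal{L}^{\infty}_{F^{c}})'=\mathcal{B}(\mathcal{H}_{F})\otimes 1=\pi(\mathcal{L}^{\infty}_{F}).$$

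\textbf{Step 3: conclusion.} The inclusion $\mathcal{L}^{\infty}_{F}\subseteq\{x:[x,\mathcal{L}^{\infty}_{F^{c}}]=0\}$ is immediate from locality. Here $\mathcal{L}^{\infty}_{F}$ is interpreted as $\overline{\otimes}_{k\in F}\mathcal{B}(\ell^{2}(\mathbbm{N}))$, sitting inside $\mathcal{L}^{\infty}_{I}$ for any interval $I\supseteq F$; locality applies because $F$ and $F^{c}$ are disjoint sets of tensor factors.

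Conversely, let $x$ commute with $\mathcal{L}^{\infty}_{F^{c}}$. Then $\pi(x)\in\pi(\mathcal{L}^{\infty}_{F^{c}})'=\pi(\mathcal{L}^{\infty}_{F})$. Since $\pi$ is injective, $x\in\mathcal{L}^{\infty}_{F}$.

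\textbf{Main obstacle.} I expect the main care to be needed in Step 1. One must check that the colimit C*-norm on $\mathcal{L}^{\infty}$ agrees with the norm coming from $\pi$, so that no nonzero element of the completion is killed. The isometry argument on each $\mathcal{L}^{\infty}_{J}$ is what handles this.

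The irreducibility computation in Step 2 is standard but must be done on the incomplete tensor product, where the density of finitely supported vectors is the essential input.
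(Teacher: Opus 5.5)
Your proof is correct and takes a genuinely different route from the paper.

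The paper works intrinsically in norm. It approximates $a$ by some $b\in\mathcal{L}^{\infty}_{J}$ and notes that $b$ then $2\epsilon$-commutes with $\mathcal{L}^{\infty}_{J\setminus F}$. It then invokes the Nachtergaele--Scholz--Werner theorem: an operator on $H_{1}\otimes H_{2}$ that almost commutes with $1\otimes\mathcal{B}(H_{2})$ is close to $\mathcal{B}(H_{1})\otimes 1$. This gives $a'\in\mathcal{L}^{\infty}_{F}$ with $\|a-a'\|\le 3\epsilon$, and norm-closedness of $\mathcal{L}^{\infty}_{F}$ finishes.

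You instead choose a faithful representation: von Neumann's incomplete tensor product, i.e.\ the GNS representation of a pure product state. You show the complement acts irreducibly on $\mathcal{H}_{F^{c}}$ and read off the relative commutant exactly from $(1\otimes\mathcal{B}(\mathcal{H}_{F^{c}}))'=\mathcal{B}(\mathcal{H}_{F})\otimes 1$. The essential input, which the paper's proof also uses, is that $\mathcal{L}^{\infty}_{F}$ is the full W*-algebra $\mathcal{B}(\ell^{2}(\mathbbm{N})^{\otimes F})$, so this commutant lies inside $\pi(\mathcal{L}^{\infty})$.

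Your argument is more elementary, since it needs no appeal to NSW, whose infinite-dimensional case is nontrivial. Because commutants only see weak closures, it also handles automatically a point the paper glosses over. $\mathcal{L}^{\infty}_{F^{c}}$ is only the C*-algebra generated by intervals in $F^{c}$, so before applying NSW one must extend the commutator bound to the W*-algebra on $J\setminus F$, using Kaplansky density and lower semicontinuity of the norm.

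Two small points in your write-up:
\begin{itemize}
\item Your description of $\pi(\mathcal{L}^{\infty}_{F^{c}})$ as generated by $\mathcal{B}(\ell^{2}(\mathbbm{N})^{\otimes J})$ for arbitrary finite $J\subseteq F^{c}$ is slightly too large when $F$ has gaps: across a gap one only gets a spatial C*-tensor product. This is harmless, for the same weak-closure reason.
\item You should justify that $T$ commutes with $P_{J}$. It does because $P_{J}$ is the strong limit of the local projections $1_{J}\otimes\bigotimes_{k\in K\setminus J}|e_{0}\rangle\langle e_{0}|$ as $K\uparrow F^{c}$.
\end{itemize}

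What the paper's route buys in return is a quantitative, representation-free statement. Local elements that nearly commute with the complement are norm-close to $\mathcal{L}^{\infty}_{F}$, which is the kind of stability estimate useful when working with approximately local maps rather than exact commutants.
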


\begin{proof}
    Let $F$ be a finite subset and take $a \in \mathcal{L}^{\infty} \cap (\mathcal{L}^{\infty}_{F^{c}})' = Z_{\mathcal{L}^{\infty}}(\mathcal{L}^{\infty}_{F^{c}})$. Without loss of generality assume $\lVert a\rVert=1$.
    Since $\mathcal{L}^{\infty}_{F}$ is a norm-closed subalgebra of $\mathcal{L}^{\infty}$, we will show that $a$ can be approximated arbitrarily closely by elements of $\mathcal{L}^{\infty}_{F}$. 
    Since $\mathcal{L}^{\infty}$ is the inductive limit of the $\mathcal{L}^{\infty}_{I}$ for intervals $I$ containing $F$, for any $\epsilon > 0$ we can find an interval $J$ containing $F$ and $b \in \mathcal{L}^{\infty}_{J}$ such that $\lVert a - b \rVert < \epsilon$.  We can assume, without loss of generality, that $\lVert b \rVert \leq 1$ by rescaling.

    By expanding and rearranging the commutator as follows
    \begin{align*} 
    [b, y] &= by -yb \\
           &= (b-a+a)y - y(b-a+a) \\
           &= (b-a)y + ay - y(b-a) - ya \\
           &= (b-a)y - y(b-a) + ay - ya \\
           &= [b-a, y] + [a,y] 
    \end{align*}
    we then compute that for any $y \in \mathcal{L}^{\infty}_{F^{c}}$
    \begin{align*}
        \left\lVert [b,y] \right\rVert &= \lVert [b-a,y] + [a,y] \rVert \\
        &\leq \lVert[b-a, y] \rVert + \lVert[a,y]\rVert \\
        &= \lVert [b-a, y] \rVert \\
        &= \lVert (b-a)y + y(a-b) \rVert \\
        &\leq \lVert b- a \rVert \lVert y \rVert + \lVert y \rVert \lVert a - b\rVert \\
        &\leq 2\epsilon \lVert y \rVert
    \end{align*}

    Since we assume $J \supseteq F$ we can write $\mathcal{L}^{\infty}_{J}$ as $\mathcal{B}((\ell^{2}(\mathbbm{N}))^{\bar{\otimes} J-I}) \bar{\otimes} \mathcal{B}((\ell^{2}(\mathbbm{N}))^{\bar{\otimes} I})$ and then since the previous bound holds for all $y \in\mathcal{L}^{\infty}_{F^{c}}$ and therefore for all $y \in \mathcal{L}^{\infty}_{J-F}$ we may apply the main result of \cite{NachtergaeleScholzWerner2013Local} to conclude that there exists an element $a' \in 1 \otimes \mathcal{B}((\ell^{2}(\mathbbm{N}))^{\otimes I})$ such that $\lVert b-a' \rVert \leq 2\epsilon$. We then have $\lVert a - a' \rVert \leq \lVert a - b \rVert + \lVert b - a' \rVert \leq 3\epsilon$. 
\end{proof}

While it seems natural that algebras localized in disjoint regions should commute in great generality, this very strong form of Haag duality has important consequences for the structure of a quasi-local algebra as shown in the following theorem.

\begin{thm}\label{trivialdhr} There is an equivalence of braided W*-tensor categories $\text{DHR}(\mathcal{L}^{\infty})\cong \text{Hilb}$.
\end{thm}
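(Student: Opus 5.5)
Since every local algebra $\mathcal{L}^{\infty}_{I}$ is a type I factor, I expect $\text{DHR}(\mathcal{L}^{\infty})$ to be trivial by the standard argument for spin chains, adapted to the W*-setting. The plan is to show directly that every DHR bimodule is a finite direct sum of copies of the trivial bimodule ${}_{A}A_{A}$ with $A=\mathcal{L}^{\infty}$. From that, the functor $\text{Hilb}_{f.d.}\rightarrow \text{DHR}(\mathcal{L}^{\infty})$, $H\mapsto H\otimes A$, will be essentially surjective, and the Hilb completion gives the W*-statement. Full faithfulness should follow from irreducibility: $\text{End}({}_{A}A_{A})=Z(A)=\mathbbm{C}$, because $A$ has trivial center by the strong Haag duality just proved (take $F=\varnothing$). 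Braided compatibility should then be automatic. The half-braiding of a sum of trivial bimodules is determined on projective bases localized in disjoint regions, where the braiding is the flip $H\otimes K\cong K\otimes H$.

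First, let $X$ be a DHR bimodule with a projective basis $\{b_i\}_{i=1}^{n}$ localized in an interval $I$. I will pass to the amplimorphism picture. Set $P=(\langle b_i|b_j\rangle)_{i,j}\in M_n(A)$ and $\pi(a)=(\langle b_i|ab_j\rangle)_{i,j}$, so that $X\cong PA^n$ with left action $\pi$. Because each $b_i$ commutes with $A_{I^c}$, the entries $\langle b_i|b_j\rangle$ and $\langle b_i|ab_j\rangle$ commute with $A_{I^c}$ whenever $a\in A_{I^c}$. By the preceding theorem (strong Haag duality with $F=I$), this gives $P\in M_n(\mathcal{L}^{\infty}_I)$ and $\pi(a)=Pa$ for $a\in A_{I^c}$. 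For $a\in \mathcal{L}^{\infty}_I$, the entries $\pi(a)_{ij}$ commute with $A_{I^c}$, so the restriction $\pi|_{\mathcal{L}^{\infty}_I}$ is a unital $*$-homomorphism into $PM_n(\mathcal{L}^{\infty}_I)P$. By local normality, using the automatic normality theorem cited earlier (the hypotheses hold because $\mathcal{L}^{\infty}_I$ is a factor), this homomorphism is normal.

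Second, I will use the type I structure. Write $\mathcal{L}^{\infty}_I=\mathcal{B}(H)$. Since $\pi|_{\mathcal{B}(H)}$ is a normal unital homomorphism into $PM_n(\mathcal{B}(H))P\cong \mathcal{B}(P(H^n))$, we get $P(H^n)\cong H\otimes K$ with $\pi(x)=x\otimes 1_K$, where $K$ is the multiplicity space. The main obstacle is to show that $K$ is finite-dimensional and to realize the unitary inside $M_{n\times m}(\mathcal{L}^{\infty}_I)$. A naive dimension count fails because $H$ is infinite-dimensional. Instead I will compare the relative commutant: $\pi(\mathcal{B}(H))'\cap PM_n(\mathcal{B}(H))P\cong \mathcal{B}(K)$, and this algebra is contained in $\text{End}(X)$. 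The previous lemma shows $\text{End}(X)$ is a finite-dimensional-type W*-algebra, being a corner of $M_n$ of something commuting with all of $A$; arguing via Haag duality again, it is a subalgebra of $PM_n(\mathbbm{C})P$. Hence $\dim K\le n$.

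Third, I will choose an isometry $V\in M_{n\times m}(\mathcal{B}(H))$, with $m=\dim K$, such that $VV^*=P$ and $V^*\pi(x)V=x\otimes 1_m$ for $x\in\mathcal{B}(H)$. Such a $V$ exists by the standard structure of normal representations of $\mathcal{B}(H)$. On $A_{I^c}$ the relation holds trivially, because $\pi(a)=Pa$ and $V$ has entries in $\mathcal{L}^{\infty}_I$, which commute with $a$. By density, $V$ intertwines $\pi$ with the amplification $a\mapsto a\otimes 1_m$ on all of $A$, and therefore $X\cong A^{\oplus m}$ as bimodules.
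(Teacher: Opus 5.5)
There is a genuine gap. Your first two steps are fine, and they are essentially the paper's argument. Strong Haag duality reduces $X$ to a normal correspondence over the type I factor $\mathcal{L}^{\infty}_{I}=\mathcal{B}(H)$, with $H=\ell^{2}(\mathbbm{N})^{\otimes I}$, and that correspondence is determined by a multiplicity space $K$.

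The gap is the claim $\dim K\le n$. Haag duality only places the entries of a bimodule endomorphism in $\mathcal{L}^{\infty}_{I}$, not in $\mathbbm{C}$. The W*-lemma you cite also only says $\text{End}(X)$ is a W*-algebra, not a finite-dimensional one. What you actually get is $\text{End}(X)\cong \pi(\mathcal{L}^{\infty}_{I})'\cap PM_{n}(\mathcal{L}^{\infty}_{I})P\cong \mathcal{B}(K)$. Since $H$ is infinite-dimensional, $P(H^{n})\cong H\otimes K$ is compatible with any separable $K$.

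Here is a concrete counterexample. Take a unitary $U:\ell^{2}(\mathbbm{N})\to\ell^{2}(\mathbbm{N})\otimes K$ with $\dim K=\infty$, and let $\rho(x)=U^{*}(x\otimes 1_{K})U$ at one site, extended by $\rho\overline{\otimes}\mathrm{id}$ on each $\mathcal{L}^{\infty}_{J}$. Then ${}_{\rho}\mathcal{L}^{\infty}$ has the single localized basis element $1$, so $n=1$. But $\text{End}({}_{\rho}\mathcal{L}^{\infty})=\rho(\mathcal{L}^{\infty})'\cap\mathcal{L}^{\infty}\cong\mathcal{B}(K)$ is infinite-dimensional.

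So the intermediate claim that every DHR bimodule is a finite direct sum of trivial ones is false. Your last move cannot rescue it either. If that claim held, $\text{DHR}(\mathcal{L}^{\infty})$ would be $\text{Hilb}_{f.d.}$. You cannot then pass to ``the Hilb completion'' afterwards, because the theorem asserts that $\text{DHR}(\mathcal{L}^{\infty})$ itself already contains the objects $\mathcal{L}^{\infty}\otimes K$ with $\dim K=\infty$.

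The missing idea, which is the second half of the paper's proof, is that the infinite-dimensional local Hilbert space absorbs infinite multiplicity. Since $H\otimes K\cong H$, the correspondence $\mathcal{L}^{\infty}\otimes K$ for every separable $K$ is realized as a twisted correspondence ${}_{\rho}\mathcal{L}^{\infty}$ with a one-element localized basis.

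Your argument can be repaired along these lines:
\begin{enumerate}
\item Keep steps one and two, and drop the bound on $K$.
\item For $X\neq 0$, take the unitary $W:H\otimes K\to P(H^{n})$ from step two and a unitary $U_{0}:H\to H\otimes K$. Set $V:=WU_{0}\in M_{n\times 1}(\mathcal{L}^{\infty}_{I})$. It satisfies $V^{*}V=1$, $VV^{*}=P$, and $V^{*}\pi(x)V=U_{0}^{*}(x\otimes 1_{K})U_{0}$ for $x\in\mathcal{L}^{\infty}_{I}$.
\item Then $\rho:=V^{*}\pi(\cdot)V$ is a unital endomorphism of $\mathcal{L}^{\infty}$ localized in $I$, and $\xi\mapsto V^{*}\xi$ gives $X\cong{}_{\rho}\mathcal{L}^{\infty}$.
\item Haag duality identifies $\text{Hom}({}_{\rho_{K}}\mathcal{L}^{\infty},{}_{\rho_{K'}}\mathcal{L}^{\infty})$ with $\mathcal{B}(K,K')$.
\end{enumerate}
Together with the converse construction for every separable $K$, this gives $\text{Hilb}$ rather than $\text{Hilb}_{f.d.}$.
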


\begin{proof}
Let $X\in \text{DHR}_{0}(\mathcal{L}^{\infty})$, and choose $R$ to be the maximum of the localization lengths of $X$. Let $I$ be an interval with $|I|\ge R$, and let $\{b_{i}\}^{n}_{i=1}$ be a localizable basis. Since $\mathcal{L}^{\infty}$ satisfies Haag duality, then $X_{I}:=\{x\in X\ :\ [x,\mathcal{L}^{\infty}_{I^{c}}]=0\}=\text{span}\{b_{i}\mathcal{L}^{\infty}_{I}\}$ is a normal $\mathcal{L}^{\infty}_{I}\coloneq  \mathcal{B}((\ell^{2}(\mathbbm{N}))^{\otimes I})$-correspondence. But the only normal correspondences of $\mathcal{B}((\ell^{2}(\mathbbm{N}))^{\otimes I})$ are of the form $\mathcal{B}((\ell^{2}(\mathbbm{N}))^{\otimes I})\otimes H$, where $H$ is a separable multiplicity space. Then the tensor product decomposition gives $X\cong \mathcal{L}^{\infty}\otimes H$, where $H$ is the separable multiplicity.

Now suppose that $H$ is a separable Hilbert space, and consider $\mathcal{L}^{\infty}\otimes H$. To see this is realized as a DHR-bimodule, pick a unitary $U:\ell^{2}(\mathbbm{N})\cong \ell^{2}(\mathbbm{N})\otimes H$. Then consider the endomorphism of $\mathcal{B}(\ell^{2}(\mathbbm{N}))$ defined by $\rho(a):=U^{*}(a\otimes 1_{H})U$. Then the $\rho$-twisted correspondence $_{\rho}\mathcal{B}(K)$ is isomorphic to $\mathcal{B}(\ell^{2}(\mathbbm{N}))\otimes H$, but has a single basis element $1_{\mathcal{B}(\ell^{2}(\mathbbm{N}))}$. Extending to all of $\mathcal{L}^{\infty}$ gives the desired result.


\end{proof}

\subsection{Stabilization of quasi-local algebras}

Now let $A$ be a quasi-local algebra over $\mathbbm{Z}$.  We define the stabilization of $A$ (denoted by $A^{\infty}$) as follows:

$$A^{\infty}_{I}:= A_{I}\overline{\otimes} \mathcal{B}^{\infty}_{I}=A_{I}\overline{\otimes} \mathcal{L}(\ell^{2}(\mathbb{N}))^{\overline{\otimes} I}.$$

\noindent Taking the C*-algebraic colimit of the $A^{\infty}_{I}$ yields a quasi-local C*-algebra $A^{\infty}$, which we call the stabilization. A quasi-local algebra $A$ is called \textit{stable} if $A^{\infty}$ is bounded spread isomorphic to $A$.

Clearly the stabilization of any quasi-local algebra is stable, hence the terminology. From a condensed matter perspective, we are adding an infinite-dimensional space of ancilla at every site. At this level of generality, it is not yet clear that the stabilization of a quasi-local algebra satisfies Haag duality in general. This is an interesting question worth pursuing in its own right, though there is a simple way to address this for a class of quasi-local algebras that includes fusion spin chains, which are the main objects of study in this paper.

If $\phi$ is a faithful state on a C*-algebra, then we can define an inner product on $A$ via $\langle a, b\rangle_{\phi} \coloneqq \phi(b*a)$. Completing $A$ with respect to this inner product yields the space $L^{2}(A,\phi)$. 

\begin{defn}
    A satisfies Haag duality with respect to a faithful state $\phi$ if for $\widetilde{A} = A''$ in $L^{2}(A,\phi)$ we have $A_{I^{c}}' \cap \widetilde{A} = A_{I}$. 
\end{defn}
\begin{thm}
Suppose $A$ is a quasi-local algebra which satisfies Haag duality with respect to a faithful state $\phi$. Then $A^{\infty}$ satisfies Haag duality. 
\end{thm}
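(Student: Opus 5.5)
We will prove that $A^{\infty}$ satisfies algebraic Haag duality. The method is to pass to a concrete von Neumann algebraic setting where commutants of tensor products can be computed. Then we return to the C*-level by the same approximation argument used above for $\mathcal{L}^{\infty}$.

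\emph{Setting up the representation.} Fix a faithful normal state $\omega$ on $\mathcal{B}(\ell^{2}(\mathbbm{N}))$, say a diagonal density matrix with full support. Form the infinite product state $\omega^{\otimes \mathbbm{Z}}$ on $\mathcal{L}^{\infty}$; it is faithful on each local algebra. Then take the state $\psi:=\phi\otimes\omega^{\otimes\mathbbm{Z}}$ on $A^{\infty}$ and its GNS space $L^{2}(A,\phi)\otimes L^{2}(\mathcal{L}^{\infty},\omega^{\otimes\mathbbm{Z}})$.

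In this representation the weak closure $\widetilde{A^{\infty}}$ is $\widetilde{A}\,\overline{\otimes}\,\widetilde{\mathcal{L}^{\infty}}$. For each interval $I$, the algebra $A^{\infty}_{I^{c}}$ generates $\widetilde{A_{I^{c}}}\,\overline{\otimes}\,\widetilde{\mathcal{L}^{\infty}_{I^{c}}}$. Here $\widetilde{\mathcal{L}^{\infty}_{I^{c}}}=\mathcal{B}(\ell^{2}(\mathbbm{N})^{\otimes I})'\cap\widetilde{\mathcal{L}^{\infty}}$; this is a type I factor split, since $\widetilde{\mathcal{L}^{\infty}}\cong \mathcal{L}^{\infty}_{I}\overline{\otimes}\widetilde{\mathcal{L}^{\infty}_{I^{c}}}$.

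\emph{Computing the relative commutant.} We apply the commutation theorem for tensor products in its relative form. Let $M_{1}\subseteq N_{1}$ and $M_{2}\subseteq N_{2}$ be inclusions with $N_{2}=P\,\overline{\otimes}\,M_{2}$ for a type I factor $P$. Then
\[
(M_{1}\overline{\otimes} M_{2})'\cap (N_{1}\overline{\otimes}N_{2}) = (M_{1}'\cap N_{1})\,\overline{\otimes}\,(M_{2}'\cap N_{2}).
\]
This holds because $N_{2}$ splits as $P\overline{\otimes}M_{2}$, so that $M_{2}'\cap N_{2}=P\overline{\otimes}Z(M_{2})$. One can reduce to the slice-map argument of Tomita's commutation theorem: slice out $P$ using matrix units, leaving the relative commutant of $M_{1}\overline{\otimes}M_{2}$ inside $N_{1}\overline{\otimes}M_{2}$.

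Take $M_{1}=\widetilde{A_{I^{c}}}$, $N_{1}=\widetilde{A}$, $M_{2}=\widetilde{\mathcal{L}^{\infty}_{I^{c}}}$, and $N_{2}=\widetilde{\mathcal{L}^{\infty}}$. The $\mathcal{L}$-factor $M_{2}$ is itself a factor: it is an infinite tensor product of type I factors, so $Z(M_{2})=\mathbbm{C}$. By hypothesis the $A$-factor relative commutant is $A_{I}$. We conclude
\[
(A^{\infty}_{I^{c}})'\cap\widetilde{A^{\infty}}=A_{I}\,\overline{\otimes}\,\mathcal{L}^{\infty}_{I}=A^{\infty}_{I}.
\]
This gives Haag duality for $A^{\infty}$ with respect to the faithful state $\psi$.

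\emph{Descending to the C*-level.} The C*-algebra $A^{\infty}$ embeds faithfully in $\widetilde{A^{\infty}}$, since $\psi$ is faithful. Hence any $x\in A^{\infty}$ commuting with $A^{\infty}_{I^{c}}$ lies in $(A^{\infty}_{I^{c}})'\cap\widetilde{A^{\infty}}=A^{\infty}_{I}$. This is the algebraic Haag duality statement, with $R=0$.

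\emph{Main obstacles.} The main obstacle is the relative commutation formula in the second step. The danger is that relative commutants of tensor products need not factor without a splitting assumption. Here the splitting on the $\mathcal{L}$ side, coming from the type I factor $\mathcal{B}(\ell^{2}(\mathbbm{N})^{\otimes I})$, is exactly what makes it work. I would prove it directly using a system of matrix units in $P$ and slice maps.

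A secondary technical point concerns faithfulness. One must check that the product state is faithful on the C*-algebra $\mathcal{L}^{\infty}$, or else quotient by the kernel of the representation. For the descent step we only need that $A^{\infty}\to\widetilde{A^{\infty}}$ is injective. This holds since $A^{\infty}$ is simple on each $A^{\infty}_{I}$-component, or alternatively via the injectivity of the local representations together with the inductive limit norm.
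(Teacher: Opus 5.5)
Your argument is correct, and it takes a genuinely different route from the paper.

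\textbf{The paper's route.} The paper works at the C*-level with $A^{\infty}=A\otimes\mathcal{L}^{\infty}$.
\begin{enumerate}
\item It slices an element $x$ commuting with $A^{\infty}_{I^{c}}$ along the $A$-leg.
\item It uses the previously proved algebraic Haag duality of $\mathcal{L}^{\infty}$ (from the Nachtergaele--Scholz--Werner theorem) to place those slices in $\mathcal{L}^{\infty}_{I}$.
\item It invokes nuclearity of $A$ (Wassermann's property S) to get $x\in A\otimes\mathcal{L}^{\infty}_{I}$.
\item Only then does it pass to $L^{2}(A,\phi)$ to apply the hypothesis on the $A$-leg, finishing with finite-dimensionality of $A_{I}$.
\end{enumerate}

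\textbf{Your route.} You go to the von Neumann level at once, in the product representation of $\psi=\phi\otimes\omega^{\otimes\mathbbm{Z}}$. You compute the full relative commutant with Tomita's commutation theorem. You evaluate the $\mathcal{L}$-leg using the splitting $\widetilde{\mathcal{L}^{\infty}}=\mathcal{L}^{\infty}_{I}\overline{\otimes}\widetilde{\mathcal{L}^{\infty}_{I^{c}}}$, where $\widetilde{\mathcal{L}^{\infty}_{I^{c}}}$ is an ITPFI factor. You then descend by injectivity.

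\textbf{What each buys.} Your route needs neither nuclearity nor finite-dimensional $A_{I}$, so it covers genuinely W*-local algebras. It never uses the NSW-based theorem; the case $A=\mathbbm{C}$ of your argument reproves Haag duality of $\mathcal{L}^{\infty}$ for intervals. It also gives the stronger conclusion that $A^{\infty}$ satisfies Haag duality in the representation of $\psi$, which is the form of the hypothesis and so can be iterated. The paper's route avoids building any global representation of $A^{\infty}$ and reuses the C*-level result already proved. The cost is that it is tailored to the AF case needed for fusion spin chains.

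\textbf{Small corrections.}
\begin{enumerate}
\item The danger you flag is not real for von Neumann algebras. Tomita's commutation theorem together with the Fubini-product identity $\mathcal{F}(M,N)=M\overline{\otimes}N$ gives $(M_{1}\overline{\otimes}M_{2})'\cap(N_{1}\overline{\otimes}N_{2})=(M_{1}'\cap N_{1})\overline{\otimes}(M_{2}'\cap N_{2})$ for arbitrary von Neumann algebras. The splitting and factoriality are needed only to evaluate $M_{2}'\cap N_{2}=\mathcal{L}^{\infty}_{I}$.
\item In the descent, do not appeal to faithfulness of $\psi$ on $A^{\infty}$. Faithfulness on each $A^{\infty}_{J}$ does not by itself give this, and you have not checked it. Also drop the ``simple on each component'' remark. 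Your alternative is the right one: the representation is injective, hence isometric, on each $A^{\infty}_{J}$, so it is isometric on the dense union and therefore on $A^{\infty}$. No approximation argument is needed, despite your opening sentence.
\item To represent $A_{I}\overline{\otimes}\mathcal{L}^{\infty}_{I}$ you need $\pi_{\phi}|_{A_{I}}$ to be normal. This follows from the hypothesis: $\pi_{\phi}(A_{I})=A_{I^{c}}'\cap\widetilde{A}$ is a von Neumann algebra, and a $*$-isomorphism between von Neumann algebras is automatically normal.
\end{enumerate}
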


\begin{proof}
Take $x \in Z_{A^{\infty}}(A^{\infty}_{I^{c}})$ and let $R_{\phi}$ be the right slice map associated with $\phi$ (see \cite{SliceMapProblem} for a good introduction). We have $R_{\phi}(x) \in (\mathcal{L}^{\infty}_{I^{c}})^{\prime} \cap \mathcal{L}^{\infty} = Z_{\mathcal{L}^{\infty}}(\mathcal{L}^{\infty}_{I^{c}})$. Since $\mathcal{L}^{\infty}$ satisfies Haag Duality this implies $R_{\phi}(x) \in \mathcal{L}^{\infty}_{I}$. Since $A$ is nuclear it has Wasserman's property S, which means $(A,\mathcal{L}^{\infty},\mathcal{L}^{\infty}_{I})$ has the slice property \cite{PathologyInIdealSpace}. Thus $x \in A \otimes \mathcal{L}^{\infty}_{I}$. For any state $\psi \in (\mathcal{L}^{\infty})_{*}$ we have that $L_{\psi}(x) \in A_{I^{c}}'$. Consider the image of $x$ in $L^{2}(A,\phi)$, which lives in $\widetilde{A}\overline{\otimes} \mathcal{L}^{\infty}_{I}$. By assumption then $x \in A_{I} \overline{\otimes} \mathcal{L}^{\infty}_{I}$. Using uniqueness of tensor products in the finite case we conclude then that $x \in A_{I}^{\infty}$.

\end{proof}

\begin{cor}
For any fusion spin chain $A(\mathcal{C},X)$, $A(\mathcal{C},X)^{\infty}$ satisfies Haag duality.
\end{cor}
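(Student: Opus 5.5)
The corollary follows from the preceding theorem once we produce a faithful state $\phi$ on $A:=A(\mathcal{C},X)$ for which $A$ satisfies Haag duality in the sense of the definition above. Note that $A$ is locally finite-dimensional, hence nuclear, so the slice-map step in that proof applies.

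\textbf{Choice of state.} I would take $\phi$ to be the canonical trace. Concretely:
\begin{itemize}
\item On each $A_I=\text{End}_{\mathcal{C}}(X^{\otimes I})$, use the normalized categorical (Markov) trace $\text{tr}_I(f)=\text{Tr}(f)/d(X)^{|I|}$, where $\text{Tr}$ is the unitary spherical trace. In the multi-fusion case, weight the summands $\mathbbm{1}_i$ by the Perron--Frobenius dimension vector so that the traces are Markov.
\item Strong tensor generation and indecomposability make the inclusion data primitive. Then the Markov traces are compatible with the inclusions $\pi_{I,J}$ (tensoring with $1_X$ multiplies $\text{Tr}$ by $d(X)$), so they glue to a tracial state $\phi$ on $A$.
\item Faithfulness: $\phi$ is faithful on each finite-dimensional $A_I$, since all weights are positive. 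Because $\phi$ is a trace, its left kernel is a closed two-sided ideal missing the dense union $\bigcup_I A_I$. Alternatively, $A$ is simple, because the Bratteli diagram is stationary and primitive.
\end{itemize}

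\textbf{Haag duality in the trace representation.} Let $\widetilde{A}$ be the hyperfinite factor generated by $A$ on $L^2(A,\phi)$. We must show $A_{I^c}'\cap\widetilde{A}=A_I$, splitting $\mathbbm{Z}\setminus I$ into the left half $L$ and the right half $R$.
\begin{itemize}
\item Use the $\phi$-preserving conditional expectations $E_J:\widetilde{A}\to A_J$ for intervals $J\supseteq I$.
\item If $x$ commutes with $A_{I^c}$, then $x_J:=E_J(x)$ lies in the relative commutant $A_{J\setminus I}'\cap A_J$, since $E_J$ is $A_J$-bimodular.
\item This relative commutant is computed categorically: commuting with $\text{End}(X^{\otimes L_J})\otimes 1\otimes\text{End}(X^{\otimes R_J})$ forces $x_J$ to lie in the image of $A_I$. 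This is the algebraic Haag duality for fusion spin chains from \cite{jones2024dhr}, valid once $|L_J|,|R_J|\ge n$.
\item Hence $x_J\in A_I$ for all large $J$. Since $x_J\to x$ in $L^2$ (martingale convergence), we get $x\in A_I$, as $A_I$ is finite-dimensional and therefore closed.
\end{itemize}

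\textbf{Main obstacle.} The hardest step is the relative commutant computation. Algebraic Haag duality from \cite{jones2024dhr} is stated for the C*-algebra $A$, not for finite windows, so I would need a version that is uniform in $J$.

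My first approach would be to use the cut-down tower $A_{J\setminus I}\subseteq A_J$ and Frobenius reciprocity. With $L_J$ and $R_J$ of length at least $n$, the strong generator absorbs every simple, so that
\[
\text{End}(X^{\otimes L}\otimes Y\otimes X^{\otimes R})\cap \bigl(\text{End}(X^{\otimes L})\otimes 1\otimes \text{End}(X^{\otimes R})\bigr)'=1\otimes\text{End}(Y)\otimes 1.
\]
Here $Y=X^{\otimes I}$, and the identity holds because $\text{End}(X^{\otimes L})$ acts with commutant spanned by the simple summands. If this fails only up to some enlargement, one gets $A_{I^{+R'}}$ rather than $A_I$. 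In that case, Haag duality for $A^\infty$ would hold only in the weak sense, which would still suffice for the paper's later purposes.
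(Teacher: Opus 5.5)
Your outer structure matches the paper's: take the normalized categorical (Markov) trace, establish Haag duality of $A(\mathcal{C},X)$ in its tracial GNS representation, and apply the preceding theorem. The gap is in how you establish that duality. The finite-window identity you display is false as soon as $\mathcal{C}$ has more than one simple object.

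The relative commutant of $\text{End}(X^{\otimes L})\otimes 1\otimes \text{End}(X^{\otimes R})$ in $\text{End}(X^{\otimes L}\otimes Y\otimes X^{\otimes R})$ is $\bigoplus_{a,b}\text{End}(a\otimes Y\otimes b)$, with $a,b$ running over the simple summands of $X^{\otimes L}$ and $X^{\otimes R}$. In particular it always contains the centers $Z(A_{L_J})$ and $Z(A_{R_J})$. A nontrivial central projection $P_a\otimes 1_Y\otimes 1$ cannot equal $1\otimes g\otimes 1$, as a right partial trace shows.

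Concretely, take $\mathcal{C}=\text{Vec}_{\mathbb{Z}_2}$ and $X=\mathbbm{1}\oplus g$. Then $A_J$ is the parity-even part of $M_2(\mathbbm{C})^{\otimes J}$, and the window parity $\prod_{j\in L_J}\sigma^z_j$ commutes with $A_{J\setminus I}$ but does not lie in $A_I$. Your own justification (``commutant spanned by the simple summands'') is exactly what produces these extra elements. They are spread over all of $L_J$ and $R_J$, so the failure is not of bounded-enlargement type. Your fallback to $A_{I^{+R'}}$ therefore does not rescue the argument; the preceding theorem needs exact duality $A_{I^c}'\cap\widetilde{A}=A_I$ in any case.

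The missing idea is that the single-window information ``$x_J$ commutes with $A_{J\setminus I}$'' can never suffice, because that set is strictly larger than $A_I$. A finite-window form of algebraic Haag duality needs a buffer: an element of $A_J$ commuting with $A_{K\setminus I}$ for a strictly larger window $K$. But $E_J(x)$ is not known to commute with anything beyond $A_{J\setminus I}$, since $E_J$ is only $A_J$-bimodular.

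You have to exploit the compatibility $x_J=E_J(x_{J'})$ for all $J'\supseteq J$. The trace-preserving expectations act on the extra components through a primitive Markov transition built from the fusion rules of $X$. In the $\mathbb{Z}_2$ example longer parity strings are killed outright, since $\text{tr}(\sigma^z)=0$. For Fibonacci with $X=\tau$, however, $E_J$ sends the trivial-channel projection of $L_{J'}$, with $|L_{J'}|=|L_J|+1$, to $\phi^{-2}$ times the $\tau$-channel projection of $L_J$. This is nonzero and not in $A_I$, so these components are only damped at each step. It is the mixing over arbitrarily long windows that forces the $L^2$-limit into $A_I$.

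This is Ocneanu's compactness argument. The paper invokes it by citation rather than reproving it. With it substituted for your relative-commutant step, the rest of your outline is the paper's proof.
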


\begin{proof}
Standard subfactor-theoretic results show that for any fusion spin chain, there is a canonical faithful tracial state on $A(\mathcal{C},X)$ given on the local algebras by the normalized categorical trace which satisfies Haag duality (e.g. Ocneanu compactness) \cite{Bischoff2025Distortion, jones2025holographybulkboundarylocaltopological}, and for the most general version \cite{DasGhoshGhoshJones2025UnitaryConnections}.
\end{proof}

\section{Stable equivalence for fusion spin chains}

This section contains the main results of the paper. Let $\mathcal{C}$ be an indecomposable unitary multi-fusion category and let $\text{Hilb}(\mathcal{C})$ be the W*-tensor category introduced earlier. There is a distinguished object in $\text{Hilb}(\mathcal{C})$

$$W:=\bigoplus_{b\in \text{Irr}(\mathcal{C})} \ell^{2}(\mathbbm{N})\boxtimes b\ ,$$

\noindent which has the property that $W\otimes W\cong W$. It is also characterized by the property that $W\oplus U\cong W$ for any $U\in \text{Hilb}(\mathcal{C})$. If $\mathcal{C}$ is fusion, this object is also absorbing under $\otimes$.

Now, suppose $\mathcal{C}$ and $\mathcal{D}$ are indecomposable unitary multifusion categories, and that there is an invertible $\mathcal{D}$-$\mathcal{C}$ bimodule category $\mathcal{M}$. We construct the \textit{Morita context} for $\mathcal{D}$ and $\mathcal{C}$ by taking the right $\mathcal{C}$-module category  $\widetilde{\mathcal{M}}:=\mathcal{C}\oplus \mathcal{M}$, and building the multifusion category

$$\mathcal{E}:=\text{End}_{\mathcal{C}}(\widetilde{M}).$$

From the direct sum decomposition of module categories $\widetilde{\mathcal{M}}:=\mathcal{C}\oplus \mathcal{M}$, we see that $\mathcal{E}$ has the matrix decomposition

$$\mathcal{E}\cong \begin{bmatrix}
\text{End}_{\mathcal{C}}(\mathcal{C}) & \text{Fun}_{ \mathcal{C}}(\mathcal{M}, \mathcal{C}) \\
\text{Fun}_{ \mathcal{C}}(\mathcal{C}, \mathcal{M}) & \text{End}_{\mathcal{C}}(\mathcal{M}) 
\end{bmatrix}\cong \begin{bmatrix}
\mathcal{C} & \overline{\mathcal{M}} \\
\mathcal{M} & \mathcal{D} 
\end{bmatrix}$$

\bigskip

\noindent In particular, we see that $\mathcal{E}$ is an indecomposable multifusion category with diagonal components $\mathcal{E}_{11}\cong \mathcal{C}$ and $\mathcal{E}_{22}\cong \mathcal{D}$, with $\mathcal{E}_{21}\cong \mathcal{M}$ as a C*-category, and $\mathcal{E}_{12}$ equivalent to the inverse invertible bimodule category $\overline{\mathcal{M}}$. We note that \textit{every} indecomposable $2\times 2$ multifusion category has this form.

Now, for $1\le i,j\le 2$, set

$$W_{ij}:=\bigoplus_{b\in \text{Irr}(\mathcal{E}_{ij})} \ell^{2}(\mathbbm{N})\boxtimes b$$

\noindent Then since any simple object $a\in \mathcal{E}_{11}$ appears as a subalgebra of $b\otimes c$ for some $b\in \mathcal{E}_{12}$ and $c\in \mathcal{E}_{21}$ (for example, $a$ is a summand of $(a\otimes m)\otimes m^{*}$ for any $m\in \mathcal{E}_{12}$), we can choose isomorphisms 

$$\Phi: W_{11}\cong W_{12}\otimes W_{21}$$

and similarly

$$\Psi:W_{21}\otimes W_{12}\cong W_{22}.$$

\noindent We are now ready to prove one of our main technical results.

\begin{thm}\label{thm:MoritaEquivimpliesStableEquiv} Suppose $\mathcal{C}$ and $\mathcal{D}$ are Morita equivalent unitary indecomposable multifusion categories (or equivalently, $\mathcal{Z}(\mathcal{C})\cong \mathcal{Z}(\mathcal{D})$ as unitary braided tensor categories). Let $X\in \mathcal{C}$ and $Y\in \mathcal{D}$ be strong tensor generators. Then the stabilized fusion spin chains $A(\mathcal{C},X)^{\infty}$ and $A(\mathcal{D},Y)^{\infty}$ are bounded spread isomorphic. In particular, any anyon chain is stably equivalent to $\mathcal{L}^{\infty}$.
\end{thm}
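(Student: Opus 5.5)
The plan is to pass through the category $\text{Hilb}(\mathcal{E})$ of the Morita context $\mathcal{E}$ and use the absorbing objects $W_{ij}$ to rewrite both stabilized chains as chains built from a common ``doubled'' object. I will carry this out in three steps.

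\emph{Step 1: identify the stabilization with a chain over $\text{Hilb}$.} For any fusion spin chain, I first observe that
\[
A(\mathcal{C},X)^{\infty}_{I}=\text{End}_{\mathcal{C}}(X^{\otimes I})\overline{\otimes}\mathcal{B}(\ell^{2}(\mathbbm{N})^{\otimes I}).
\]
This is canonically $\text{End}_{\text{Hilb}(\mathcal{C})}((\ell^{2}(\mathbbm{N})\boxtimes X)^{\otimes I})$, compatibly with the inclusions $\pi_{I,J}$. Hence $A(\mathcal{C},X)^{\infty}=A(\text{Hilb}(\mathcal{C}),\ell^{2}\boxtimes X)$ as quasi-local algebras.

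Next I show that, up to bounded spread isomorphism, $\ell^{2}\boxtimes X$ can be replaced by $W=W_{11}$. Since $X$ is a strong tensor generator, $(\ell^{2}\boxtimes X)^{\otimes n}$ contains every simple with infinite multiplicity, so $(\ell^{2}\boxtimes X)^{\otimes n}\cong W$. Moreover $\ell^{2}\boxtimes X\otimes W\cong W$ in the fusion case; in the multifusion case the same holds after restricting to the appropriate corner, which I expect to require some care. From this I build a bounded spread isomorphism by the standard trick: pick unitaries $u_k:(\ell^{2}\boxtimes X)^{\otimes [kn,(k+1)n)}\cong W$ and conjugate blockwise. This compares the chain for $\ell^2\boxtimes X$, coarse-grained by $n$, with the chain for $W$ at sites $n\mathbbm{Z}$. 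To return to the uncoarse-grained lattice, I interleave identity factors using $W\otimes W\cong W$, so that $A(\text{Hilb}(\mathcal{C}),W)$ is bounded spread isomorphic to its own coarse-graining. The upshot is $A(\mathcal{C},X)^{\infty}\simeq A(\text{Hilb}(\mathcal{E}_{11}),W_{11})$, and similarly $A(\mathcal{D},Y)^{\infty}\simeq A(\text{Hilb}(\mathcal{E}_{22}),W_{22})$.

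\emph{Step 2: the shift argument.} Inside $\text{Hilb}(\mathcal{E})$, I put at each site the alternating pair $W_{12}\otimes W_{21}$, a $1\to1$ object. By the isomorphism $\Phi$ this is isomorphic to $W_{11}$, so applying $\Phi$ sitewise gives a local (spread $0$) isomorphism $A(\text{Hilb}(\mathcal{E}_{11}),W_{11})\cong A(\text{Hilb}(\mathcal{E}),W_{12}\otimes W_{21})$, where the latter is viewed in the corner $\mathbbm{1}_1$. Both local algebras are endomorphism algebras of the same object $(W_{12}\otimes W_{21})^{\otimes I}$, since $\text{End}_{\mathcal{E}}$ of a $1\to1$ object agrees with $\text{End}$ in $\mathcal{E}_{11}$. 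Regrouping the strands as $W_{12}\otimes(W_{21}\otimes W_{12})^{\otimes(I-1)}\otimes W_{21}$ and applying $\Psi$ to the interior pairs gives an isomorphism onto the chain of $W_{22}$ shifted by half a site. The boundary strands $W_{12}$ and $W_{21}$ are absorbed by enlarging intervals by one site, which gives spread $1$. This is the Morita-context ``half-shift'' familiar from subfactor theory.

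\emph{Main obstacle.} I expect the main difficulty to be making the regrouping genuinely compatible with the inductive-limit inclusions. The local algebras $\text{End}(W_{12}\otimes W_{22}^{\otimes(I-1)}\otimes W_{21})$ must be sandwiched between local algebras of the $W_{22}$-chain. I would handle this by the standard two-sided inclusion argument: $\text{End}(W_{22}^{\otimes (I-1)})\subseteq\text{End}(W_{12}\otimes W_{22}^{\otimes(I-1)}\otimes W_{21})\subseteq$ an algebra that embeds into $\text{End}(W_{22}^{\otimes (I+1)})$ after choosing isometries $W_{12}\hookrightarrow W_{22}\otimes W_{12}$, which exist by absorption. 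The concern is that the ancillary isometries must be chosen coherently along the chain. I would address this by fixing the infinite-multiplicity isomorphisms once and for all as sitewise data, and then verifying that the resulting map is a *-isomorphism of colimits, that it is locally normal (automatic by condition (6)), and that its inverse also has bounded spread.

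The final claim then follows from the case where $\mathcal{E}\cong\text{Mat}_n(\text{Hilb}_{f.d.})$, which is Morita equivalent to $\text{Hilb}_{f.d.}$, since $A(\text{Hilb}_{f.d.},\mathbbm{C}^2)^{\infty}\cong\mathcal{L}^{\infty}$ by $\mathbbm{C}^2\otimes\ell^2\cong\ell^2$.
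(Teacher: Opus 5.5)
Your proposal is correct and is essentially the paper's proof. After coarse-graining both stabilized chains by the same $n$, they become $A(\text{Hilb}(\mathcal{E}),W_{11})$ and $A(\text{Hilb}(\mathcal{E}),W_{22})$, and conjugating $1_{W_{21}}\otimes a\otimes 1_{W_{12}}$ first by the $\Phi$'s and then by $\Psi^{\otimes(J+1)}$ gives the spread-one half-shift isomorphism; its inverse is obtained by swapping the roles of $\Phi$ and $\Psi^{-1}$.

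Your coherence worry disappears because this formula automatically commutes with the inclusions, which are just tensoring with identities. No auxiliary isometries are needed (and $W_{22}\otimes W_{12}=0$ in any case; you presumably meant $W_{12}\otimes W_{22}$). Your detour back to the uncoarse-grained lattice is harmless but unnecessary, since coarse-graining both sides by the same $n$ already preserves bounded spread.
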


\begin{proof}

Let $n$ be such that $X^{\otimes n}$ and $Y^{\otimes n}$ both contain all simple objects in $\mathcal{C}$ and $\mathcal{D}$ respectively.  

Pick a partition of $\mathbbm{Z}$ into intervals $\{I_{i}\}_{i\in \mathbbm{Z}}$ of length $n$, ordered left to right by index $i$. Then for any local algebras we have

\begin{align*}
A(\mathcal{C},X)^{\infty}_{I}&=A(\mathcal{C},X)_{I}\otimes B(K)^{\overline{\otimes} I}\\
&\cong \text{End}_{\mathcal{C}}(X^{\otimes I})\otimes B(K^{\otimes I})\\
&\cong \text{End}_{\text{Hilb}(\mathcal{C})}(K^{\otimes I}\boxtimes X^{\otimes I})\\
\end{align*}

In particular $$A(\mathcal{C},X)^{\infty}_{I_{i}}\cong \text{End}_{\text{Hilb}(\mathcal{C})}(W_{11}).$$

\medskip

\noindent More generally, we see that if $J=[a,b]\subseteq \mathbbm{Z}$ is an interval, we can define the associated physical interval $I_{J}:=I_{a}\cup I_{a+1}\cup \dots I_{b}$ the interval in $\mathbbm{Z}$ of length $n(b-a+1)$. The algebra

$$A_{I_{J}}\cong \text{End}_{\text{Hilb}(\mathcal{C})} (W^{\otimes J}_{C})=A(\text{Hilb}(\mathcal{E}), W_{11})_{J}$$

\noindent and for $J\subseteq J^{\prime}$, we have that the inclusion $A_{I_{J}}\hookrightarrow A_{I_{J^{\prime}}}$ is given by 

$$f\mapsto 1_{W_{\mathcal{D}}}\otimes \dots \otimes f \otimes \dots \otimes 1_{W_{\mathcal{D}}}$$

\noindent Thus after coarse-graining with this interval partition, we see that at the level of C*-algebras $$A^{\infty}(\mathcal{C},X)\cong A(\text{Hilb}(\mathcal{E}),W_{11})$$

\noindent Similarly, if we use the same coarse-graining by interval partitions, we have  

\medskip

$$A(\mathcal{D},Y)^{\infty}_{I_{J}}\cong \text{End}_{\text{Hilb}(\mathcal{D})} (W^{\otimes J}_{D})\cong A(\text{Hilb}(\mathcal{E}),W_{22})_{J} $$

\medskip

\noindent and an isomorphism of quasi-local algebras

\medskip

$$A^{\infty}(\mathcal{D},Y)\cong A(\text{Hilb}(\mathcal{E}),W_{11})$$

\noindent 

We now build a bounded spread isomorphism $\alpha: A^{\infty}(\mathcal{C}, X)\cong A^{\infty}(\mathcal{D}, Y)$, which is simplest to define graphically for $a\in A^{\infty}(\mathcal{C}, X)_{I_{J}}\cong A(\text{Hilb}(\mathcal{E}), W_{11})_{J}$, by

$$
\begin{tikzpicture}[x=1cm,y=1cm,>=stealth]
\tikzset{
  wire/.style={line width=1pt},
  bigbox/.style={draw, thick, minimum width=2.4cm, minimum height=0.95cm},
  smallbox/.style={draw, thick, minimum width=0.82cm, minimum height=0.50cm, inner sep=1pt},
  lab/.style={font=\small}
}

\node[bigbox] (fL) at (0,0) {$a$};

\foreach \x in {-0.95,0,0.95}{
  \draw[wire] ($(fL.north)+(\x,0)$) -- ++(0,2.5);
  \draw[wire] ($(fL.south)+(\x,0)$) -- ++(0,-2.5);
}

    \node at (-0.95,3.3) {$W_{11}$};
    \node at (0,3.3) {$W_{11}$};
    \node at (0.95,3.3) {$W_{11}$};
      \draw[decorate,decoration={brace,mirror,amplitude=5pt}]
    (-1,-3.2) -- (1,-3.2)
    node[midway,below=6pt] {$J$};

\draw[->, thick] (3.0,0) -- (4.8,0);

\node[bigbox] (fR) at (9.2,0) {$a$};

\def\xA{7.2}
\def\xB{8.5}
\def\xC{9.8}
\def\xD{11.1}

\def\xAB{7.85}
\def\xBC{9.15}
\def\xCD{10.45}

\def\yPsi{2.05}
\def\yPhi{1.18}
\def\yIphi{-1.18}
\def\yIpsi{-2.05}

\draw[thick] (7.2,2.32) -- (7.2,3);
\draw[thick] (7.2,1.78) -- (7.2,-1.78);
\draw[thick] (7.2,-2.32) -- (7.2,-3);

\draw[thick] (11.1,2.32) -- (11.1,3);
\draw[thick] (11.1,1.78) -- (11.1,-1.78);
\draw[thick] (11.1,-2.32) -- (11.1,-3);

\draw[thick] (7.52,1.78) -- (7.52,1.41);

\draw[thick] (7.52,-1.78) -- (7.52,-1.41);

\draw[thick] (8.8,1.78) -- (8.8,1.41);
\draw[thick] (9.2,0.5) -- (9.2,0.95);
\draw[thick] (9.2,-0.5) -- (9.2,-0.95);
\draw[thick] (8.8,-1.78) -- (8.8,-1.41);

\draw[thick] (8.2,1.78) -- (8.2,1.41);
\draw[thick] (7.8,0.95) -- (8.2,0.5);
\draw[thick] (8.2,-0.5) -- (7.8,-0.95);
\draw[thick] (8.2,-1.78) -- (8.2,-1.41);

\draw[thick] (8.5,2.3) -- (8.5,3);
\draw[thick] (8.5,-2.3) -- (8.5,-3);

\draw[thick] (9.8,2.3) -- (9.8,3);
\draw[thick] (9.5,1.78) -- (9.5,1.41);
\draw[thick] (9.5,-1.78) -- (9.5,-1.41);
\draw[thick] (9.8,-2.3) -- (9.8,-3);

\draw[thick] (10.1,1.78) -- (10.1,1.41);

\draw[thick] (10.1,-1.78) -- (10.1,-1.41);

\draw[thick] (10.8,1.78) -- (10.8,1.41);

\draw[thick] (10.8,-1.78) -- (10.8,-1.41);

\draw[thick] (10.1,0.5) -- (10.4,0.9);

\draw[thick] (10.1,-0.5) -- (10.4,-0.9);

\draw[thick] (12.1,-3) -- (12.1,3);

\node[smallbox] (psi1) at (\xA,\yPsi) {$\Psi$};
\node[smallbox] (psi2) at (\xB,\yPsi) {$\Psi$};
\node[smallbox] (psi3) at (\xC,\yPsi) {$\Psi$};
\node[smallbox] (psi4) at (\xD,\yPsi) {$\Psi$};

\node[smallbox] (phi1) at (\xAB,\yPhi) {$\Phi$};
\node[smallbox] (phi2) at (\xBC,\yPhi) {$\Phi$};
\node[smallbox] (phi3) at (\xCD,\yPhi) {$\Phi$};

\node[smallbox] (iphi1) at (\xAB,\yIphi) {$\Phi^{-1}$};
\node[smallbox] (iphi2) at (\xBC,\yIphi) {$\Phi^{-1}$};
\node[smallbox] (iphi3) at (\xCD,\yIphi) {$\Phi^{-1}$};

\node[smallbox] (ipsi1) at (\xA,\yIpsi) {$\Psi^{-1}$};
\node[smallbox] (ipsi2) at (\xB,\yIpsi) {$\Psi^{-1}$};
\node[smallbox] (ipsi3) at (\xC,\yIpsi) {$\Psi^{-1}$};
\node[smallbox] (ipsi4) at (\xD,\yIpsi) {$\Psi^{-1}$};

 \node at (7.2,3.3) {$W_{22}$};
    \node at (8.5,3.3) {$W_{22}$};
    \node at (9.8,3.3) {$W_{22}$};
    \node at (11.1,3.3) {$W_{22}$};
    \node at (12.1,3.3) {$W_{22}$};
    \draw[decorate,decoration={brace,mirror,amplitude=5pt}]
    (8.4,-3.2) -- (11.1,-3.2)
    node[midway,below=6pt] {$J$};
      \draw[decorate,decoration={brace,mirror,amplitude=5pt}]
    (7.2,-4) -- (12.2,-4)
    node[midway,below=6pt] {$J^{+1}$};

\end{tikzpicture}
$$

\noindent with $\alpha(a)\in A(\text{Hilb}(\mathcal{E}), W_{22})_{J^{+1}}\cong A(\text{Hilb}(\mathcal{D}), W_{\mathcal{D}})_{J^{+1}}\cong A^{\infty}(\mathcal{D},Y)_{I_{J^{+1}}}.$

\medskip

This extends to a $*$-homomorphism with spread at most $|I_i|=n$. But in fact, this has an inverse, defined similarly but exchanging domain and range, and interchanging the roles $\Phi\leftrightarrow \Psi^{-1}$.

\end{proof}

\subsection{Factorization of stabilized anyon chains: Hilbert space level}\label{hilbspacelevel}

Anyon chains, in particular, have a spatial realization. We can phrase this quite simply in terms of a graph. If $\mathcal{C}\rightarrow \text{End}(\mathcal{M})$ and $X$ is in the centralizer, then the graph in question has vertices indexed by simple objects, i.e. $\text{Irr}(\mathcal{M})$, and the number of edges $a\rightarrow b$ is given by the dimension of the space $\mathcal{M}(a,X(b))$.

Now, if $G$ is an arbitrary oriented, directed graph with finitely many vertices labeled $0,1,\dots, t-1$, for each pair of vertices, we have a finite edge set $E_{i\rightarrow j}$. Then consider the finite-dimensional Hilbert space

$$H_{i,j}:=\mathbbm{C}[E_{i\rightarrow j}],$$

\noindent where the edges serve as an orthonormal basis. Then the local Hilbert space associated to the interval $I$ of length $m$ with boundary conditions $i,j$ is

$$H^{I}_{i,j}:=\bigoplus_{1\le i_{1}, \dots i_{m-1}\le m}H_{i,i_{1}}\otimes H_{i_{1},i_{2}}\otimes \dots \otimes H_{i_{m-2},i_{m-1}}\otimes H_{i_{m-1},j}$$

\noindent with local operator algebra the operators on $H^{I}:=\bigoplus_{i,j} H^{I}_{i,j}$ which preserve boundary conditions, i.e.

$$A_{I}:=\bigoplus_{i,j} \mathcal{B}(H_{i,j}).$$

We note that since each $H_{i,j}$ has an orthonormal basis of edges from $i$ to $j$, then $H^{I}_{i,j}$ has an orthonormal basis consisting of paths of length $m$ from $i$ to $j$, with the path $p=(e_1\dots e_{m})$ associated with the vector $e_{1}\otimes e_{2}\otimes \dots \otimes e_{m}\in H^{I}_{i,j}$. We denote the set of paths of length $m$ from $i\rightarrow j$ in $G$ by $G^{m}(i\rightarrow j)$. From here on out, we will assume that $G$ is strongly connected in the sense that there exists some $m$ such that for all $l\ge m$, $G^{l}(i\rightarrow j)\ne \varnothing$, or in other words, $H^{I}_{i,j}\ne 0$ for any interval $|I|\ge m$.

In this case, we have that for any $i,j$ and interval $I$ of length $2k\ge m$, 

$$W_{i,j}:=H^{I}_{i,j}\otimes K^{\otimes 2k}\bigoplus_{1\le i_{1}, \dots i_{k-1}\le n }H_{i,i_{1}} \otimes H_{i_{1},i_{2}}\otimes \dots \otimes H_{i_{m-2},i_{m-1}}\otimes H_{i_{2k-1},j}\otimes (\ell^{2}(\mathbb{N}))^{\otimes 2k}\ne 0.$$

\noindent  Now, let $W_{i,*}:= (\ell^{2}(\mathbb{N}))^{\otimes k}, W_{*,j}\cong (\ell^{2}(\mathbb{N}))^{\otimes k}$.

Then 
$$W_{i,j}:=\mathbbm{C}[G^{2k}(i\rightarrow j)]\otimes (\ell^{2}(\mathbb{N}))^{\otimes 2k}\cong \bigoplus_{p\in G^{2k}(i\rightarrow j)} (\ell^{2}(\mathbb{N}))^{\otimes 2k}$$

Now, since $W_{i,j}$ and $W_{i,*}\otimes W_{j}\cong (\ell^{2}(\mathbb{N}))^{\otimes 2k}$ are separable infinite-dimensional Hilbert spaces, we can choose an isomorphism 

$$\Phi_{i,j}:W_{i,j}\cong W_{i,*}\otimes W_{*,j}.$$

\noindent We can make this more explicit. Suppose $K:=\ell^{2}(\mathbbm{N})$ is the Hilbert space for a harmonic oscillator, with basis $\{|n\rangle\}_{n\in \mathbbm{N}}$. Then we have an orthonormal basis

$$\{|p,n_{1}, \dots n_{2k}\rangle\ :\ p\in G^{2k}(i\rightarrow j), n_{r}\in \mathbbm{N} \}.$$

\noindent On the other hand $W_{i,*}\otimes W_{*,j}\cong K^{\otimes 2k}$ has a basis given by $$\{ |n_{1}, \dots, n_{2k}\rangle\ :\ n_{i}\in \mathbbm{N}\}$$

\noindent If we pick an identification $G^{2k}(i \rightarrow j)\cong \{0,1,\dots, l-1\}$ then we can define the following unitary on basis vectors

$$\Phi_{i,j}|p,n_{1},\dots, n_{2k}\rangle:=|ln_{1}+p,\ n_{2},\ \dots,\  n_{2k}\rangle$$

\noindent which has inverse

$$\widetilde{\Phi}^{\dagger}_{i,j}|n_{1},\ \dots ,\ n_{2k}\rangle:=|n_{1}\ \text{mod}\ l\ , \frac{n_{1}-(n_{1} \text{mod}\ l)}{l}, n_{2},\ \dots,\ n_{2k}\rangle.$$

The final ingredient we need is an isomorphism $$\Psi:\bigoplus_{j} W_{*,j}\otimes W_{j,*}\cong K^{\otimes 2k} $$

\noindent But $\bigoplus_{j} W_{*,j}\otimes W_{j,*}\cong \bigoplus_{j} K^{\otimes 2k}\cong \mathbbm{C}^{t}\otimes K^{\otimes 2k}$, and thus has a natural basis

$$\{|j,n_{1},\dots n_{2k}\rangle\ :\ 0\le j < t,\ n_{i}\in \mathbbm{N} \},$$

\noindent so that $\Psi$ can be defined by a formula similar to $\Phi_{i,j}$,

$$\Psi|j,n_{1},\dots, n_{2k}\rangle:=|tn_{1}+j,\ n_{2},\ \dots,\  n_{2k}\rangle.$$

Now, putting this all together, suppose we have an interval $I$ in $\mathbbm{Z}$ of length $n\cdot 2k$, which we think of as $n$ consecutive intervals each of length 2k. Then we have a composition of unitary isomorphisms

\begin{tikzcd}
\displaystyle H^{I}_{i,j}=\bigoplus_{0\le i_{1},\dots i_{n-1}\le t-1} W_{i,i_{1}}\otimes W_{i_{1},i_{2}}\otimes \dots \otimes W_{i_{n-1},j}\arrow[swap, ddd,"\ \ \ \displaystyle\bigoplus_{0\le i_{1},\dots i_{n-1}\le t-1}\Phi_{i,i_{1}}\otimes \Phi_{i_{1},i_{2}}\otimes \dots\ \otimes \Phi_{i_{n-1},j}\ \ \ "] \\
\\
\\
\displaystyle \bigoplus_{0\le i_{1},\dots i_{n-1}\le t-1}W_{i,*}\otimes W_{*,i_1}\otimes W_{i_1,*}\otimes W_{*,i_{2}}\otimes \dots \otimes W_{i_{n-1},*}\otimes W_{*, j}\arrow[swap, d,"\cong\ \ \ \ "]\\
 W_{i,*}\otimes \left(\bigoplus_{i_1} W_{*,i_1}\otimes W_{i_1,*}\right)\otimes \dots \otimes \left(\bigoplus_{0\le i_{n-1}\le t-1}W_{*,i_{n-1}}\right)\otimes W_{*,j} \arrow[swap, ddd, "\displaystyle 1_{W_{i,*}} \otimes \Psi \otimes\ \dots\ \otimes \Psi\otimes 1_{W_{*,j}}\ \ \ "]\\
 \\
 \\
 W_{i,*}\otimes K^{\otimes 2k}\otimes\ \dots\ \otimes K^{\otimes 2k}\otimes W_{*,j} 
\end{tikzcd}

\bigskip

\noindent We call this unitary composite $$\Lambda^{I}_{i,j}: H^{I}_{i,j}\cong W_{i,*}\otimes K^{\otimes (n-1)2k} \otimes W_{*,j}.$$

If we set $J$ to be the interval consisting of the middle $(n-1)2k$ sites of $I$, then the middle Hilbert space $K^{\otimes (n-1)2k}= K^{\otimes J}$ carries the standard, faithful representation of $\mathcal{L}^{\infty}_{J}$. In particular, if we consider the representation 

$$\pi_{i,j}:\mathcal{L}^{\infty}_{J}=\mathcal{B}(K^{\otimes J})\rightarrow \mathcal{B}(W_{i,*}\otimes K^{\otimes (n-1)2k} \otimes W_{*,j})$$

$$\alpha_{J}(a):=\bigoplus_{i,j}(\Lambda^{I}_{i,j})^{\dagger}\circ\left(1_{W_{i,*}}\otimes a \otimes 1_{W_{*,j}}\right) \circ\Lambda^{I}_{i,j}\in A_{I}=A_{J^{+k}}.$$

\noindent It is straightforward to check by hand that the assignment on local algebras is consistent (indeed, one can explicitly construct the inverse) and defines a bounded spread isomorphism $\alpha:\mathcal{L}^{\infty}\cong A^{\infty}$, as desired.

\section{Classification results}

\subsection{Classification of fusion spin chains up to stable equivalence}

Let $\mathcal{C}$ be a braided W*-tensor category and $\mathcal{D}\subseteq \mathcal{C}$ a full subcategory. We define the M\"{u}ger centralizer $\mathcal{D}^{\prime}\cap \mathcal{C}$ as the full tensor subcategory consisting of objects $X$ such that $\sigma_{Y,X}\circ \sigma_{X,Y}=1_{X\otimes Y}$ for all $Y\in \mathcal{D}$. $\mathcal{C}$ is called \textit{non-degenerately braided} if $\mathcal{C}'\cap \mathcal{C}$ consists of direct sums of the unit object $\mathbbm{1}$. An object $Y\in \mathcal{C}$ is disjoint from the subcategory $\mathcal{D}$ if $\text{Hom}_{\mathcal{C}}(X,Y)=0$ for all $X\in \mathcal{D}$.

We also recall that a non-degenerately braided unitary fusion category has a unique unitary spherical structure, hence can be viewed as a unitary modular tensor category \cite{Rowell2006UMTC}. We will make use of the graphical calculus for modular tensor categories \cite{BakalovKirillov2001}.


\begin{lem}\label{centralizerlemma}
Let $\mathcal{C}$ be a braided W*-tensor category closed under countable direct sums, and $\mathcal{D}\subseteq \mathcal{C}$ a full, non-degenerately braided fusion subcategory. If there is an object $Y\in \mathcal{C}$ disjoint from $\mathcal{D}$, then there is a non-zero object in $\mathcal{D}'\cap\mathcal{C}$ disjoint from $\mathcal{D}$.
\end{lem}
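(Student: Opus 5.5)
Because $\mathcal{D}$ is a non-degenerately braided fusion subcategory, it is a unitary modular tensor category. The plan is to use the standard projection onto the transparent part built from the Kirby color $\Omega=\sum_{c\in \text{Irr}(\mathcal{D})} d_c\, c$ (suitably normalized by $\dim\mathcal{D}$). For any object $Z\in\mathcal{C}$, encircling the $Z$ strand by an $\Omega$-loop (using the braiding in $\mathcal{C}$, which restricts to $\mathcal{D}$) defines an endomorphism $p_Z\in\text{End}_{\mathcal{C}}(Z)$. This is legitimate because the loop involves only finitely many objects of $\mathcal{D}$, so it is a finite composite of braidings, evaluations and coevaluations in $\mathcal{C}$. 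By the killing property of the Kirby color in a modular category, $p_Z$ is an idempotent; in the unitary setting it is self-adjoint, and it is natural in $Z$. Its image is the subobject $Z^{\mathcal{D}}$ of $Z$ on which every $c\in\mathcal{D}$ has trivial double braiding. The key computation is that for every $c\in\mathcal{D}$, the monodromy of $c$ around $p_Z$ equals $p_Z$ itself, by sliding the $c$-strand across the $\Omega$-loop (handle-slide). This shows $Z^{\mathcal{D}}\in\mathcal{D}'\cap\mathcal{C}$. Images exist since $\mathcal{C}$ is a W*-category and hence idempotent complete.

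To get a nonzero object, apply this to $Z:=Y\otimes \overline{Y}$ in the form of the ``tube'' argument. Consider the full object $\bigoplus_{c\in\text{Irr}(\mathcal{D})} c\otimes Y\otimes \overline{c}$, or more simply choose $Z$ so that the projection is visibly nonzero. The cleanest choice is the induced object $I(Y):=\bigoplus_{c} c\otimes Y\otimes c^{*}$, with the standard half-braiding-type projection. Nonvanishing of $p_Z$ on this object follows from the usual identity $\sum_c d_c\,(\text{loop around } c\otimes Y\otimes c^*)\neq 0$: taking a partial trace over the $c,c^*$ strands recovers a positive multiple of $1_Y$. Alternatively, one can use the fact that in a modular category the $\Omega$-loop around $c\otimes c^*$ contains the unit channel with nonzero coefficient. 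I would present this via the graphical calculus of \cite{BakalovKirillov2001}: the trace of $p_{I(Y)}$ against a suitable test morphism is $\dim\mathcal{D}^{-1}\sum_c d_c^2\,\cdot 1_Y\neq 0$.

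Next, disjointness from $\mathcal{D}$ must be shown. Let $T=\text{im}(p_{I(Y)})\in\mathcal{D}'\cap\mathcal{C}$. Any nonzero morphism from a simple $a\in\mathcal{D}$ to $T$ would force $a$ into $\mathcal{D}'\cap\mathcal{D}$, which is $\text{Vec}$ by non-degeneracy, so $a=\mathbbm{1}$. It therefore suffices to check $\text{Hom}(\mathbbm{1},T)=0$. Since $T$ is a summand of $\bigoplus_c c\otimes Y\otimes c^*$, Frobenius reciprocity gives $\text{Hom}(\mathbbm{1},c\otimes Y\otimes c^*)\cong\text{Hom}(c^*\otimes c,Y)$. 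This is zero because $c^*\otimes c\in\mathcal{D}$ and $Y$ is disjoint from $\mathcal{D}$. Hence $T$ is disjoint from $\mathcal{D}$.

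The main obstacle is the nonvanishing of the projected object. The naturality/handle-slide argument is formal, but showing $T\neq 0$ requires care. I would first compute $\text{Hom}(Y',T)$ for a simple-ish summand $Y'$ of $Y$, adapting the standard proof that the induction functor $\mathcal{C}\to\mathcal{Z}$ is faithful. If that route becomes awkward, the fallback is to exhibit an explicit nonzero morphism $Y\to I(Y)$, namely coevaluation on the $c=\mathbbm{1}$ component summed against $\Omega$, and check that its composite with $p$ has nonzero norm using positivity of the categorical trace. Positivity in the unitary setting is exactly what rules out cancellation.
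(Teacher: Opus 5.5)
There is a genuine gap, and it is exactly at the step you flagged: the object $I(Y)=\bigoplus_{c}c\otimes Y\otimes c^{*}$ can have zero $\mathcal{D}$-centralizing part. In that case $T=0$, and no test morphism or positivity argument can save it.

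Here is a concrete counterexample. Let $\mathcal{D}$ be the semion category: simples $\mathbbm{1},s$ with $s\otimes s\cong\mathbbm{1}$ and $\theta_{s}=i$, so the monodromy of $s$ with itself is $-1$. Let $\mathcal{E}$ be any unitary braided fusion category with a simple object $g\not\cong\mathbbm{1}$. Take $\mathcal{C}=\text{Hilb}(\mathcal{D}\boxtimes\mathcal{E})\supseteq\mathcal{D}\boxtimes\mathbbm{1}$ and $Y=s\boxtimes g$, which is disjoint from $\mathcal{D}$.
\begin{itemize}
\item Since $\mathcal{D}$ is pointed, $c\otimes Y\otimes c^{*}\cong(c\otimes s\otimes c^{*})\boxtimes g\cong Y$ for both simples $c$, so $I(Y)\cong Y\oplus Y$.
\item But $s$ has monodromy $-1$ with $Y$, hence $p_{I(Y)}=0$.
\end{itemize}

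Your claimed trace identity is what fails. Closing off the $c$ and $c^{*}$ strands turns them into two meridians of the $b$-loop. Each meridian contributes a $b$-dependent $S$-matrix scalar. So the partial trace is a weighted combination of the loop operators $\rho^{Y}_{b}$, not a positive multiple of $1_{Y}$; in the example it equals $\tfrac{1}{2}(1_{Y}+\rho^{Y}_{s})=0$. Conjugating $Y$ by $c$ never shifts it in the ``$\mathcal{D}$-direction'', and that shift is what is needed to reach the transparent part.

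The missing idea, which is what the paper uses, is to translate one-sidedly.
\begin{itemize}
\item For each simple $a\in\mathcal{D}$, the object $a\otimes Y$ is still disjoint from $\mathcal{D}$ by the same Frobenius reciprocity you used: $\text{Hom}(d,a\otimes Y)\cong\text{Hom}(\overline{a}\otimes d,Y)=0$. So the image $P_{a\otimes Y}$ of the Kirby-color projection is an object of $\mathcal{D}'\cap\mathcal{C}$ disjoint from $\mathcal{D}$.
\item Suppose all of these were zero. Take the partial trace over the $a$ strand, which is now a single meridian of each $b$-loop. This gives $\sum_{b}\widetilde{s}_{a,b}\,\rho^{Y}_{b}=0$ for every $a$, up to conventions for $\widetilde{s}$.
\item Invertibility of the $S$-matrix of the modular category $\mathcal{D}$ then forces every $\rho^{Y}_{b}=0$. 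In particular $1_{Y}=\rho^{Y}_{\mathbbm{1}}=0$.
\end{itemize}
In the example, $s\otimes Y\cong\mathbbm{1}\boxtimes g$ is precisely the transparent object this produces.

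With this replacement (or with the single object $\bigoplus_{a}a\otimes Y$), the rest of your outline goes through as written and matches the paper's argument. That includes the projection, the handle-slide showing its image centralizes $\mathcal{D}$, and disjointness.
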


\begin{proof}
Let $Y\in \mathcal{C}$. Then for each $a\in \text{Irr}(\mathcal{D})$, we have the loop operator

$$\rho^{Y}_{a}:=\begin{tikzpicture}[
    baseline=-0.5ex,
    strand/.style={
        draw=black,
        line width=0.8pt,
        line cap=round,
        line join=round
    },
    over/.style={
        preaction={
            draw=white,
            line width=3.2pt,
            line cap=butt
        }
    }
]
    \def\rx{1}
    \def\ry{0.38}
    \def\gap{0.18}

    \draw[strand]
        (\rx,0)
        arc[start angle=0,end angle=360,
            x radius=\rx,y radius=\ry];

    \draw[strand]
        (0,-1.25) -- (0,1.25);

    \draw[strand,over]
        (0,{\ry-\gap}) -- (0,{\ry+\gap});

    \draw[strand,over]
        ({\rx*cos(245)},{\ry*sin(245)})
        arc[start angle=245,end angle=295,
            x radius=\rx,y radius=\ry];

    \node[right=3pt] at (0,1.05) {$Y$};
    \node[right=3pt] at (\rx-2.2,0.5) {$a$};
\end{tikzpicture}\in \text{End}_{\mathcal{C}}(Y)$$

The assignment $a\mapsto \rho^{Y}_{a}$ furnishes a homomorphism of the fusion ring of $\mathcal{D}$ into $\text{End}(Y)$ for any object $Y\in \mathcal{C}$. We consider the virtual regular element

$$\rho^{Y}:=\frac{1}{\sum_{a\in \text{Irr}(\mathcal{D})d^{2}_{a}}}\sum d_{a} \rho^{Y}_{a} $$

Then $\rho^{Y}$ is an orthogonal projection in $\text{End}_\mathcal{C}(Y)$. We denote the image of this idempotent $P_{Y}\le Y$. The standard graphical calculus arguments (for example, the proof of \cite[Lemma 3.3]{BaerenzBarrett2018}) show that $P_{Y}\in \mathcal{D}'\cap \mathcal{C}$. 

Now, if $Y$ is nonzero and disjoint from $\mathcal{D}$, then so is $P_{Y}$. However, $P_{Y}$ may be zero. We claim that at least one of the objects $P_{a\otimes Y}$ as $a$ ranges over the simple objects of $\mathcal{D}$ is nonzero. Indeed, if $P_{a\otimes Y}=0$, then taking the partial trace (and using  \cite[Lemma 3.1.4]{BakalovKirillov2001}) shows that $\sum_{b\in \text{Irr}(\mathcal{D})} \widetilde{s}_{a,b}=0$, where the $\text{rank}(\mathcal{D})\times \text{rank}(\mathcal{D})$ matrix $\widetilde{s}$ defined by $\widetilde{s}_{a,b}=\text{Tr}(1_{a\otimes \overline{b}})$ is the unnormalized S-matrix (following \cite[Chapter 3]{BakalovKirillov2001}). But since $\mathcal{D}$ is modular, the matrix $\widetilde{s}$ is invertible, so we cannot have $\sum_{b\in \text{Irr}(\mathcal{D})} \widetilde{s}_{a,b}=0$ for all $a$, as the vector $\begin{bmatrix}
1\\
1\\
\dots\\
1   
\end{bmatrix}$ would be in the kernel.
\end{proof}

\begin{thm}\label{StabilizedDHR}
Suppose $A:=A(\mathcal{C},X)$, where $\mathcal{C}$ is an indecomposable unitary multi-fusion category and $X$ is a strong tensor generating object. Then $\text{DHR}(A^{\infty})\cong \text{Hilb}(\text{DHR}(A))\cong \text{Hilb}(\mathcal{Z}(\mathcal{C}))$.
\end{thm}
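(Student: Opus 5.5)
We will construct a functor $\text{Hilb}(\text{DHR}(A))\to \text{DHR}(A^{\infty})$ and show it is a braided equivalence, with Lemma \ref{centralizerlemma} supplying essential surjectivity.

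\emph{Construction of the functor.} Using the functor $F_{\mathbbm{Z}}:\mathcal{Z}(\mathcal{C})\to \text{DHR}(A)$, we send $Z\in \mathcal{Z}(\mathcal{C})$ to $F_{\mathbbm{Z}}(Z)\otimes \mathcal{L}^{\infty}$. This is a correspondence over $A^{\infty}=A\otimes \mathcal{L}^{\infty}$, whose localized projective basis is $\{b_{i}\otimes 1\}$. For a Hilbert space multiplicity $H$, we tensor further with the DHR bimodule $\mathcal{L}^{\infty}\otimes H$ from Theorem \ref{trivialdhr}. Concretely, a unitary $\ell^{2}(\mathbbm{N})\cong \ell^{2}(\mathbbm{N})\otimes H$ at a single site gives an endomorphism $\rho$, and the twisted bimodule $_{\rho}\mathcal{L}^{\infty}$ has a single localized basis vector. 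Because $F_{\mathbbm{Z}}$ is monoidal and braided, and the infinite-dimensional ancilla factor contributes only trivial braiding, this assignment extends to a braided W*-tensor functor $\text{Hilb}(\mathcal{Z}(\mathcal{C}))\to \text{DHR}(A^{\infty})$.

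\emph{Full faithfulness.} By the argument of the lemma showing that $\text{DHR}$ is W*, endomorphisms of a bimodule are computed inside $P M_{n}(A^{\infty}_{I^{+R}}) P$ as commutants of an amplimorphism. We then use the identification $A^{\infty}_{I}\cong \text{End}_{\text{Hilb}(\mathcal{C})}(K^{\otimes I}\boxtimes X^{\otimes I})$ established in the proof of Theorem \ref{thm:MoritaEquivimpliesStableEquiv}. This should reduce intertwiner spaces to $\mathcal{B}(H,H')\otimes \mathcal{Z}(\mathcal{C})(Z,Z')$, mirroring the fully faithful part of the fusion spin chain DHR theorem. The ancilla tensor factor carries no extra intertwiners, by Haag duality for $\mathcal{L}^{\infty}$ and the relative commutant computation.

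\emph{Essential surjectivity.} This is the main obstacle. The image $\mathcal{D}$ of $\mathcal{Z}(\mathcal{C})$ is a full, non-degenerately braided fusion subcategory of $\mathcal{E}:=\text{DHR}(A^{\infty})$, which is closed under countable direct sums. Given an arbitrary $Y\in\mathcal{E}$, we split off its $\mathcal{D}$-isotypic part: since $\mathcal{D}$ is semisimple, this part lies in $\text{Hilb}(\mathcal{D})$, and what remains is a summand $Y'$ disjoint from $\mathcal{D}$. It therefore suffices to rule out nonzero objects disjoint from $\mathcal{D}$.

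Suppose $Y'\neq 0$. Lemma \ref{centralizerlemma} then yields a nonzero $Y''\in \mathcal{D}'\cap\mathcal{E}$ that is disjoint from $\mathcal{D}$. We then show that $\mathcal{D}'\cap \mathcal{E}$ consists only of multiples of the unit. The idea is that an object transparent to all of $\mathcal{Z}(\mathcal{C})$ should be a DHR bimodule of the ``ancilla part'' alone. To make this concrete, we pass through the stabilized Morita picture, where Theorem \ref{thm:MoritaEquivimpliesStableEquiv} gives $A(\mathcal{C},X)^{\infty}\cong A(\text{Hilb}(\mathcal{E}'),W_{11})$. Alternatively, we use the Longo--Rehren/Q-system picture: the Lagrangian algebra $L=\bigoplus Z\boxtimes \overline{Z}$ realizes $A^{\infty}$ as a Q-system extension of $A^{\infty}\otimes$ (an opposite copy), and $\mathcal{D}$-transparent objects then descend to $\text{DHR}$ of a tensor product of infinite factors.

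Since $Y''$ is disjoint from $\mathcal{D}$, in particular from $\mathbbm{1}$, this contradicts $\mathcal{D}'\cap\mathcal{E}\subseteq\text{Hilb}$. Hence $Y'=0$ and the functor is essentially surjective. We expect the transparency-implies-trivial step to need the most care. The first approach to try is to take a localized projective basis of $Y''$ and use the half-braiding with $\mathcal{D}$ to show that it commutes with the copy of $A_{I}$ sitting inside $A^{\infty}_{I}$ and is therefore valued in $\mathcal{L}^{\infty}$. Theorem \ref{trivialdhr} then forces $Y''\cong \mathcal{L}^{\infty}\otimes H$, which contains $\mathbbm{1}$, contradicting disjointness.
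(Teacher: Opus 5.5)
Your overall architecture matches the paper: you embed $\text{Hilb}(\mathcal{Z}(\mathcal{C}))$ via $Z\mapsto F_{\mathbbm{Z}}(Z)\otimes\mathcal{L}^{\infty}$, split off the $\mathcal{D}$-isotypic part, and use Lemma \ref{centralizerlemma} to reduce to ruling out a nonzero $Y''\in\mathcal{D}'\cap\text{DHR}(A^{\infty})$ disjoint from $\mathcal{D}$. The decisive step, that such a $Y''$ cannot exist, is not proved, however. You list three ideas, and none works as stated.

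\begin{itemize}
\item The localized-basis idea has no mechanism. Trivial monodromy with the bimodules $F_{\mathbbm{Z}}(Z)\otimes\mathcal{L}^{\infty}$ concerns exchanging localized bases. It would force a localized basis of $Y''$ to commute with $A_{I}\otimes 1$ only if $A_{I}\otimes 1$ were generated by charge transporters of objects of $\mathcal{D}$. Nothing gives you that, and it is essentially as hard as the theorem itself.
\item The Longo--Rehren idea is misformulated. $\bigoplus Z\boxtimes\overline{Z}$ lives in $\mathcal{Z}(\mathcal{C})\boxtimes\mathcal{Z}(\mathcal{C})^{\text{rev}}$, not in $\text{DHR}(A^{\infty})$. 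There is also no doubled net of which $A^{\infty}$ is an extension.
\item The stabilized Morita picture, applied to $A$ itself, only re-describes $A^{\infty}$ as $A(\text{Hilb}(\mathcal{E}'),W_{11})$. This trivializes no DHR bimodules.
\end{itemize}

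The missing idea is a Q-system extension in the other direction, landing in a stabilized anyon chain.
\begin{itemize}
\item Pick a Lagrangian algebra $L\in\mathcal{Z}(\mathcal{C})\cong\text{DHR}(A)\subseteq\text{DHR}(A^{\infty})$, coming from an indecomposable $\mathcal{C}$-module category, so that $A\rtimes L$ is an anyon chain.
\item Realization of Q-systems commutes with colimits, so $A^{\infty}\rtimes L\cong(A\rtimes L)^{\infty}$.
\item Theorem \ref{thm:MoritaEquivimpliesStableEquiv} makes this bounded spread isomorphic to $\mathcal{L}^{\infty}$, whose DHR category is $\text{Hilb}$ by Theorem \ref{trivialdhr}.
\item Since $Y''$ centralizes $L$, the free module $Y''\boxtimes_{A^{\infty}}L$ is a local $L$-module. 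It is nonzero because $L$ contains the unit. So it is a nonzero DHR bimodule of $\mathcal{L}^{\infty}$, of the form $\mathcal{L}^{\infty}\otimes H$ with $H\neq 0$, and it receives nonzero maps from $\mathcal{L}^{\infty}$.
\item On the other hand, Frobenius reciprocity and self-duality of $L$ identify $\text{Hom}(\mathcal{L}^{\infty},Y''\boxtimes_{A^{\infty}}L)$ with a morphism space from an object of $\mathcal{D}$ (e.g.\ $L\boxtimes_{A^{\infty}}L$) into $Y''$. This vanishes by disjointness, a contradiction.
\end{itemize}
This is exactly where the factorization of stabilized anyon chains enters the proof, and your proposal never uses it.

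Separately, your direct commutant computation for full faithfulness is only sketched. The paper gets it for free: a braided tensor functor out of the non-degenerately braided fusion category $\mathcal{Z}(\mathcal{C})$ is automatically fully faithful.
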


\begin{proof}

In this case, since each $A_{I}$ is a finite-dimensional algebra and colimits over directed sets commute with the maximal tensor product of C*-algebras \cite{MR2391387}, we have

\begin{align*} A^{\infty}=\text{colim}_{I}\ \left(A_{I}\overline{\otimes} \mathcal{L}^{\infty}_{I}\right)&=\text{colim}_{I}\ \left(A_{I}\otimes_{\text{max}} \mathcal{L}^{\infty}_{I}\right)\\
&\cong (\text{colim}_{I} \ A_{I})\otimes_{\text{max}} (\text{colim}_{I} \ \mathcal{L}^{\infty}_{I})\\
&\cong A\otimes_{\text{max}} \mathcal{L}^{\infty}\\
&\cong A\otimes_{\text{min}} \mathcal{L}^{\infty}
\end{align*}

\medskip

\noindent where, in the last line, we are using that $A$ is AF, hence a nuclear C*-algebra. In particular, for every $X\in \text{Corr}(A)$, there is a well-defined correspondence $$X^{\infty}:=X\otimes \mathcal{L}^{\infty}\in \text{Corr}(A\otimes \mathcal{L}^{\infty})=\text{Corr}(A^{\infty}),$$

\noindent which is the completion of the linear tensor product $X\otimes_{\mathbbm{C}} \mathcal{L}^{\infty}$ with respect to the $A^{\infty}$-valued inner product on simple tensors given by

$$\langle a_{1}\otimes b_{1}\ |\ a_{2}\otimes b_{2}\rangle:=\langle a_{1}\ |\ a_{2}\rangle \otimes b^{\dagger}_{1}b_{2}.$$

Now, for any $X\in \text{DHR}(A)=\text{DHR}_{0}(A)$, we see that $X^{\infty}\in \text{DHR}_{0}(A^{\infty})$, since the projective localized bases for $X$ are clearly locally normal (since they only operate on the locally finite-dimensional tensor factors).

By using projective bases, we clearly have natural isomorphisms $(X\boxtimes_{A}Y)^{\infty}\cong X^{\infty}\boxtimes_{A^{\infty}} Y^{\infty}$, and thus we have a natural, braided tensor inclusion $\text{DHR}(A)\hookrightarrow \text{DHR}_{0}(A^{\infty})$. But since $\text{DHR}(A)$ is a non-degenerately braided fusion category, this is necessarily fully faithful by \cite{MR1990929}.

Now, pick any Lagrangian algebra object $L\in \mathcal{Z}(\mathcal{C})\cong \text{DHR}(A)\subseteq \text{DHR}_{0}(A^{\infty})$. Because realization commutes with colimits \cite[Theorem 4.6]{chen2022ktheoretic}, we have a strictly locality-preserving isomorphism $(A\rtimes L)^{\infty}\cong A^{\infty}\rtimes L$ where $A^{\infty}\rtimes L$ denotes the realization of the Q-system L \cite{MR4419534}. But $A\rtimes L$ is an anyon chain, and thus by the results of the previous section, we have a bounded spread isomorphism $\mathcal{L}^{\infty}\cong (A\rtimes L)^{\infty}\cong A^{\infty}\rtimes L$, hence $\text{DHR}(A^{\infty}\rtimes L)\cong \text{Hilb}$. Since $A^{\infty}$ is stable ($A^{\infty}\otimes\mathcal{L}^{\infty}\cong A^{\infty}$ by a spread-zero isomorphism), $\text{DHR}(A^{\infty})$ admits countable direct sums for objects with a common localization scale. In particular, if $\text{DHR}(A^{\infty})$ has no object disjoint from $Z(\mathcal{C})\cong \text{DHR}(A)\hookrightarrow \text{DHR}(A^{\infty})$, then since $\text{DHR}(A)$ is fusion, we can choose a uniformly bounded localization scale for all objects, $\text{DHR}(A^{\infty})\cong \text{Hilb}(\text{DHR}(A))$. 

If there were some object in $\text{DHR}(A^{\infty})$ disjoint from $\text{DHR}(A)$, then by Lemma \ref{centralizerlemma}, there is a non-zero object $Y$ disjoint from $\text{DHR}(A)$ which also centralizes it. Then the free $L$-module $Y\boxtimes_{A^{\infty}}L$ is local, hence defines a DHR bimodule of $\mathcal{L}^{\infty}$. On the other hand, this module is nontrival as an $L$-module, since 
$$\text{Hom}_{\text{DHR}(\mathcal{L}^{\infty})}(\mathcal{L}^{\infty}, Y\boxtimes_{A^{\infty}} L)\cong \text{Hom}_{\text{DHR}(A^{\infty}_{L})}(L, Y\boxtimes_{A^{\infty}} L)$$
$$\cong \text{Hom}_{\text{DHR}(A^{\infty})}(L, Y\boxtimes_{A^{\infty}}L)\cong \text{Hom}_{\text{DHR}(A^{\infty})}(L\boxtimes_{A^{\infty}} L, Y) =0.$$

The first isomorphism is a standard result about free $L$ modules (for example, see \cite{MR3687214}), the second isomorphism uses self-duality of the Lagrangian algebra $L$, and the final equality uses the disjointness of $Y$ from $\text{DHR}(A)$. In particular, the free module $Y\boxtimes_{A^{\infty}}L\in \text{DHR}(\mathcal{L}^{\infty})$ has no non-zero morphisms to the trivial bimodule, contradicting Theorem \ref{trivialdhr}.


\end{proof}

\begin{cor}\label{cor:realizesymmetriesonLinf}
If $B\subseteq A$ is an anyon model with symmetry category $\mathcal{C}$, then $B^{\infty}\subseteq A^{\infty}$ is a physical boundary subalgebra with symmetry category $\mathcal{C}$. Using the bounded spread isomorphism $A^{\infty}\cong \mathcal{L}^{\infty}$ from Theorem \ref{thm:MoritaEquivimpliesStableEquiv}, every fusion category can be realized by symmetries of  $\mathcal{L}^{\infty}$.  
\end{cor}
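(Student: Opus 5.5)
We take $B=A(\mathcal{D},X)\subseteq A=A(\mathcal{E},X)$ with $\mathcal{D}=\mathcal{C}^{*}_{\mathcal{M}}$. We must check that $B^{\infty}\subseteq A^{\infty}$ satisfies the conditions of a physical boundary subalgebra and has symmetry category $\mathcal{C}$. Since both $A$ and $B$ are AF, Theorem \ref{StabilizedDHR} gives $A^{\infty}\cong A\otimes\mathcal{L}^{\infty}$ and $B^{\infty}\cong B\otimes \mathcal{L}^{\infty}$, and these identifications are compatible with the inclusion. We then verify the conditions one at a time.

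\textbf{Quasi-locality and the conditional expectation.} Let $E:A\to B$ be the conditional expectation. On each local algebra it is a normal map between finite-dimensional algebras, so $E^{\infty}:=E\otimes \mathrm{id}_{\mathcal{L}^{\infty}}$ is a locally normal conditional expectation $A^{\infty}\to B^{\infty}$. The intersections $B^{\infty}\cap A^{\infty}_{I}=B_{I}\overline{\otimes}\mathcal{L}^{\infty}_{I}$ are W*-algebras. To see this, apply the slice map for the faithful normal state on the finite-dimensional factor $A_{I^{+R}}$ together with the finite index of $B_I\subseteq A_I$. Next, take a Pimsner--Popa basis $\{b_i\}\subseteq A_I$ for $E$. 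The elements $b_i\otimes 1$ then form a basis for $E^{\infty}$ localized in $I$.

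\textbf{Relative weak Haag duality.} Let $x\in A^{\infty}$ commute with $B^{\infty}_{I^{c}}$. Then $x$ commutes with $1\otimes \mathcal{L}^{\infty}_{I^{c}}$ and with $B_{I^{c}}\otimes 1$. Apply slice maps in the first leg, using normal states on $\mathcal{L}^{\infty}$ for one slice and the trace on $A$ for the other, as in the proof that $A^{\infty}$ satisfies Haag duality. This reduces the question to relative Haag duality for $B\subseteq A$ and to Haag duality for $\mathcal{L}^{\infty}$. The slice property holds because $A$ is nuclear. We conclude that $x\in A_{I^{+R}}\overline{\otimes}\mathcal{L}^{\infty}_{I^{+R}}$.

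\textbf{DHR category and the Lagrangian condition.} Theorem \ref{StabilizedDHR} applied to $B=A(\mathcal{D},X)$ gives $\mathrm{DHR}(B^{\infty})\cong \mathrm{Hilb}(\mathcal{Z}(\mathcal{D}))$, which is the Hilb completion of a fusion category with $\mathrm{DHR}_{0}(B^{\infty})\cong\mathcal{Z}(\mathcal{D})$ through $X\mapsto X^{\infty}$. Compute the dual Q-system of $E^{\infty}$ directly: ${}_{B^{\infty}}A^{\infty}_{B^{\infty}}\cong ({}_{B}A_{B})^{\infty}$. Since $({}_BA_B)$ corresponds to the Lagrangian algebra $L$, this is the image $L^{\infty}$ of $L$ under the braided equivalence, and so it is Lagrangian. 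Because $X\mapsto X^\infty$ is a braided monoidal equivalence $\mathcal{Z}(\mathcal{D})\to \mathrm{DHR}_0(B^\infty)$, the module categories match: $\beta_-(\mathrm{DHR}_0(B^\infty)_{A^\infty})\cong \beta_-(\mathcal{Z}(\mathcal{D})_L)\cong\mathcal{C}$. Alternatively, the functor $G_{\mathbb{Z}}(a)\mapsto G_{\mathbb{Z}}(a)^{\infty}$ lands in $\mathrm{DHR}_{-}(A^{\infty}|B^{\infty})_0$ and is an equivalence.

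\textbf{Transport to $\mathcal{L}^\infty$.} Finally, let $\alpha:A^{\infty}\cong\mathcal{L}^{\infty}$ be the bounded spread isomorphism of Theorem \ref{thm:MoritaEquivimpliesStableEquiv}. Transporting along $\alpha$ gives $\alpha(B^{\infty})\subseteq \mathcal{L}^{\infty}$. All the defining properties, namely weak relative Haag duality with enlarged $R$, local normality, localized bases, and the DHR categories, are preserved under bounded spread isomorphisms. Hence $\alpha(B^\infty)\subseteq\mathcal{L}^\infty$ realizes $\mathcal{C}$. Every unitary fusion category admits an anyon model, for example with $\mathcal{M}=\mathcal{C}$ and $X$ strongly generating $\mathcal{C}^{*}_{\mathcal{C}}\cong\mathcal{C}$, so every unitary fusion category is realized by symmetries of $\mathcal{L}^{\infty}$.

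\textbf{Main obstacle.} The hardest step is relative weak Haag duality for $B^{\infty}\subseteq A^{\infty}$. The slice-map argument has to be run with the commutant of $B_{I^c}$, not of $A_{I^c}$, and this requires a faithful state for which $A$ satisfies relative Haag duality in the GNS von Neumann algebra. We would obtain this from the tracial state, using the finite-index commuting-square (Ocneanu compactness type) property of the inclusion.
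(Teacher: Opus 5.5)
Your proposal is correct and follows the route the paper takes implicitly (the corollary is stated there without a separate proof): Theorem~\ref{StabilizedDHR} gives $\mathrm{DHR}(B^{\infty})\cong \mathrm{Hilb}(\mathcal{Z}(\mathcal{D}))$, the identification $(B\rtimes L)^{\infty}\cong B^{\infty}\rtimes L$ in its proof shows that the dual Q-system of $B^{\infty}\subseteq A^{\infty}$ is the Lagrangian $L$, and one then transports along $A^{\infty}\cong\mathcal{L}^{\infty}$; your explicit checks of the local-subalgebra axioms fill in what the paper leaves unsaid. The step you flag as the main obstacle is not actually one: once the first slice gives $x\in A\otimes\mathcal{L}^{\infty}_{I}$, slicing the $\mathcal{L}^{\infty}_{I}$ leg lands in $A$ itself, so the C*-level relative weak Haag duality of $B\subseteq A$ together with $\dim A_{I^{+R}}<\infty$ already gives $x\in A_{I^{+R}}\otimes\mathcal{L}^{\infty}_{I}$, and neither GNS-level duality nor Ocneanu compactness is needed.
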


This naturally suggests the following problem:

\begin{quest} How many ways can a fusion category be realized in $\mathcal{L}^{\infty}$?
\end{quest}

While we do not provide a complete answer to this question, in the next section we show that all the symmetry realizations arising from stabilized anyon chains are equivalent for a fixed symmetry category. First we point out the following corollary.

\begin{cor}\label{cor:fusionspinchains} The fusion spin chains $A(\mathcal{C},X)$ and $A(\mathcal{D},Y)$ are stably equivalent if and only if $\mathcal{Z}(\mathcal{C})\cong \mathcal{Z}(\mathcal{D})$ as braided unitary fusion categories.
\end{cor}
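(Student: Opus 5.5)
The plan is to prove the two directions separately. The ``if'' direction is immediate from Theorem \ref{thm:MoritaEquivimpliesStableEquiv}. A braided equivalence $\mathcal{Z}(\mathcal{C})\cong\mathcal{Z}(\mathcal{D})$ is the same as a Morita equivalence between $\mathcal{C}$ and $\mathcal{D}$. That theorem then gives a bounded spread isomorphism $A(\mathcal{C},X)^{\infty}\cong A(\mathcal{D},Y)^{\infty}$ for any choice of strong tensor generators $X$ and $Y$. This is exactly stable equivalence of the two fusion spin chains (stable equivalence of quasi-local algebras meaning bounded spread isomorphism of stabilizations).

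For the ``only if'' direction, the strategy is to show that the braided category of DHR bimodules is an invariant of bounded spread isomorphism. Suppose $\alpha: A(\mathcal{C},X)^{\infty}\to A(\mathcal{D},Y)^{\infty}$ is a bounded spread isomorphism. By condition (5), which holds here by the Haag duality corollary for stabilized fusion spin chains, $\alpha^{-1}$ also has bounded spread. By condition (6), both $\alpha$ and $\alpha^{-1}$ are locally normal. The key steps are then:
\begin{enumerate}
\item Transport bimodules along $\alpha$ by $X\mapsto {}_{\alpha^{-1}}X_{\alpha^{-1}}$. A projective basis localized in $I$ is carried to one localized in $I^{+R}$, and local normality of the basis is preserved.
\item Conclude from step 1 that transport gives a braided W*-tensor equivalence $\text{DHR}(A(\mathcal{C},X)^{\infty})\cong\text{DHR}(A(\mathcal{D},Y)^{\infty})$. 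This is the standard fact from \cite{jones2024dhr} that bounded spread isomorphisms induce braided equivalences of DHR categories. The braiding is defined by moving localization regions far apart, and this is compatible with bounded spread since $R$ is uniform.
\item Invoke Theorem \ref{StabilizedDHR} to get a braided equivalence $\text{Hilb}(\mathcal{Z}(\mathcal{C}))\cong\text{Hilb}(\mathcal{Z}(\mathcal{D}))$.
\item Recover the fusion categories by restricting to the full subcategory of compact objects. These are the objects whose endomorphism algebra is finite-dimensional, equivalently the finite direct sums of simple objects. Restricting recovers $\mathcal{Z}(\mathcal{C})\cong\mathcal{Z}(\mathcal{D})$ as unitary braided fusion categories, since an equivalence of W*-categories preserves finite-dimensionality of endomorphism algebras. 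Alternatively, one can say this restriction is the dualizable part $\text{DHR}_{0}$.
\end{enumerate}

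The main obstacle is step 2 in the W*/stabilized setting. One must check that the transported bimodule is still in $\text{DHR}$ with the required locally normal localized bases, and that the braiding is preserved. In the locally finite-dimensional case this is in \cite{jones2024dhr}. Here I would redo that argument, using local normality of $\alpha$ (automatic by condition (6), since the local algebras are finite sums of type I factors) to ensure that weak*-continuity is preserved. For the braiding, I would compute it using localized bases placed far left and far right of each other, at distance greater than $2R$. Then $\alpha$ carries the defining relations to the defining relations in the target.

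A cleaner shortcut for step 4 is to note that any braided equivalence of the Hilb-completions sends simple objects to simple objects. It therefore restricts to a bijection on $\text{Irr}$ compatible with the fusion rules, the braiding and the unitary structure, which gives the equivalence of the Drinfeld centers.
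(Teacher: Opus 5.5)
Your proposal is correct and follows essentially the argument the paper intends; the paper gives no written proof of this corollary. The ``if'' direction is Theorem \ref{thm:MoritaEquivimpliesStableEquiv}. The ``only if'' direction combines the invariance of the braided DHR category under bounded spread isomorphisms (using Haag duality of the stabilization) with Theorem \ref{StabilizedDHR}, and then recovers $\mathcal{Z}(\mathcal{C})$ as the finite-dimensional-endomorphism part of $\text{Hilb}(\mathcal{Z}(\mathcal{C}))$.
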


\subsection{Stable equivalence of anyon models}

We now review some basic facts about Lagrangian algebras and indecomposable multi-fusion categories (see for example \cite{MR3242743, Bischoff2025Distortion, MR3406516}). Let $\mathcal{F}$ be an indecomposable unitary $n\times n$ multifusion category. Each of the fusion categories $\mathcal{F}_{ii}$ are pairwise Morita equivalent, with $\mathcal{F}_{ij}$ an invertible $\mathcal{F}_{ii}-\mathcal{F}_{jj}$ bimodule category (which in particular, are indecomposable as left and right $\mathcal{F}_{ii}$ and $\mathcal{F}_{jj}$ module categories, respectively).

In particular, if we let $\mathbbm{1}_{i}$ be the units of the diagonal multifusion categories, then  for each i, the functor 

$$F_{i}:\mathcal{Z}(\mathcal{F})\cong \mathcal{Z}(\mathcal{F}_{ii}),$$

$$F_{i}(Z,\sigma):=(Z_{ii},\sigma|_{\mathcal{F}_{ii}})$$

\noindent is a (unitary) braided equivalence.

For each $j$, there is a nice representation of the Lagrangian algebra $L_{ij}\in \mathcal{Z}(\mathcal{F}_{ii})$ associated to the indecomposable left $\mathcal{F}_{ii}$ module category $\mathcal{F}_{ij}$, which as an algebra object in $\mathcal{F}_{ii}$ is just

$$L_{ij}:=\bigoplus_{X\in \mathcal{E}_{ij}} X\otimes \overline{X} $$

\noindent with (dual) Q-system normalized multiplication structure

\[
m_{ij}
:=
\bigoplus_{X\in\mathrm{Irr}(\mathcal{F}_{ij})}
\frac{\sqrt{\text{dim}(\mathcal{E}_{ij})}}{\sqrt{d_X}}\ \ \ \ \,
\begin{tikzpicture}[baseline={(current bounding box.center)}, scale=0.7]
    \draw[thick]
        (0,.5) node[below] {$\scriptstyle \overline{X}$}
        -- (0,1) arc (180:0:1) -- (2,.5) node[below] {$\scriptstyle X$}

        (-1,.5) node[below] {$\scriptstyle X$}
        -- (-1,1) arc (180:135:2)
        to[in=-90,out=45] (.5,3.5) -- (.5,4) node[above] {$\scriptstyle X$}

        (3,.5) node[below] {$\scriptstyle \overline{X}$}
        -- (3,1) arc (0:45:2)
        to[in=-90,out=135] (1.5,3.5) -- (1.5,4) node[above] {$\scriptstyle \overline{X}$};
    \node at (1,2.7) {$\scriptstyle j$};
    \node at (3,3) {$\scriptstyle i$};
    \node at (1,1) {$\scriptstyle i$};
    \node at (-1,3) {$\scriptstyle i$};
\end{tikzpicture}
\]

\noindent and unit given by
$$\iota_{ij}=
\bigoplus_{X\in\text{Irr}(\mathcal{F}_{ij})}
\frac{\sqrt{d_X}}{\sqrt{\text{dim}(\mathcal{E}_{ij})}}\ \ \ \,
\begin{tikzpicture}[baseline={(current bounding box.center)}, scale=0.7]
    \draw[thick]
        (2,1) node[above] {$\scriptstyle \overline{X}$}
        -- (2,0) arc (360:180:1)
        (0,0) -- (0,1) node[above] {$\scriptstyle X$};
    \node at (1,.5) {$\scriptstyle j$};
    \node at (-0.8,0) {$\scriptstyle i$};
\end{tikzpicture}
$$

\noindent Here $i$ and $j$ label the regions, and are shorthand for including the projection $p_{i}\in \text{End}(\mathcal{F})_{i}$ onto the summand $\mathbbm{1}_{i}$ of the identity.
It is straightforward to check that this gives a dual Q-system structure on $L_{ij}$ (cf. \cite{MR4357481, 1004.1533}).

The half-braiding that lifts $L_{ij}$ to $\mathcal{Z}(\mathcal{F}_{ii})$ is the standard one

\begin{equation}\label{eqn:half_braiding}
\sigma_{L_{ij},W}
:=
\bigoplus_{Y,Z\in\mathrm{Irr}(\mathcal{F}_{ij})}
\sum_{\alpha\in \Lambda}
\frac{\sqrt{d_Z}}{\sqrt{d_Y}}\,
\begin{tikzpicture}[baseline={(current bounding box.center)},scale=0.7]
    \draw[thick]
        (0,0) node[below] {$\scriptstyle Y$}
            -- (0,4) node[above] {$\scriptstyle Z$}
        (2,0) node[below] {$\scriptstyle \overline{Y}$}
            -- (2,4) node[above] {$\scriptstyle \overline{Z}$}
        (0,2) to[out=110,in=270] (-1,4) node[above] {$\scriptstyle W$}
        (2,2) to[out=290,in=90] (3,0) node[below] {$\scriptstyle W$};

    \draw[thick, fill=white] (0,2) circle (0.07);
    \draw[thick, fill=white] (2,2) circle (0.07);

    \node[left]  at (0,2) {$\scriptstyle \alpha$};
    \node[right] at (2,2) {$\scriptstyle \alpha^{\bullet}$};
    \node at (1,2) {$\scriptstyle j$};
    \node at (-1,1) {$\scriptstyle i$};
    \node at (3,3) {$\scriptstyle i$};
    \node at (2.4,0.6) {$\scriptstyle i$};
    \node at (-0.4,3.4) {$\scriptstyle i$};
\end{tikzpicture}
\,.
\end{equation}

\noindent where $\Lambda$ is a basis for $\text{Hom}_{\mathcal{F}_{ij}}(Y, W\otimes Z)$ with $\beta^{\dagger}\circ \alpha=\delta_{\alpha,\beta} 1_{Y}$ for all $\alpha,\beta\in \Lambda$, and $\alpha^{\circ}$ is the partial rotation of $\alpha^{\dagger}$, $\alpha^{\bullet}:=(\text{ev}_{Y} \otimes 1_{\overline{Z}}) \circ  \alpha^{\dagger} \circ (1_{\overline{Y}}\otimes 1_{W}\otimes \text{coev}_{Z}).$

Under the equivalence $F^{-1}_{i}:\mathcal{Z}(\mathcal{F}_{ii})\cong \mathcal{Z}(\mathcal{F})$, 

$$L_{j}:=F^{-1}_{i}(L_{ij})\cong \bigoplus_{i} L_{ij}\in \bigoplus_{i} \mathcal{F}_{ii}\subseteq \mathcal{F},$$ with unitary half-braiding on all of $\mathcal{F}$ given by the same formula as above, but now $W$ can be an arbitrary object in $\mathcal{F}$, given on a homogeneous object $W\in \mathcal{F}_{ik}$ by 

$$
\sigma_{L_{j},W}
:=
\bigoplus_{Y\in\mathrm{Irr}(\mathcal{F}_{ij})l Z\in \mathrm{Irr}(\mathcal{F}_{kj})}
\sum_{\alpha \in \Lambda}
\frac{\sqrt{d_Z}}{\sqrt{d_Y}}\,
\begin{tikzpicture}[baseline={(current bounding box.center)},scale=0.7]
    \draw[thick]
        (0,0) node[below] {$\scriptstyle Y$}
            -- (0,4) node[above] {$\scriptstyle Z$}
        (2,0) node[below] {$\scriptstyle \overline{Y}$}
            -- (2,4) node[above] {$\scriptstyle \overline{Z}$}
        (0,2) to[out=110,in=270] (-1,4) node[above] {$\scriptstyle W$}
        (2,2) to[out=290,in=90] (3,0) node[below] {$\scriptstyle W$};

    \draw[thick, fill=white] (0,2) circle (0.07);
    \draw[thick, fill=white] (2,2) circle (0.07);

    \node[left]  at (0,2) {$\scriptstyle \alpha$};
    \node[right] at (2,2) {$\scriptstyle \alpha^{\bullet}$};
    \node at (1,2) {$\scriptstyle j$};
    \node at (-1,1) {$\scriptstyle i$};
    \node at (3,3) {$\scriptstyle k$};
    \node at (2.4,0.6) {$\scriptstyle i$};
    \node at (-0.4,3.4) {$\scriptstyle k$};
\end{tikzpicture}
$$

In particular, $L_{j}\otimes \mathbbm{1}_{i}=\mathbbm{1}_{i}\otimes L_{j}=L_{ij}$, then setting $W_{ik}=\bigoplus_{W\in \mathcal{F}_{ik}} K\boxtimes W$ as above, we see that in $\text{Hilb}(\mathcal{F})$, we have the unitary isomorphism

$$\sigma_{L_{j}, W_{ik}}: L_{ij}\otimes W_{ik}\cong W_{ik}\otimes L_{kj},$$

\begin{center}
\begin{minipage}[c]{0.1cm}
    $$\sigma_{L_{j},W_{ik}}:=$$
\end{minipage}
\begin{minipage}[c]{0.3\textwidth}
    $$
\begin{tikzpicture}[scale=0.7]
  
  \draw[black, line width=1.2pt] (0,2) .. controls (0,1.2) and (1,0.8) .. (1,0);

  \draw[red, line width=1.2pt] (1,2) .. controls (1,1.2) and (0,0.8) .. (0,0);

  \node[below] at (0,0) {$L_{ij}$};
  \node[below] at (1,0) {$W_{ik}$};
  \node[above] at (0,2) {$W_{ik}$};
  \node[above] at (1,2) {$L_{kj}$};
\end{tikzpicture}
$$
\end{minipage}
\end{center}

\noindent which is compatible with the algebra structures on $L_{ij}$ and $L_{kj}$ (since they are just restrictions to a direct summand of the central algebra structure on $L_{j}$).

\bigskip

\begin{thm}\label{thm:stableequivsymm}
Let $B_{1}\subseteq A_{1}$ and $B_{2}\subseteq A_{2}$ be anyon chain models with $\mathcal{C}$-symmetry. Then there is a bounded spread isomorphism $\alpha : A_{1}^{\infty} \cong A_{2}^{\infty}$ such that $\alpha(B^{\infty}_{1})=B^{\infty}_{2}$.
\end{thm}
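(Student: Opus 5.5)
We will reduce everything to a single multifusion category containing both module categories, and then imitate the "zig-zag" isomorphism of Theorem \ref{thm:MoritaEquivimpliesStableEquiv}, now decorated by the Lagrangian algebras. Let the two models be given by indecomposable right $\mathcal{C}$-module categories $\mathcal{M}_{1},\mathcal{M}_{2}$ with duals $\mathcal{D}_{r}=\mathcal{C}^{*}_{\mathcal{M}_{r}}$ and generators $X_{r}\in\mathcal{D}_{r}$. Thus $B_{r}=A(\mathcal{D}_{r},X_{r})$ and $A_{r}=A(\text{End}(\mathcal{M}_{r}),X_{r})$. Form the $\mathcal{C}$-module category $\widetilde{\mathcal{M}}=\mathcal{M}_{1}\oplus\mathcal{M}_{2}$ and the $2\times 2$ indecomposable multifusion category $\mathcal{F}:=\text{End}_{\mathcal{C}}(\widetilde{\mathcal{M}})$, so that $\mathcal{F}_{11}=\mathcal{D}_{1}$, $\mathcal{F}_{22}=\mathcal{D}_{2}$, and $\mathcal{F}_{12},\mathcal{F}_{21}$ are the invertible bimodules $\text{Fun}_{\mathcal{C}}(\mathcal{M}_{2},\mathcal{M}_{1})$ and $\text{Fun}_{\mathcal{C}}(\mathcal{M}_{1},\mathcal{M}_{2})$. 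Enlarge this to a $3\times 3$ category $\mathcal{F}$ by adding $\mathcal{C}$ itself, i.e. use $\widetilde{\mathcal{M}}=\mathcal{C}\oplus\mathcal{M}_{1}\oplus\mathcal{M}_{2}$. The key point is that the anyon chain $A_{r}$ is the realization $B_{r}\rtimes L$ of the Lagrangian algebra attached to $\mathcal{M}_{r}$ as a left $\mathcal{D}_{r}$-module. In $\mathcal{F}$, this is the restriction $L_{rr'}$ of the single central algebra $L_{0}=F^{-1}(L_{r0})\in\mathcal{Z}(\mathcal{F})$, where $0$ is the $\mathcal{C}$ index and $\mathcal{F}_{r0}\simeq\mathcal{M}_{r}$. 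So both $A_{r}$ are "the same" central algebra $L_{0}$, restricted to different diagonal corners.

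The steps are as follows. First, as in the proof of Theorem \ref{thm:MoritaEquivimpliesStableEquiv}, coarse-grain so that $B_{r}^{\infty}\cong A(\text{Hilb}(\mathcal{F}),W_{rr})$. Identify $A_{r}^{\infty}$ locally with morphisms $W_{rr}^{\otimes J}\to W_{rr}^{\otimes J}\otimes L_{r0}$, or equivalently with $\text{End}$ of $W_{rr}^{\otimes J}$ inside $\text{End}(\mathcal{M}_{r})$, composed with the Q-system multiplication. The inclusion into the colimit is formed by pulling the $L$ string through with the half-braiding, exactly as in the DHR functor $F_{\mathbbm{Z}}$. This step uses that realization commutes with colimits and with stabilization, as in the proof of Theorem \ref{StabilizedDHR}. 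Second, choose unitaries $\Phi:W_{11}\cong W_{12}\otimes W_{21}$ and $\Psi:W_{21}\otimes W_{12}\cong W_{22}$, and define $\alpha$ on $B_{1}^{\infty}$ by the same zig-zag picture as before. This already gives a bounded spread isomorphism $B_{1}^{\infty}\cong B_{2}^{\infty}$. Third, extend $\alpha$ to elements with an $L$ leg. Conjugate the $W$-strands by $\Phi,\Psi$ as before, and transport the $L_{10}$ strand across the $W_{12}$ and $W_{21}$ strands using the unitary half-braidings $\sigma_{L_{0},W_{ik}}:L_{i0}\otimes W_{ik}\cong W_{ik}\otimes L_{k0}$. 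Because these are restrictions of one half-braiding on a commutative central Q-system, the transport is compatible with multiplication, unit and adjoint. Hence the extended map is a $*$-homomorphism $A_{1}^{\infty}\to A_{2}^{\infty}$ with spread at most one coarse-grained site, it restricts to $\alpha$ on $B_{1}^{\infty}$, and it is compatible with the conditional expectations. The inverse is given by the same construction with $\Phi\leftrightarrow\Psi^{-1}$. The resulting bounded spread isomorphism maps $B_{1}^{\infty}$ onto $B_{2}^{\infty}$ because elements without an $L$ leg go to elements without an $L$ leg.

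The main obstacle is the third step: checking that the half-braiding transport is consistent with the colimit inclusions. On the left these insert identity strands, and on the right they pull the $L$-string through with $\sigma$. Checking this requires the hexagon axiom for $\sigma_{L_{0},-}$ on tensor products of the $W_{ik}$, and naturality of $\sigma$ with respect to $\Phi,\Psi$. Naturality holds because $\Phi,\Psi$ are morphisms in $\text{Hilb}(\mathcal{F})$ and $L_{0}$ is genuinely central in all of $\mathcal{F}$, not merely in $\mathcal{F}_{11}$. I would verify this graphically first on a single coarse site and then argue inductively.
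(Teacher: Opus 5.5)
Your proposal follows essentially the same route as the paper. It uses the $3\times 3$ Morita context $\text{End}_{\mathcal{C}}(\mathcal{M}_{1}\oplus\mathcal{M}_{2}\oplus\mathcal{C})$, the algebra model identifying each anyon chain with the corner $L_{r3}$ of a single central Lagrangian algebra, and extends the zig-zag isomorphism of Theorem \ref{thm:MoritaEquivimpliesStableEquiv} via the restricted half-braiding $\sigma\colon L_{13}\otimes W_{12}\cong W_{12}\otimes L_{23}$ (the paper then also braids past $W_{22}$), with only a notational slip ($L_{rr'}$ should be $L_{r0}$) and a colimit-compatibility check that the paper leaves implicit.
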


 \begin{proof} Let $\mathcal{C}$ be a unitary fusion category, and suppose $\mathcal{M}_{1}$ and $\mathcal{M}_{2}$ are indecomposable right $\mathcal{C}$-module categories, with duals $\mathcal{D}_{1}$ and $\mathcal{D}_{2}$ respectively.  

Consider the $3\times 3$ indecomposable multi-fusion category 

$$\mathcal{F}:=\text{End}_{\mathcal{C}}(\mathcal{M}_{1}\oplus\mathcal{M}_{2}\oplus \mathcal{C})\cong\begin{bmatrix}
\text{Fun}_{\mathcal{C}}(\mathcal{M}_1, \mathcal{M}_1)  & \text{Fun}_{ \mathcal{C}}(\mathcal{M}_{2}, \mathcal{M}_{1}) & \text{Fun}_{\mathcal{C}}(\mathcal{C}, \mathcal{M}_{1})  \\
\text{Fun}_{ \mathcal{C}}(\mathcal{M}_{1}, \mathcal{M}_{2}) & \text{Fun}_{\mathcal{C}}(\mathcal{M}_1, \mathcal{M}_1) & \text{Fun}_{\mathcal{C}}(\mathcal{C}, \mathcal{M}_{2}) \\
\text{Fun}_{\mathcal{C}}(\mathcal{M}_1, \mathcal{C})  & \text{Fun}_{\mathcal{C}}(\mathcal{M}_2, \mathcal{C}) & \text{Fun}_{\mathcal{C}}(\mathcal{C}, \mathcal{C}) 
\end{bmatrix}$$
$$
\cong \begin{bmatrix}
\mathcal{D}_{1}  & \mathcal{N} & \mathcal{M}_{1} \\
\overline{\mathcal{N}} & \mathcal{D}_{2} & \mathcal{M}_{2}  \\
 \overline{\mathcal{M}}_{1}  &  \overline{\mathcal{M}}_{2} & \mathcal{C}
\end{bmatrix}
$$

\noindent where $\mathcal{N}$ is the invertible $\mathcal{D}_{1}$-$\mathcal{D}_{2}$ bimodule category $\text{Fun}_{ \mathcal{C}}(\mathcal{M}_{2}, \mathcal{M}_{1})\cong \mathcal{M}_{1}\boxtimes_{\mathcal{C}}\overline{\mathcal{M}}_{2}$ \cite{1406.4204}.

Let $A_{1}\in \mathcal{Z}(\mathcal{D}_{1})$ and $A_{2}\in \mathcal{Z}(\mathcal{D}_{2})$ be the Lagrangian algebras associated to the $\mathcal{D}_{i}$-module categories $\mathcal{M}_{i}$. Then by construction, under the identification $\mathcal{D}_{i}\cong \mathcal{F}_{ii}$ for $i=1,2$, we see that $A_{1}\cong L_{13}$ and $A_{2}\cong L_{23}$.

Furthermore, pick strongly tensor generating objects $X_{1}\in \mathcal{D}_{1}$ and $X_{2}\in \mathcal{D}_{2}$, respectively. If we let $\mathcal{E}_{1}=\text{End}(\mathcal{M}_{1})$ and $\mathcal{E}_{2}=\text{End}(\mathcal{M}_{2})$ be the Morita trivial indecomposable multi-fusion categories, and $\mathcal{D}_{i}\rightarrow \mathcal{E}_{i}$ are the forgetful functors from $\text{End}_{\mathcal{C}}(\mathcal{M}_{i})$ to $\text{End}(\mathcal{M}_{i})$, then the corresponding anyon chain models for $\mathcal{C}$ are given by the physical boundary subalgebras

$$A(\mathcal{D}_{i},X_{i})\hookrightarrow A(\mathcal{E}_{i},X_{i}).$$

We recall the \textit{algebra model} for these inclusions, introduced in \cite[Section 4]{jones2025quantumcellularautomatacategorical}. In this picture, we view $ A(\mathcal{E}_{i},X_{i})$ as a DHR bimodule over $A(\mathcal{D}_{i},X_{i})$, which under the equivalence $\text{DHR}(A(\mathcal{D}_{i},X_{i}))\cong \mathcal{Z}(\mathcal{D})$ corresponds to the Lagrangian algebra $A_{i}=L_{i3}$. In particular, we can view the local algebras

$$\begin{tikzpicture}[scale=0.7]

  \node at (-3.1,0) {$A(\mathcal{E}_{i},X_{i})_{I}\cong \text{Hom}_{\mathcal{D}_{i}}(X^{\otimes I}_{i}, X^{\otimes I}_{i}\otimes L_{i3})\ \ \ \leftrightsquigarrow$};

  \begin{scope}[xshift=4cm]
  \draw[thick] (-0.6,0.6) rectangle (0.6,-0.6);
  \node at (0,0) {$\xi$};

  \draw[thick] (-0.45,1.4) -- (-0.45,0.6);
  \draw[thick] (-0.15,1.4) -- (-0.15,0.6);
  \draw[thick] (0.15,1.4) -- (0.15,0.6);
  \draw[thick] (0.45,1.4) -- (0.45,0.6);

  \draw[thick] (-0.45,-0.6) -- (-0.45,-1.4);
  \draw[thick] (-0.15,-0.6) -- (-0.15,-1.4);
  \draw[thick] (0.15,-0.6) -- (0.15,-1.4);
  \draw[thick] (0.45,-0.6) -- (0.45,-1.4);

  \draw[red, thick] (0.6,0.6) -- (1.05,1.4);
  \node[above] at (1.05,1.4) {$L_{i3}$};

  \draw[decorate,decoration={brace,mirror,amplitude=5pt}]
    (-0.55,-1.55) -- (0.55,-1.55)
    node[midway,below=6pt] {$I$};
\end{scope}
\end{tikzpicture}
$$

$$
\begin{tikzpicture}
\begin{scope}[scale=0.8, xshift=0cm]
\node at (-2.1,0) {$\nu\cdot \eta=$};
  
  \draw[thick] (-0.6,1.8) rectangle (0.6,0.6);
  \node at (0,1.2) {$\nu$};

  \draw[thick] (-0.6,0.3) rectangle (0.6,-0.9);
  \node at (0,-0.3) {$\xi$};

  \draw[thick] (-0.45,2.6) -- (-0.45,1.8);
  \draw[thick] (-0.15,2.6) -- (-0.15,1.8);
  \draw[thick] (0.15,2.6) -- (0.15,1.8);
  \draw[thick] (0.45,2.6) -- (0.45,1.8);

  \draw[thick] (-0.45,0.6) -- (-0.45,0.3);
  \draw[thick] (-0.15,0.6) -- (-0.15,0.3);
  \draw[thick] (0.15,0.6) -- (0.15,0.3);
  \draw[thick] (0.45,0.6) -- (0.45,0.3);

  \draw[thick] (-0.45,-0.9) -- (-0.45,-1.7);
  \draw[thick] (-0.15,-0.9) -- (-0.15,-1.7);
  \draw[thick] (0.15,-0.9) -- (0.15,-1.7);
  \draw[thick] (0.45,-0.9) -- (0.45,-1.7);

\draw[red, thick] (0.6,1.8)--(1.8,2.3);
\draw[red, thick] (0.6,0.3)--(1.8,2.3);
\draw[red, thick] (1.8,2.3)--(2.5,3);
\node at (2.5,3.3) {$L_{i3}$};

  \draw[decorate,decoration={brace,mirror,amplitude=5pt}]
    (-0.55,-1.85) -- (0.55,-1.85)
    node[midway,below=6pt] {$I$};
\end{scope}
\end{tikzpicture}
$$

\noindent In this model, the inclusion $A(\mathcal{D}_{i},X_{i})\hookrightarrow A(\mathcal{E}_{i},X_{i})$ is represented graphically by 

$$\begin{tikzpicture}[scale=0.7]

\begin{scope}
  \begin{scope}[xshift=0cm]
  \draw[thick] (-0.6,0.6) rectangle (0.6,-0.6);
  \node at (0,0) {$\xi$};

  \draw[thick] (-0.45,1.4) -- (-0.45,0.6);
  \draw[thick] (-0.15,1.4) -- (-0.15,0.6);
  \draw[thick] (0.15,1.4) -- (0.15,0.6);
  \draw[thick] (0.45,1.4) -- (0.45,0.6);

  \draw[thick] (-0.45,-0.6) -- (-0.45,-1.4);
  \draw[thick] (-0.15,-0.6) -- (-0.15,-1.4);
  \draw[thick] (0.15,-0.6) -- (0.15,-1.4);
  \draw[thick] (0.45,-0.6) -- (0.45,-1.4);

  \draw[decorate,decoration={brace,mirror,amplitude=5pt}]
    (-0.55,-1.55) -- (0.55,-1.55)
    node[midway,below=6pt] {$I$};

    \node  at (2,0) {$\mapsto$};
    
\end{scope}

\end{scope}

  \begin{scope}[xshift=4cm]
  \draw[thick] (-0.6,0.6) rectangle (0.6,-0.6);
  \node at (0,0) {$\xi$};

  \draw[thick] (-0.45,1.4) -- (-0.45,0.6);
  \draw[thick] (-0.15,1.4) -- (-0.15,0.6);
  \draw[thick] (0.15,1.4) -- (0.15,0.6);
  \draw[thick] (0.45,1.4) -- (0.45,0.6);

  \draw[thick] (-0.45,-0.6) -- (-0.45,-1.4);
  \draw[thick] (-0.15,-0.6) -- (-0.15,-1.4);
  \draw[thick] (0.15,-0.6) -- (0.15,-1.4);
  \draw[thick] (0.45,-0.6) -- (0.45,-1.4);

  \draw[red, thick] (0.8,1) -- (1.05,1.4);
  \node[above] at (1.05,1.4) {$L_{i3}$};
  \node [circle, fill=red, scale=0.3] at (0.8,1) {};

  \draw[decorate,decoration={brace,mirror,amplitude=5pt}]
    (-0.55,-1.55) -- (0.55,-1.55)
    node[midway,below=6pt] {$I$};
\end{scope}
\end{tikzpicture}
$$

Now, recall the bounded spread isomorphism $\alpha: A(\mathcal{D}_{1}, X_{1})^{\infty}\cong A(\mathcal{D}_{2}, X_{2})^{\infty}$ from Theorem \ref{thm:MoritaEquivimpliesStableEquiv}. Then using $\sigma_{L_{13}, W_{12}}: L_{13}\otimes W_{12}\cong W_{12}\otimes L_{23}$, followed by the half-braiding $L_{23}\otimes W_{22}\cong W_{22}\otimes L_{23}$, we obtain a canonical extension of $\alpha$ to a bounded spread isomorphism $\widetilde{\alpha}: A(\mathcal{E}_{1}, X_{1})^{\infty}\cong A(\mathcal{E}_{2}, X_{2})^{\infty}$ by

$$
\begin{tikzpicture}[x=1cm,y=1cm,>=stealth]
\tikzset{
  wire/.style={line width=1pt},
  bigbox/.style={draw, thick, minimum width=2.4cm, minimum height=0.95cm},
  smallbox/.style={draw, thick, minimum width=0.82cm, minimum height=0.50cm, inner sep=1pt},
  lab/.style={font=\small}
}

\node[bigbox] (fL) at (0,0) {$a$};

\foreach \x in {-0.95,0,0.95}{
  \draw[wire] ($(fL.north)+(\x,0)$) -- ++(0,2.5);
  \draw[wire] ($(fL.south)+(\x,0)$) -- ++(0,-2.5);
}
\draw[wire,red] (fL.north east) -- ++(2.5,1.0)
  node[right,lab] {$L_{13}$};

    \node at (-0.95,3.3) {$W_{11}$};
    \node at (0,3.3) {$W_{11}$};
    \node at (0.95,3.3) {$W_{11}$};
      \draw[decorate,decoration={brace,mirror,amplitude=5pt}]
    (-1,-3.2) -- (1,-3.2)
    node[midway,below=6pt] {$J$};

\draw[->, thick] (3.0,0) -- (4.8,0);

\node[bigbox] (fR) at (9.2,0) {$a$};

\draw[wire,red] (fR.north east) -- ++(2.5,1.0)
  node[right,lab] {$L_{23}$};

\def\xA{7.2}
\def\xB{8.5}
\def\xC{9.8}
\def\xD{11.1}

\def\xAB{7.85}
\def\xBC{9.15}
\def\xCD{10.45}

\def\yPsi{2.05}
\def\yPhi{1.18}
\def\yIphi{-1.18}
\def\yIpsi{-2.05}

\draw[thick] (7.2,2.32) -- (7.2,3);
\draw[thick] (7.2,1.78) -- (7.2,-1.78);
\draw[thick] (7.2,-2.32) -- (7.2,-3);

\draw[thick] (11.1,2.32) -- (11.1,3);
\draw[thick] (11.1,1.78) -- (11.1,-1.78);
\draw[thick] (11.1,-2.32) -- (11.1,-3);

\draw[thick] (7.52,1.78) -- (7.52,1.41);
\draw[thick] (7.52,-1.78) -- (7.52,-1.41);

\draw[thick] (8.8,1.78) -- (8.8,1.41);
\draw[thick] (9.2,0.5) -- (9.2,0.95);
\draw[thick] (9.2,-0.5) -- (9.2,-0.95);
\draw[thick] (8.8,-1.78) -- (8.8,-1.41);

\draw[thick] (8.2,1.78) -- (8.2,1.41);
\draw[thick] (7.8,0.95) -- (8.2,0.5);
\draw[thick] (8.2,-0.5) -- (7.8,-0.95);
\draw[thick] (8.2,-1.78) -- (8.2,-1.41);

\draw[thick] (8.5,2.3) -- (8.5,3);
\draw[thick] (8.5,-2.3) -- (8.5,-3);

\draw[thick] (9.8,2.3) -- (9.8,3);
\draw[thick] (9.5,1.78) -- (9.5,1.41);
\draw[thick] (9.5,-1.78) -- (9.5,-1.41);
\draw[thick] (9.8,-2.3) -- (9.8,-3);

\draw[thick] (10.1,1.78) -- (10.1,1.41);
\draw[thick] (10.1,-1.78) -- (10.1,-1.41);

\draw[thick] (10.8,1.78) -- (10.8,1.41);
\draw[thick] (10.8,-1.78) -- (10.8,-1.41);

\draw[thick] (10.1,0.5) -- (10.4,0.9);
\draw[thick] (10.1,-0.5) -- (10.4,-0.9);

\draw[thick] (12.1,-3) -- (12.1,3);

\node[smallbox] (psi1) at (\xA,\yPsi) {$\Psi$};
\node[smallbox] (psi2) at (\xB,\yPsi) {$\Psi$};
\node[smallbox] (psi3) at (\xC,\yPsi) {$\Psi$};
\node[smallbox] (psi4) at (\xD,\yPsi) {$\Psi$};

\node[smallbox] (phi1) at (\xAB,\yPhi) {$\Phi$};
\node[smallbox] (phi2) at (\xBC,\yPhi) {$\Phi$};
\node[smallbox] (phi3) at (\xCD,\yPhi) {$\Phi$};

\node[smallbox] (iphi1) at (\xAB,\yIphi) {$\Phi^{-1}$};
\node[smallbox] (iphi2) at (\xBC,\yIphi) {$\Phi^{-1}$};
\node[smallbox] (iphi3) at (\xCD,\yIphi) {$\Phi^{-1}$};

\node[smallbox] (ipsi1) at (\xA,\yIpsi) {$\Psi^{-1}$};
\node[smallbox] (ipsi2) at (\xB,\yIpsi) {$\Psi^{-1}$};
\node[smallbox] (ipsi3) at (\xC,\yIpsi) {$\Psi^{-1}$};
\node[smallbox] (ipsi4) at (\xD,\yIpsi) {$\Psi^{-1}$};

 \node at (7.2,3.3) {$W_{22}$};
    \node at (8.5,3.3) {$W_{22}$};
    \node at (9.8,3.3) {$W_{22}$};
    \node at (11.1,3.3) {$W_{22}$};
    \node at (12.1,3.3) {$W_{22}$};
    \draw[decorate,decoration={brace,mirror,amplitude=5pt}]
    (8.4,-3.2) -- (11.1,-3.2)
    node[midway,below=6pt] {$J$};
      \draw[decorate,decoration={brace,mirror,amplitude=5pt}]
    (7.2,-4) -- (12.2,-4)
    node[midway,below=6pt] {$J^{+1}$};
\end{tikzpicture}.
$$

\end{proof}

If $B\subseteq A$ is a symmetry inclusion, a (locally normal) state $\phi$ on $A$ is called \textit{symmetric} if $\phi\circ E=\phi$. Consider the Hilbert space representation of $B$ on the GNS representation $L^{2}(B,\phi)$. This generates a module category of $\text{DHR}_{0}(B)$, $\mathcal{M}_{\phi}:=\text{DHR}_{0}(B)\boxtimes_{B} L^{2}(B,\phi)$, and taking internal End of the object $L^{2}(B,\phi)\in \mathcal{M}_{\phi}$, we obtain a W*-algebra object $L_{\phi}\in \text{DHR}_{0}(B)$. We call this the \textit{symmetry order algebra} (cf. \cite{jones2025localtopologicalorderboundary, evans2026operatoralgebraicapproachfusion}). If $L_{\phi}$ is Lagrangian, we say $L_{\phi}$ is \textit{topological}, since it corresponds to a topological boundary condition on the SymTFT corresponding to $\text{DHR}_{0}(B)$.

If $\alpha$ is an equivalence $B_{1}\subseteq A_{1}\cong B_{2}\subseteq A_{2}$, then precomposition with $\alpha^{-1}$  establishes a continuous bijection between locally normal symmetric states on $A_{1}$ and locally normal symmetric states on $A_{2}$. Furthermore $\alpha|_{B_{1}}$ yields an equivalence $\widetilde{\alpha}: \text{DHR}_{0}(B_1)\cong \text{DHR}_{0}(B_2)$, and if $\phi_{2}\circ \alpha $ is unitarily equivalent to $\phi_{1}$, then under this equivalence $\widetilde{\alpha}(L_{\phi_1})\cong L_{\phi_2}$.

Now suppose $B\subseteq A$ is an anyon chain, and let $\phi$ be any symmetric state on $B\subseteq A$. Then the stabilization $\phi^{\infty}$ is the locally normal symmetric state on $B^{\infty}\subseteq A^{\infty}$ defined by $$\phi^{\infty}:=\phi\otimes \left(|\xi\rangle^{\overline{\otimes} \mathbbm{Z}} \right)$$

\noindent where $|\xi\rangle\in \ell^{2}(\mathbbm{N})$ is any vector state on $\mathcal{B}(\ell^{2}(\mathbbm{N}))$, and $|\xi\rangle^{\overline{\otimes} \mathbbm{Z}}$ is the corresponding pure (locally normal) product state on $\mathcal{L}^{\infty}$. Then under the equivalence $\text{DHR}(B)\cong \text{DHR}_{0}(B^{\infty})$, we see that $L_{\phi}\cong L_{\phi^{\infty}}$. Taken together we have the following:

\begin{thm}\label{thm:symmetryorderinvariance}
If $B\subseteq A$ is an anyon chain, and $\phi$ is a symmetric state, then the symmetry order algebra $L_{\phi}$ is an invariant under stable equivalence.
\end{thm}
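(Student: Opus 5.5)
Here "invariant under stable equivalence" means the following. Suppose $B_1\subseteq A_1$ and $B_2\subseteq A_2$ are anyon chains, $\phi_i$ is a symmetric state on $A_i$, and $\alpha\colon A_1^\infty\to A_2^\infty$ is a bounded spread isomorphism with $\alpha(B_1^\infty)=B_2^\infty$ and $\phi_2^\infty\circ\alpha$ unitarily equivalent to $\phi_1^\infty$. Then $L_{\phi_1}\cong L_{\phi_2}$ under the induced equivalence $\text{DHR}(B_1)\cong \text{DHR}(B_2)$.

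I would factor the argument into three steps:
\[
L_{\phi_1}\;\mapsto\; L_{\phi_1^\infty}\;\mapsto\; L_{\phi_2^\infty}\;\mapsto\; L_{\phi_2}.
\]

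\textbf{Step 1: the middle arrow.} This is the general naturality remark made just before the statement. The restriction $\alpha|_{B_1^\infty}$ is a bounded spread isomorphism $B_1^\infty\cong B_2^\infty$. It therefore induces a braided equivalence $\widetilde\alpha\colon \text{DHR}_0(B_1^\infty)\cong \text{DHR}_0(B_2^\infty)$, given by twisting correspondences by $\alpha$. It also induces a unitary $L^2(B_1^\infty,\phi_2^\infty\circ\alpha)\cong L^2(B_2^\infty,\phi_2^\infty)$ intertwining the module categories $\mathcal{M}_{\phi}$. Internal End is preserved by such an equivalence of module categories, and a unitary equivalence of states gives isomorphic GNS objects. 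Together these yield $\widetilde\alpha(L_{\phi_1^\infty})\cong L_{\phi_2^\infty}$.

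\textbf{Step 2: the outer arrows.} I need $L_\phi\cong L_{\phi^\infty}$ under the embedding $\text{DHR}(B)\hookrightarrow \text{DHR}_0(B^\infty)$ from Theorem \ref{StabilizedDHR}. By Corollary \ref{cor:realizesymmetriesonLinf} and Theorem \ref{StabilizedDHR}, this embedding $Z\mapsto Z^\infty=Z\otimes\mathcal{L}^\infty$ is an equivalence onto $\text{DHR}_0(B^\infty)$. Moreover,
\[
L^2(B^\infty,\phi^\infty)\cong L^2(B,\phi)\otimes \mathcal{H}_\xi,
\]
where $\mathcal{H}_\xi$ is the GNS space of the pure product state on $\mathcal{L}^\infty$. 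On this space $\mathcal{L}^\infty$ acts irreducibly, since its commutant is scalars. Hence
\[
\text{Hom}_{B^\infty}\bigl(Z^\infty\boxtimes L^2(B^\infty,\phi^\infty),\,W^\infty\boxtimes L^2(B^\infty,\phi^\infty)\bigr)
\cong \text{Hom}_B\bigl(Z\boxtimes L^2(B,\phi),\,W\boxtimes L^2(B,\phi)\bigr)\otimes (\mathcal{L}^\infty)'
\]
and the factor $(\mathcal{L}^\infty)'$ is $\mathbbm{C}$. This identifies the module categories $\mathcal{M}_\phi\cong\mathcal{M}_{\phi^\infty}$ compatibly with the action. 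Internal End agrees, and so $L_\phi\cong L_{\phi^\infty}$.

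\textbf{Step 3: composition.} Composing the equivalence of Step 2 for $B_1$, the equivalence $\widetilde\alpha$ of Step 1, and the inverse of the Step 2 equivalence for $B_2$ gives the claim. It also shows that whether $L_\phi$ is Lagrangian (topological) is preserved.

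The main obstacle is Step 2, specifically the commutant identity for the W$^*$-completed tensor product. Because the slice/commutation theorem for $(\,\cdot\,)\otimes\mathcal{H}_\xi$ has to hold in the locally normal setting, I would verify it locally: on $B_I\overline\otimes\mathcal{L}^\infty_I$ use $(M\overline\otimes N)'=M'\overline\otimes N'$, then pass to the inductive limit using the localized projective bases, which live in the finite-dimensional factors.

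A secondary point is that the result should not depend on the choice of $\xi$. Any two such product states are related by an on-site unitary in $\mathcal{L}^\infty$ that fixes $B$, so they give unitarily equivalent GNS representations of $B^\infty$.
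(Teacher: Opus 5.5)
Your proposal is correct and is essentially the paper's argument: transport $L_\phi$ along the equivalence of stabilized symmetry inclusions, and identify $L_\phi\cong L_{\phi^\infty}$ under $\text{DHR}(B)\cong\text{DHR}_0(B^\infty)$. Your purity argument supplies the detail the paper leaves implicit, and it needs no local verification, since intertwiners must commute with $1\otimes\pi_\xi(\mathcal{L}^\infty)''=1\otimes\mathcal{B}(\mathcal{H}_\xi)$. The only slip is in your closing remark: product states $\omega_\xi^{\otimes\mathbbm{Z}}$ and $\omega_\eta^{\otimes\mathbbm{Z}}$ with $\xi,\eta$ not parallel are disjoint, not unitarily equivalent, and $\bigotimes_x u_x$ does not lie in $\mathcal{L}^\infty$ but only implements a spread-zero automorphism; independence of $\xi$ nonetheless already follows from your Step 2, which holds for every unit vector $\xi$.
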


\bigskip

\subsection{Classifying physical boundary algebras}

Quasi-local algebras like fusion spin chains, with nontrivial DHR bimodule categories, arise not only in the context of categorical symmetry, but also as operators localized near the physical boundary of a topologically ordered system \cite{jones2025localtopologicalorderboundary, jones2025holographybulkboundarylocaltopological, ChuahHungarKawagoePenneysTombaWallickWei2024BoundaryAlgebras}. This is the essence of topological holography.

Fusion spin chains arise as the boundary algebras of commuting projector models, e.g. Levin-Wen string-net models. However, the specific fusion spin chain is very sensitive to the details of the microscopic model. The following result shows that after stabilization, the bounded spread isomorphism class of boundary algebras depends only on the bulk topological order.

\begin{cor}
    Let $A:=A(\mathcal{C},X)$ and $B:=A(\mathcal{D},Y)$ be fusion spin chains, with $\mathcal{C}$ and $\mathcal{D}$ indecomposable unitary multi-fusion categories, and $X$ and $Y$ strong tensor generators. Then $A^{\infty}$ is bounded spread isomorphic to $B^{\infty}$ if and only if $\mathcal{C}$ is Morita equivalent to $\mathcal{D}$ as multi-fusion categories.
\end{cor}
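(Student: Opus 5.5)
The plan is to prove the two directions separately, using the results already established.

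For the ``if'' direction, suppose $\mathcal{C}$ and $\mathcal{D}$ are Morita equivalent indecomposable unitary multi-fusion categories. Theorem \ref{thm:MoritaEquivimpliesStableEquiv} then applies directly. We build the $2\times 2$ Morita context $\mathcal{E}$ and choose the unitary isomorphisms $\Phi: W_{11}\cong W_{12}\otimes W_{21}$ and $\Psi: W_{21}\otimes W_{12}\cong W_{22}$. Conjugating by the resulting staggered network of $\Phi$'s and $\Psi$'s gives a bounded spread isomorphism $A^{\infty}\cong B^{\infty}$ with spread at most $n$, where $n$ is a common strong-generation length for $X$ and $Y$. Nothing beyond that theorem is needed here.

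For the ``only if'' direction, the strategy is to recover the Drinfeld center as a bounded spread invariant. Let $\alpha: A^{\infty}\to B^{\infty}$ be a bounded spread isomorphism. Since both stabilizations satisfy weak algebraic Haag duality and local normality, $\alpha^{-1}$ also has bounded spread and $\alpha$ is locally normal. Transporting correspondences along $\alpha$ (twisting by $\alpha^{-1}$) sends a basis localized in $I$ to a basis localized in $I^{+R}$. It is compatible with relative tensor products and with the DHR braiding, because the braiding is computed from localized bases moved far apart, and bounded spread preserves ``far apart'' up to $R$. This yields a braided equivalence of W*-tensor categories $\text{DHR}(A^{\infty})\cong \text{DHR}(B^{\infty})$.

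By Theorem \ref{StabilizedDHR}, we have
$$\text{Hilb}(\mathcal{Z}(\mathcal{C}))\cong \text{DHR}(A^{\infty})\cong\text{DHR}(B^{\infty})\cong \text{Hilb}(\mathcal{Z}(\mathcal{D})).$$
To pass back to the fusion categories, I will characterize $\mathcal{Z}(\mathcal{C})$ intrinsically inside $\text{Hilb}(\mathcal{Z}(\mathcal{C}))$ as the full subcategory of objects with finite-dimensional endomorphism algebra, equivalently the dualizable objects $\text{DHR}_{0}$. Any braided equivalence preserves this subcategory, so it restricts to a braided unitary equivalence $\mathcal{Z}(\mathcal{C})\cong\mathcal{Z}(\mathcal{D})$. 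The standard result quoted earlier, that centers are equivalent iff the categories are Morita equivalent, then finishes the argument.

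The main obstacle I expect is checking that the induced equivalence of DHR categories really respects the braiding when $\alpha$ only has bounded spread, since the braiding is defined by choosing localizations on the left and right. My first attempt would be to use the interval-independence of the braiding, as in \cite{jones2024dhr}. Choose localized bases for the two bimodules in intervals separated by more than $2R$; their images under $\alpha$ are still disjointly localized with the same left/right order, so the unitary implementing the braiding is carried to the braiding unitary. Once that is in place, identifying the dualizable part is routine.
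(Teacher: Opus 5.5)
Your proposal is correct and follows essentially the same route as the paper, which gives no separate proof of this corollary but intends exactly this combination. The ``if'' direction is Theorem \ref{thm:MoritaEquivimpliesStableEquiv}, and the ``only if'' direction uses three ingredients: Theorem \ref{StabilizedDHR}, the fact that bounded spread isomorphisms induce braided equivalences of DHR categories, and restriction to the finite (dualizable) part to obtain a braided equivalence $\mathcal{Z}(\mathcal{C})\cong\mathcal{Z}(\mathcal{D})$, which detects Morita equivalence.
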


\bibliographystyle{unsrt}
\bibliography{bibliography}

\end{document}